\newcommand{\beq}{\begin{equation}}
\newcommand{\eeq}{\end{equation}}
\newcommand{\qromalgo}{\hyperref[eq:qrom_superposition_alg_maintxt]{Algorithm 1}}
\newcommand{\ba}{\begin{align}}
\newcommand{\ea}{\end{align}}
\newcommand{\bq}{\bm{q}}
\newcommand{\bp}{\bm{p}}
\newcommand{\bG}{\bm{G}}
\newcommand{\bnu}{\bm{\nu}}
\newcommand{\bR}{\bm{R}}
\newcommand{\bb}{\bm{b}}
\newcommand{\val}{\text{val}}
\newcommand{\ion}{\mathrm{ion}}
\newcommand{\Ps}{\text{Ps}}
\newcommand{\qb}{\text{qrom}}
\newcommand{\dis}{$\text{Li}_2\text{FeSiO}_4$}
\newcommand{\limnfo}{$\text{Li}_{0.75}\text{MnO}_2\text{F}$}
\newcommand{\limnnio}{$\text{Li}_{0.75}\text{[Li}_{0.17}\text{Ni}_{0.25}\text{Mn}_{0.58}\text{]O}_2$}
\newcommand{\limno}{$\text{Li}_{0.5}\text{MnO}_3$}
\crefname{thm}{Theorem}{Theorems}
\crefname{dfn}{Definition}{Definitions}
\crefname{rmk}{Remark}{Remarks}
\crefname{lem}{Lemma}{Lemmas}
\crefname{cor}{Corollary}{Corollaries}
\theoremstyle{plain}  
\newtheorem{thm}{Theorem}[section]
\newtheorem{lem}[thm]{Lemma}
\theoremstyle{definition}
\theoremstyle{remark}
\newtheorem{rmk}{Remark}[section]
\newenvironment{customlem}[1]
  {\innercustomlem}
  {\endinnercustomlem}
\newcommand{\PREP}{\text{PREP}}
\newcommand{\SEL}{\text{SEL}}
\newcommand{\sgn}{\text{sgn}}
\newcommand{\error}{\varepsilon}
\newcommand{\mcB}{\mathcal{B}}
\newcommand{\mcG}{\mathcal{G}}
\newcommand{\mcT}{\mathcal{T}}
\newcommand{\mcX}{\mathcal{X}}
\newcommand{\mbbC}{\mathbb{C}}
\newcommand{\mbbZ}{\mathbb{Z}}
\newcommand{\irchi}[2]{\raisebox{\depth}{$#1\chi$}}
\DeclareRobustCommand{\rchi}{\mathpalette\irchi\relax}
\definecolor{JM}{RGB}{4,116,149}
\begin{document}
\title{Quantum simulation of battery materials using ionic pseudopotentials}
\author{Modjtaba Shokrian Zini}
\email{modjtaba@xanadu.ai}
\affiliation{Xanadu, Toronto, ON, M5G 2C8, Canada}
\author{Alain Delgado}
\affiliation{Xanadu, Toronto, ON, M5G 2C8, Canada}
\author{Roberto dos Reis}
\affiliation{Xanadu, Toronto, ON, M5G 2C8, Canada}
\author{Pablo A. M. Casares}
\affiliation{Xanadu, Toronto, ON, M5G 2C8, Canada}
\author{Jonathan E. Mueller}
\affiliation{Volkswagen AG, Berliner Ring 2, 38440 Wolfsburg, Germany}
\author{Arne-Christian Voigt}
\affiliation{Volkswagen AG, Berliner Ring 2, 38440 Wolfsburg, Germany}
\author{Juan Miguel Arrazola}
\affiliation{Xanadu, Toronto, ON, M5G 2C8, Canada}

\begin{abstract}
Ionic pseudopotentials are widely used in classical simulations of materials to model the effective potential due to the nucleus and the core electrons. Modeling fewer electrons explicitly results in a reduction in the number of plane waves needed to accurately represent the states of a system. In this work, we introduce a quantum algorithm that uses pseudopotentials to reduce the cost of simulating periodic materials on a quantum computer. We use a qubitization-based quantum phase estimation algorithm that employs a first-quantization representation of the Hamiltonian in a plane-wave basis. We address the challenge of incorporating the complexity of pseudopotentials into quantum simulations by developing highly-optimized compilation strategies for the qubitization of the Hamiltonian. This includes a linear combination of unitaries decomposition that leverages the form of separable pseudopotentials. Our strategies make use of quantum read-only memory subroutines as a more efficient alternative to quantum arithmetic. We estimate the computational cost of applying our algorithm to simulating lithium-excess cathode materials for batteries, where more accurate simulations are needed to inform strategies for gaining reversible access to the excess capacity they offer. We estimate the number of qubits and Toffoli gates required to perform sufficiently accurate simulations with our algorithm for three materials: lithium manganese oxide, lithium nickel-manganese oxide, and lithium manganese oxyfluoride. Our optimized compilation strategies result in a pseudopotential-based quantum algorithm with a total Toffoli cost four orders of magnitude lower than the previous state of the art for a fixed target accuracy. 
\end{abstract}
\maketitle

\section{Introduction}

Quantum computing is being actively studied as a potential method to accurately simulate materials and support the development of next-generation lithium-ion batteries~\cite{ho2018promise, rice2021quantum, kim2022fault, clinton2022towards, batterypaper, Rubin2023, sunderhauf2023quantum}. The driving motivation is that quantum algorithms are uniquely positioned to perform highly-accurate simulations without incurring prohibitive computational costs~\cite{mcardle2020quantum}. Nevertheless, considerable progress still needs to occur on both hardware and algorithms to make this promise a reality. The main theoretical challenges are the high cost of implementing quantum algorithms and the difficulty of identifying the applications. 

There has been considerable progress in developing quantum algorithms for simulating the properties of molecules and materials. A variety of different strategies have been proposed, ranging from variational approaches designed for noisy hardware with few qubits~\cite{kandala2017hardware, yoshioka2020variational, anselmetti2021local, clinton2022towards, arrazola2022universal}, algorithms tailored for early fault-tolerant quantum computers~\cite{lin2022heisenberg, wan2022randomized, wang2022quantum, ding2022even}, and variants of quantum phase estimation that require the full capabilities of large-scale fault-tolerant quantum computers~\cite{babbush2016exponentially, reiher2017elucidating, kivlichan2018quantum, su2021fault, motta2021low, lee2021even}. Particular attention has been devoted to improving the efficiency of quantum phase estimation and Hamiltonian simulation algorithms, which has led to an overall cost reduction of several orders of magnitude~\cite{reiher2017elucidating}. This progress has been fueled by innovations such as qubitization~\cite{low2017optimal,low2019qubitization,berry2019qubitization}, Hamiltonian factorization techniques~\cite{kivlichan2018quantum, motta2021low,von2021quantum, lee2021even}, interaction-picture simulations~\cite{low2018hamiltonian,kieferova2019simulating, rajput2022hybridized}, improved Trotter bounds~\cite{childs2021theory}, and first quantization methods~\cite{babbush2018low, su2021fault}.

Simulating bulk materials presents additional challenges beyond those associated with simulating finite molecules~\cite{urban2016computational}. Arguably, among the most pressing ones is the frequent need to use large unit cells (supercells). For example, in the context of lithium-ion batteries, large supercells are needed to predict the most stable phases of cathode materials. Their energies calculated for different values of the lithium-ion concentration determine the voltage profile of the battery cell~\cite{van1998first, seo2016structural, mccoll2022transition}. The size of the supercell is even more critical for the simulation of chemical reactions at the electrode-electrolyte interface~\cite{wang2018review}. This results in systems with many hundreds of electrons requiring a very large number of plane-wave basis functions to achieve high-accuracy simulations~\cite{hine2009supercell}.

In this work, we introduce a quantum algorithm that uses ionic pseudopotentials (PPs) to reduce the cost of using quantum phase estimation to simulate material, mirroring a strategy widely used for density functional theory simulations~\cite{schwerdtfeger2011pseudopotential, lejaeghere2016reproducibility}. Replacing the bare Coulomb potential due to the nucleus and the core electrons with an effective potential leads to a substantial reduction in both the number of electrons and plane waves needed to accurately represent the system. However, this is accomplished at the price of a more complicated pseudopotential operator describing the interaction between the valence electrons and the effective ionic cores. This greatly complicates the implementation of the resulting quantum algorithm and can negate the benefits of reducing the number of electrons and plane waves.

We tackle this challenge by deriving highly-optimized implementation strategies for qubitization-based quantum phase estimation in first quantization. We focus on the Hartwigsen-Goedecker-Hutter (HGH) pseudopotentials~\cite{hartwigsen1998} and develop a tailored linear combination of unitaries decomposition for the pseudopotential term of the Hamiltonian. We also carefully engineer the compilation of the qubitization encoding to reduce the implementation cost. To avoid costly quantum arithmetic, our methods make frequent use of quantum read-only memory (QROM) subroutines. To the best of our knowledge, this is the first example of a quantum algorithm that can incorporate pseudopotentials. Overall, simulations with our algorithm requires orders-of-magnitude fewer plane waves to reach convergence for typical materials than comparable all-electron calculations. This results in circuit-depth reductions of many orders of magnitude for a given target accuracy. 

However, while it is well-understood how one might use quantum computers to simulate key properties of batteries, such as equilibrium voltages~\cite{batterypaper}, it has not yet been established which concrete class of battery simulation would benefit the most from the capabilities of quantum computers. After all, developing better lithium-ion batteries requires solving a multitude of problems, and scientists are already equipped with sophisticated simulation techniques that can be run on powerful supercomputers. It is therefore crucial to identify problems where the limitations of classical methods are most severe, and whose solution would be most impactful to battery development. 

We propose an application of quantum computers for batteries that we contend meet these criteria: simulating lithium-excess cathode materials~\cite{zhang2022pushing}. These materials offer an avenue to dramatically increase the energy density of state-of-the-art cathodes. With theoretical capacities that are roughly twice as high as commercial batteries, they would make the driving range and cost of electric vehicles competitive with internal combustion engines~\cite{li2020high}. Unfortunately, lithium-excess materials suffer from substantial capacity loss even after a single charging cycle~\cite{gent2017coupling}. The rapid degradation of lithium-excess materials responsible for the capacity loss has been attributed to irreversible structural transformations of the material, but the relationship between the proposed redox mechanisms and the observed transformations remains a topic of active discussion as these processes are difficult to probe experimentally~\cite{zhang2022pushing}. To better understand these processes, researchers rely on density functional theory simulations. However, the available density functionals are not accurate enough to single out the dominant mechanisms driving the materials structural changes~\cite{zhang2022pushing}. This makes it difficult to develop solutions for capacity loss in lithium-excess cathode materials using classical computer simulations. 

To assess the potential for quantum computers to provide a solution to this problem, we perform a detailed estimation of the resources required to implement our quantum algorithm as applied to three lithium-excess cathode materials: lithium manganese oxide ($\text{Li}_2\text{MnO}_3$), lithium nickel-manganese oxide ($\text{Li}[\text{Li}_{0.17}\text{Ni}_{0.25}\text{Mn}_{0.58}]\text{O}_2$), and lithium manganese oxyfluoride ($\text{Li}_{0.75}\text{MnO}_2\text{F}$)~\cite{lim2015origins, eum2020voltage, mccoll2022transition}. In each of these cases, the use of ionic pseudopotentials allows a reduction of the number of electrons by a factor of two, and a reduction in the number of plane waves by roughly three orders of magnitude compared to all-electron simulations for a fixed target accuracy. This results in a total Toffoli cost for the algorithm that is about four orders of magnitude lower than the previous state-of-the-art~\cite{su2021fault}. 

The rest of this work is organized as follows. \cref{sec:background} provides background information on first-quantization quantum algorithms for electronic structure, the theory of ionic pseudopotentials, and the basic properties of quantum read-only memories. Our quantum algorithm is described in \cref{sec:algorithm}, outlining the linear combination of unitaries and qubitization subroutines that constitute the main technical contribution of this work. This is complemented with a detailed error analysis in \cref{sec:error_analysis_and_eff_value_lambda} and a calculation of the qubit and gate cost of the full algorithm in \cref{sec:cost}. We then study the application of the algorithm to the simulation of lithium-excess cathode materials in \cref{sec:applications}.

\section{Background}\label{sec:background}

This work is an interdisciplinary effort covering topics across computational chemistry, quantum computing, and lithium-ion batteries. In an effort to make this manuscript self-contained, this section provides background information on key concepts that will be used throughout.  

\subsection{The plane-wave electronic Hamiltonian in first quantization}
\label{sec:hamilt}
The ultimate goal of the quantum algorithm presented here is to solve the electronic structure problem:
\begin{equation}
H \Psi_0(\bm{r}_1, \dots, \bm{r}_\eta) = E_0 \Psi_0(\bm{r}_1, \dots, \bm{r}_\eta),
\label{eq:es}
\end{equation}
where $H$ is the Hamiltonian of $\eta$ interacting electrons, and $\Psi_0$ and $E_0$ are the ground-state wave function and energy, respectively. 

In the Born-Oppenheimer approximation~\cite{born1927quantum}, the electronic Hamiltonian, $H$, is given by
\begin{equation}
H = T + U + V,
\label{eq:AE_hamiltonian}
\end{equation}
where $T$ is the total kinetic energy operator, $U$ is the Coulomb potential due to the nuclei and $V$ is the electron-electron interaction term~\cite{kohanoff2006electronic}. We have listed the symbols used in this paper in \cref{appsec:list_notations}.

Plane-wave functions are a natural basis set to represent the electronic states in periodic materials. They can be used to encode the translational symmetry of crystal structures and allow us to derive closed-form expressions for the Hamiltonian matrix elements. Plane-wave functions are defined as
\begin{equation}
\varphi_p(\bm{r}) = \frac{1}{\sqrt{\Omega}} e^{i \bm{G}_p \cdot \bm{r}},
\label{eq:pw}
\end{equation}
where $\Omega$ is the volume of the material's unit cell and $\bm{G}_p$ is the reciprocal lattice vector
\begin{equation}
 \bm{G}_p = \left[ \sum_{i=1}^3 p_i b_{i_x}, \sum_{i=1}^3 p_i b_{i_y}, \sum_{i=1}^3 p_i b_{i_z} \right],
 \label{eq:pw_g}
 \end{equation}
 where $\bm{b}_1, \bm{b}_2, \bm{b}_3$ are the primitive vectors of the reciprocal lattice~\cite{ashcroft1976solid}. For a total number of plane waves $N$, the integer vectors $\bm{p}$ contained in the set
 \begin{equation}
 \mcG=\left[-\frac{N^{1/3}}{2}+1, \frac{N^{1/3}}{2}-1\right]^3,
\label{eq:pw_p}
\end{equation}
define a uniform grid of points in the reciprocal lattice.

Previous works~\cite{babbush2018low,babbush2019quantum, su2021fault, batterypaper} have argued in favour of using first quantization techniques to accommodate the large number of plane waves that are required for accurate simulations. We follow that strategy in this work. In first quantization, the plane wave representation of the operators $T$, $U$ and $V$ are given by~\cite{su2021fault, batterypaper}:
\begin{align}
T  =& \sum_{i=1}^\eta\sum_{p\in \mathcal{G}}\frac{\|\bm{G}_p\|^2}{2}\ket{\bm{p}}\bra{\bm{p}}_i, \label{eq:T}\\
U  =& -\frac{4\pi}{\Omega}\sum_{i=1}^\eta\sum_{q\in \mathcal{G}} \nonumber \\
& \frac{\sum_{I=1}^L Z_I e^{i\bm{G}_\nu \cdot \bm{R}_I}}{\|\bm{G}_\nu\|^2}\ket{\bm{q-\nu}}\bra{\bm{q}}_i \label{eq:U},\\
V  =& \frac{2\pi}{\Omega}\sum_{i\neq j}^\eta\sum_{p,q\in \mathcal{G}} \nonumber \\
& \sum_{\substack{\nu\in \mathcal{G}_0 \\ (\bm{p} + \bm{\nu}) \in \mathcal{G} \\ (\bm{q}-\bm{\nu}) \in \mathcal{G}  } } \frac{1} {\|\bm{G}_\nu\|^2}\ket{\bm{p+\nu}}\bra{\bm{p}}_i \ket{\bm{q}-\bm{\nu}}\bra{\bm{q}}_j \label{eq:V},
\end{align}
where $Z_I$ and $\bm{R}_I$ are respectively the atomic number and position of the $I$th atomic species, and $L$ denotes the number of atoms in the unit cell. In \cref{eq:U,eq:V},   $\bm{G}_\nu=\bm{G}_q-\bm{G}_p$ and $\bm{G}_\nu = \bm{G}_p - \bm{G}_s = \bm{G}_r - \bm{G}_q$, respectively, and $\mathcal{G}_0 = \mathcal{G} \setminus (0, 0, 0)$. The qubit representation of the $N$ plane waves uses 
\begin{align}
n_p = \lceil \log(N^{1/3}+1) \rceil
\end{align}
qubits to encode each component of the plane wave vector. Thus a total of $3\eta n_p$ qubits are required for the system register.

\subsection{Ionic pseudopotentials}
\label{sec:pp}
Performing accurate {\it all-electron} simulations of supercell structural models of battery materials is hampered by the huge number of plane waves that are needed to represent the core states and the valence states near the nucleus as sketched in~\cref{fig:pp}. Different strategies such as the augmented and orthogonalized plane wave methods~\cite{slater1937wave, wills2010full, herring1940new} have been proposed to overcome this limitation. However, a key step to retain the advantages of plane waves for materials simulations was taken by Phillips, Kleinman and Antoncik (PKA)~\cite{phillips1959new, antonvcik1959approximate}. Crucially, it follows from the PKA transformation that the nuclear potential and the core electrons can be replaced by an effective potential known as a {\it pseudopotential} that produces the same energies of the valence states. Furthermore, the associated pseudo wave functions superimpose the true valence wave functions outside the core region (see~\cref{fig:pp}), and can be accurately represented using a significantly smaller number of plane waves. This is an excellent approximation since core electrons populate deep energy states that do not influence neither the chemical bonding nor the redox processes in battery materials.

\begin{figure}[t]
\centering
\includegraphics[width=0.8 \columnwidth]{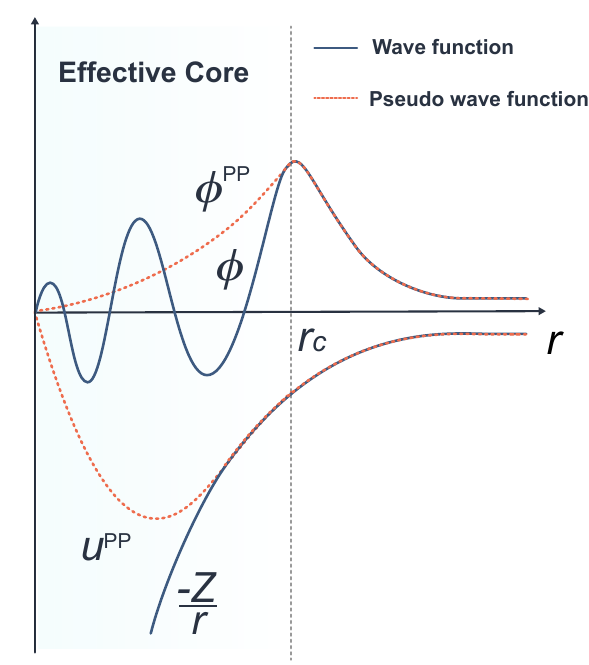}
\caption{Representation of the oscillating character of the all-electron wave function, $\phi$, versus the smooth behaviour of the pseudo wave function ,$\phi^\text{PP}$, in the core region $r<r_c$. The pseudo wave function is a solution of the pseudopotential, $u^\text{PP}$, and superimposes the valence wave function outside the core region. The two lower curves sketch the bare nuclear potential $-Z/r$ (solid line) and the pseudopotential $u^\text{PP}$ (dotted line).}
\label{fig:pp}
\end{figure}

Here, we focus on the use of \emph{ionic pseudopotentials} (PPs) which have proven to be highly accurate for simulating materials~\cite{lejaeghere2016reproducibility, garrity2014pseudopotentials, bennett2012discovery}. Different pseudization schemes, including the Hartwigsen, Goedecker and Hutter (HGH) PPs adopted in this work and described in more details in Sec.~\ref{ssec:hgh}, have been extensively benchmarked for a large set of elemental crystals~\cite{lejaeghere2016reproducibility}. Modern PPs exhibit small deviations (1-2.2 meV/atom) of the calculated equation of state with respect to analogous all-electron results. Furthermore, the transferability of ionic PPs have also been shown to reproduce the lattice constants of more complicated materials e.g., transition-metal oxides, with maximum root mean squared errors of the order of 2$\%$~\cite{garrity2014pseudopotentials, bennett2012discovery}. Interestingly, the authors in Ref.~\cite{garrity2014pseudopotentials} have noted that the differences between the PPs and the all-electron calculations are often comparable with the numerical uncertainties in the all-electron results themselves.

Ionic PPs are typically generated from all-electron atomic calculations performed using density functional theory. Due to the spherical symmetry of the Coulomb potential, the pseudo wave functions, $\phi_{lm}(\bm{r})=\phi_l(r)Y_{lm}(\phi, \theta)$, are eigenstates of the angular momentum operator and characterized by the quantum numbers $(l, m)$. Starting from a pseudo wave function ansatz, $\phi_l(r)$, the effective model potential, $u_l(r)$, is found by inverting the radial Schr\"odinger equation~\cite{troullier1990straightforward, hamann1979norm, kerker1980non, bachelet1982pseudopotentials}. Norm-conserving  PPs~\cite{troullier1991efficient} are obtained from pseudo wave functions enclosing the same charge as the true wave function in the core region $r<r_c$, where $r_c$ denotes a core radius around the nucleus. Finally, the ionic PP is obtained by removing the screening effects due to the valence electrons in the atom. This is referred to as ``unscreening'' the pseudopotential, which makes the PP transferable to different atomic environments~\cite{louie1982nonlinear, engel2001role}.  

The general form of the pseudopotential operator is \cite{kohanoff2006electronic, martin2020electronic}
\begin{equation}
u^\text{PP} = u^\text{loc} + u^\text{SL},
\label{eq:pp_op_2}
\end{equation}
where $u^\text{loc}:=u^\text{loc}(r)$ is a local potential, i.e., obtained by evaluating a simple function at point $r$. It represents a screened Coulomb potential which joins smoothly the all-electron atomic potential at some radius $r<r_c$. The second operator in~\cref{eq:pp_op_2} is defined as
\begin{equation}
u^\text{SL} = \sum_{l=0}^{l_\text{max}} \sum_{m=-l}^l \Delta u_l(r)\ket{lm} \bra{lm},
\label{eq:sl-pot}
\end{equation}
where $l_\text{max}$ is the maximum angular momentum of the core electrons and $\langle \bm{r} \vert lm \rangle = Y_{lm}(\phi, \theta)$ denotes the spherical harmonics. The $l$-dependent potential, $\Delta u_l(r) = u_l(r) - u^\text{loc}(r)$, is a short-ranged potential vanishing beyond the core radius. Moreover, in the asymptotic limit, $r \rightarrow \infty$, the full pseudopotential, $u^\text{PP}$, behaves as $-Z_\mathrm{ion}/r$, where $Z_\text{ion}=Z-\eta_\text{core}$ is the effective charge of the ionic core and $\eta_\text{core}$ is the number of core electrons.

The operator $u^\text{SL}$ in~\cref{eq:pp_op_2} has a semi-local character since its action on a given basis function, $\braket{ \bm{r}|\bm{q}} = \varphi_q(r, \phi, \theta)$, is local in the radial coordinates but involves an integral over the angular variables $(\phi, \theta)$~\cite{martin2020electronic}. Computing its plane wave matrix elements $u_{pq}^\text{SL}$ requires evaluating the radial integral \cite{kohanoff2006electronic},
\begin{equation}
I_{pq} = \int_0^\infty dr~r^2 j_l(G_pr) \Delta u_l(r) j_l(G_qr),
\label{eq:sl_me}
\end{equation}
where $j_l$ denotes the spherical Bessel functions and $G_p=\|\bm{G}_p \|$. 
The number of such integrals scales as $\mathcal{O}(LN^2)$, where $L$ is the number of atoms in the material's unit cell and $N$ is the total number of plane waves. Typically, $N$ can be very large in actual simulations and computing these matrix elements becomes computationally expensive~\cite{kohanoff2006electronic}.

To reduce this computational cost, Kleinman and Bylander (KB)~\cite{kleinman1982efficacious} proposed the separable pseudopotentials by expressing the radial operator $\Delta u_l(r)$ in a form that is separable in the radial variables. The KB construction was further modified by Bl\"och~\cite{blochl1990} to construct the non-local (NL) pseudopotential
\begin{align}
u^\mathrm{PP} & = u^\text{loc} + u^\text{NL},
\label{eq:kbb}
\end{align}
with the operator $u^\text{NL}$ given by 
\begin{equation}
u^\text{NL} = \sum_{l=0}^{l_\text{max}}\sum_{m=-l}^l \frac{\ket{\xi_{lm}} \bra{\xi_{lm}} }{\bra{\xi_{lm}} \phi_{lm} \rangle}.
\label{eq:nl}
\end{equation}
In~\cref{eq:nl} $\langle \bm{r} \vert \phi_{lm}\rangle=\phi_{l}(r)Y_{lm}(\phi, \theta)$ is a pseudo wave function solution of the model potential $u_l(r)$, and $\langle \bm{r} \vert \xi_{lm} \rangle = \xi_l(r) Y_{lm}(\phi, \theta)$ are projectors defined as
\begin{equation}
\xi_{lm}(\bm{r}) = \left\{ \varepsilon_l-\left[ -\frac{1}{2}\nabla^2 + u^\mathrm{loc}(r) \right] \right\} \phi_{lm}(\bm{r}),
\label{eq:nl_1}
\end{equation}
where $\varepsilon_l$ is the all-electron reference energy associated with the pseudo wave function $\phi_{lm}(\bm{r})$. Note from ~\cref{eq:nl} that computing the matrix elements of the separable potential $u_{pq}^\text{NL}$, as opposed to the matrix elements of a semi-local operator (\cref{eq:sl_me}), requires only to evaluate the product of the projection operations
\begin{equation}
\bra{\xi_{lm}} \varphi_q \rangle = \int d\bm{r}~\xi_{lm}^*(\bm{r}) \varphi_q(\bm{r}),
\label{eq:pp_kb_2}
\end{equation}
which scales linearly with the number of plane waves.

The non-local operator in~\cref{eq:nl} can be generalized to use two or more projectors per angular momentum quantum numbers \cite{vanderbilt1990},
\begin{equation}
u^\mathrm{NL} = \sum_{l=0}^{l_\text{max}}\sum_{m=-l}^l \left[ \sum_{i,j} B_{ij}\ket{\beta_i} \bra{\beta_j} \right]_{lm},
\label{eq:gnl}
\end{equation}
where the matrix $B_{ij}=\bra{\phi_i} \xi_j \rangle$ is used to define the projectors $\ket{\beta_i} = \sum_j (B^{-1})_{ji} \ket{\xi_j}$. Pseudopotential operators of this form are called \emph{generalized separable pseudopotentials}, and are routinely used in classical electronic structure calculations of materials \cite{lejaeghere2016reproducibility}. A further generalization of the non-local operator in~\cref{eq:gnl} is obtained by relaxing the norm conservation condition. This results in the so-called ultrasoft pseudopotentials~\cite{vanderbilt1990}, which we do not consider here.

\subsection{Qubitization-based quantum phase estimation}

The quantum algorithm presented in this work is a qubitization-based quantum phase estimation (QPE) algorithm for computing ground-state energies of periodic materials. In contrast to other techniques, qubitization does not introduce any further approximations in implementing the unitary that encodes the eigenvalues of the Hamiltonian. This makes it particularly appealing when the accuracy of the simulation is paramount. We briefly review the qubitization method and refer to~Refs.~\cite{low2019qubitization,batterypaper,su2021fault} for further details.

To qubitize a Hamiltonian, we first write it as a linear combinations of unitaries (LCU), 
\begin{align} \label{eq:simple_LCU}
H = \sum_\ell \alpha_\ell H_\ell,
\end{align}
where each $H_\ell$ is a unitary and the coefficients $\alpha_\ell>0$ are referred to as (unnormalized) \textit{selection probabilities}. The specific choice of an LCU has a large impact on the cost of qubitization, especially through the parameter $\lambda= \sum_\ell \alpha_\ell$~\cite{loaiza2022reducing}. 

We then define the qubitization operator
\begin{align}\label{eq:qubitization_Q}
Q=(2\ket{\bm{0}}\bra{\bm{0}}-\mathbbm{1})\text{PREP}_H^{\dagger}\text{SEL}_H\text{PREP}_H,
\end{align}
with the prepare and select unitaries given by
\begin{align}\label{eq:prep_H_formula}
\text{PREP}_H\ket{\bm{0}}\ket{\psi} &= \left(\sum_\ell\sqrt{\frac{\alpha_\ell}{\lambda}}\ket{\ell}\right)\ket{\psi}, \\
\label{eq:sel_H_formula}
\text{SEL}_H &= \sum_\ell\ket{\ell}\bra{\ell}\otimes H_{\ell},
\end{align}
Note that the reflection in $Q$ and $\PREP$ act on an auxiliary register $\ket{\bm{0}}$. The latter prepares the so-called $\PREP$ state $\sum_\ell\sqrt{\frac{\alpha_\ell}{\lambda}}\ket{\ell}$ with amplitudes given by the selection probabilities. It does not alter the system register $\ket{\psi}$, which is acted upon by $\SEL$ that applies the unitaries $H_\ell$. These subroutines satisfy the block-encoding equation
\begin{align}\label{eq:block_encoding_H}
    \bra{\bm{0}}\PREP_H^\dagger \ \SEL_H\ \PREP_H \ket{\bm{0}} = \frac{H}{\lambda}.
\end{align}
The operator $Q$ is block-diagonal, with each block $Q_k$ corresponding to a two-dimensional subspace $W_k$ spanned by an orthonormal basis $\text{span}\{\ket{\bm{0},\Phi_k}, \ket{\rho_k}\}$, where $\ket{\Phi_k}$ is an eigenstate of $H$ with eigenvalue $E_k$ and $\ket{\rho_k}$ is orthogonal to $\ket{\bm{0},\Phi_k}$. The term qubitization refers to these effective qubit subspaces. In its eigenbasis, $Q_{k}$ can be written as
\begin{align} \label{eq:qwalk_eigenvalues}
Q_{k} &= e^{i\theta_k}\ket{\theta_k}\bra{\theta_k}+e^{-i\theta_k}\ket{-\theta_k}\bra{-\theta_k},
\end{align}
where $\ket{\pm \theta_k}$ are the eigenstates, with $\theta_k=\arccos(E_k/\lambda)$. Thus, by applying QPE on $Q$ with an initial state $\ket{\bm{0}}\ket{\Psi_0}$, where $\ket{\bm{0}}\ket{\Psi_0}=\alpha\ket{\theta_0}+\beta\ket{-\theta_0}$ for some $\alpha,\beta$, we always recover the ground state and its energy since $\cos(\pm\theta_0)=E_0/\lambda$.

In the QPE algorithm, the unitary $Q$ is controlled on the state of auxiliary qubits, which increases the Toffoli cost of the algorithm. To avoid this, as shown in Ref.~\cite{babbush2018encoding}, one can use a reflection to have the inverse unitary applied when the auxiliary qubit is in state $\ket{0}$. The only requirement for this modification to work is for $\SEL$ to be self-inverse, which our algorithm satisfies.

When acting on an initial state $\ket{\psi}$, the QPE algorithm outputs an estimate of the ground-state energy $E_0$ with probability $p_0=|\braket{\psi_0|\psi}|^2$, where $\ket{\psi_0}$ is the ground state. The QPE routine needs to be repeated $O(1/p_0)$ times on average to retrieve $E_0$ with high probability. It is thus necessary to prepare an initial state with a sufficiently large overlap with the ground state. This is the initial state preparation problem: a crucial and daunting challenge for quantum algorithms. Strategies for preparing initial states have been studied in Refs.~\cite{sugisaki2016quantum, tubman2018postponing, von2021quantum, lee2022there}, and Ref.~\cite{batterypaper} described a method for preparing a Hartree-Fock state for periodic materials in first quantization. While we acknowledge the importance of developing better methods for initial state preparation, in this work we focus on the problem of reducing the cost of QPE.

\subsection{Quantum read-only memory (QROM)}\label{ssec:background_qrom}

Our main use of QROM in the quantum algorithm is to prepare arbitrary states of few qubits. There are a variety of methods in the literature for this task; we refer to~\cite{mcardle2022quantum} for an overview of such techniques. A QROM, or a data-lookup oracle, is an operator $O$ that reads a register $\ket{x}$ and outputs a corresponding bitstring $\ket{\theta_x}$ into an auxiliary register as: $O\ket{x}\ket{\bm{0}} = \ket{x}\ket{\theta_x}$~\cite{babbush2018encoding}. The output bitstring $\theta_x$ is precomputed and available in a data-lookup table. The non-Clifford gate cost of QROM is exponential in the size of $x$ and polynomial in the size of $\theta_x$. Parallelization can decrease the depth at the expense of using more qubits~\cite{babbush2018encoding,low2018trading}.\\

We follow~\cite{low2018trading} in our description of state preparation using QROM. Consider the target state $\ket{\psi}=\sum_x a_x \ket{x}$, where we assume $a_x\geq 0$ for simplicity. For any bit-string $y$ of length $w\le n$, where $n$ is the total number of qubits, denote by $p_y$ the probability that the first $w$ qubits of $\ket{\psi}$ are in state $\ket{y}$. Define also $ \cos(\theta_y)= \sqrt{p_{y0}/p_y}$, where $y0$ is the bitstring $y$ followed by $0$, and the QROM oracles $O_w \ket{y}\ket{\bm{0}} = \ket{y}\ket{\theta_y}$, outputting the classically precomputed and tabulated $\theta_y$ up to $b$ bits of precision for each $w$. The QROM oracles can be used iteratively to prepare any state $\ket{\psi}=\sum_{x}a_x \ket{x}$ as follows. By induction, for $1\le w\le n$, do:
\begin{align}\label{eq:qrom_superposition_alg_maintxt}
\begin{split}
&\ket{\psi_w} = \sum_{y\in\{0,1\}^w}\sqrt{p_y}\ket{y}\ket{0}\ket{\bm{0}}\\
&\underset{O_w}{\mapsto} \sum_{y\in\{0,1\}^w}\sqrt{p_y}\ket{y}\ket{0}\ket{\theta_y}
\\
&\underset{R_{w}}{\mapsto} \sum_{y\in\{0,1\}^w}\sqrt{p_y}\ket{y}\left(\cos(\theta_y)\ket{0}+\sin(\theta_y)\ket{1}\right)\ket{\theta_y}
\\
&\underset{O^\dag_w}{\mapsto} \sum_{y\in\{0,1\}^{w+1}}\sqrt{p_y}\ket{y}\ket{\bm{0}}=\ket{\psi_{w+1}}\ket{\bm{0}},
\end{split}
\end{align}
where $R_{w}$ is a one-qubit rotation on the $(w+1)$-th qubit controlled on the state of the auxiliary system, and as a slight abuse of notation we use $\ket{\bm{0}}$ to denote all-zero states of different number of qubits. While we assumed non-negative amplitudes $a_x \ge 0$, the complex phases of $\ket{\psi}$ can be implemented by storing $\phi_x=\text{arg}[a_x/|a_x|]$ and applying it in the final iteration. In our applications, $\phi_x =\pm 1$, meaning the phase is always real and the last step is a simple $Z$ gate. For future reference throughout the text, this entire state preparation procedure is called \qromalgo.

The precision of the rotation angle is the main source of error, and we have the following result for the total error, proved in \cref{app:QROM_superposition}:
\begin{customlem}{\ref{lem:qrom_superposition_error}}\label{lem:qrom_superposition_error_repetition}
The error in the state preparation \qromalgo~is $2^{-b}\pi n$.
\end{customlem}

There are three different types of QROM oracles that can be used in \qromalgo~as proposed in~\cite{low2018trading}. Two of these, called \textsc{Select} and \textsc{SelSwapDirty}, are of interest to us. The latter is our terminology for the oracle described in~\cite[Fig. 1d]{low2018trading}, also called QROAM by \cite{berry2019qubitization}. $\textsc{Select}$ is mostly used when $n$ is small while $b$ is large. One important property of \textsc{SelSwapDirty} is the space-depth trade-off that it offers. With the help of dirty qubits, i.e., qubits that do not need to be initialized to any specific state, the depth of the circuit can be lowered by parallelizing the controlled-SWAP gates used in the \textsc{Swap} subroutine of \textsc{SelSwapDirty}. Furthermore, the dirty qubits are returned to their initial state, therefore any qubit not undergoing simultaneous computation in the quantum circuit can be borrowed as a dirty qubit.  We briefly overview the cost of these routines in \cref{table:qroms_cost}, and we refer the reader to \cref{app:QROM_all_apps} for more details. \\
{
	\renewcommand{\arraystretch}{1.1}
	\begin{table}
 \centering
	\resizebox{\columnwidth}{!}{
		\begin{tabular}{|Sc|Sc|Sc|Sc|}
  \hline
			\shortstack{Operation\\$\;$}&\shortstack{Additional qubits\\$\;$}&\shortstack{Toffoli Depth\\$\;$}&\shortstack{Toffoli count\\ $\le\cdot+\mathcal{O}(\log\cdot)$}\\
			\hline
			\textsc{Select}&$b+\lceil\log{N}\rceil$&$N$&$N$
			\\\hline
			{\textsc{SelSwapDirty}}&$b(\beta+1) +\lceil\log{N}\rceil$&$2\lceil\frac{N}{\beta}\rceil+3\lceil\log{\beta\rceil}$&$2\lceil\frac{N}{\beta}\rceil+3b\beta$\\\hline
		\end{tabular}}
		\caption{(\cite[Table II]{low2018trading}) Gate and qubit cost of \textsc{Select} and \textsc{SelSwapDirty} QROMs. The space-depth trade-off is determined by $\beta\in[1,N]$. Note that $b\beta$ qubits of the~\cref{fig:qroms_circuit}b implementation are dirty, while $b+\lceil \log(N) \rceil $ are clean qubits.}
		\label{table:qroms_cost_main_text}
	\end{table}
}

\section{Quantum algorithm}\label{sec:algorithm}
We now describe the pseudopotential-based quantum phase estimation algorithm. We begin with the construction of the plane-wave representation of the pseudopotential operators describing the ionic cores in the material, where we derive closed-form expressions for the matrix elements of the local and non-local potentials (see \cref{sec:pp_me}). We then proceed to describe the LCU decomposition of the Hamiltonian and the implementation of the qubitization operator. These procedures exploit the structure of the pseudopotential to optimize the qubitization of the modified operator $U$, as discussed in \cref{ssec:lcu}.

\subsection{Plane wave matrix elements of the pseudopotential operator}
\label{ssec:hgh}
We focus on the description of the separable Gaussian pseudopotentials proposed by Hartwigsen, Goedecker and Hutter (HGH)~\cite{hartwigsen1998}. The HGH pseudopotentials are relatively easy to define and have proven to be transferable and accurate. 

The HGH local potential is defined as
\begin{equation}
u^\mathrm{loc}(r) = -\frac{Z_\mathrm{ion}}{r}\mathrm{erf(\alpha r)} + \sum_{i=1}^4 C_i (\sqrt{2}\alpha r)^{2i-2}~e^{-(\alpha r)^2}
\label{eq:hgh_loc}
\end{equation}
where $\mathrm{erf}(\alpha r)$ is the error function with $\alpha=\frac{1}{\sqrt{2}r_\mathrm{loc}}$, $r_\mathrm{loc}$ is a local radius parameter giving the charge distribution in the core, and the $C_i$ are tabulated coefficients~\cite{hartwigsen1998}. On the other hand, the non-local part $u^\text{NL}$ is given by a sum of separable terms~\cite{hartwigsen1998}
\begin{align}
u^\text{NL} = & \sum_{lm} \sum_{i, j = 1}^3 \vert \beta_i^{(lm)} \rangle B_{ij}^{(l)} \langle \beta_j^{(lm)} \vert,
\label{eq:hgh_nl}
\end{align}
where $\langle \bm{r} \vert \beta_i^{(lm)} \rangle = \beta_i^{(l)}(r)Y_{lm}(\phi, \theta)$. The radial functions $\beta_i^{(l)}(r)$ are Gaussian-type projectors given by
\begin{equation}
\beta_i^{(l)}(r) = A_l^i r^{l+2i-2} \mathrm{exp}\left[ -\frac{1}{2}\left(\frac{r}{r_l}\right)^2 \right],
\label{eq:hgh_gaussians}
\end{equation}
where the $A_l^i$ is a constant defined in \cref{ssec:non_local_me} and the radii $r_l$ give the range of the $l$-dependent projectors. The optimized values of the coefficients $B_{ij}^{(l)}$ and the radii $r_l$ are reported in Ref.~\cite{hartwigsen1998}.\\

The plane-wave matrix elements of the local and non-local components of the HGH pseudopotentials are derived in \cref{sec:pp_me}. For an ion located at the coordinates $\bm{R}$, the matrix elements of the local potential are given by
\begin{align}
&u_{pq}^\mathrm{loc}(\bm{R}) =  \frac{4\pi}{\Omega} e^{i \bm{G}_\nu \cdot \bm{R}}~e^{-(G_\nu r_\text{loc})^2/2} \bigg\{-\frac{Z_{\mathrm{ion}}}{G_\nu^2} \nonumber\\
&+ \frac{\sqrt{\pi}}{2} \Big[ C_1 r_\text{loc}^3 + C_2(3r_\text{loc}^3 - 5r_\text{loc}^5 G_\nu^2) \nonumber\\
&+ C_3(15r_\text{loc}^3 - 10r_\text{loc}^5 G_\nu^2 + r_\text{loc}^7 G_\nu^4) \nonumber \\
&+ C_4(105r_\text{loc}^3 - 105r_\text{loc}^5 G_\nu^2 + 21 r_\text{loc}^7 G_\nu^4 - r_\text{loc}^9 G_\nu^6) \Big] \bigg\}
\label{eq:uloc_me}
\end{align}
where $\bm{G}_\nu = \bm{G}_q-\bm{G}_p$.\\

The matrix elements of the non-local operator in~\cref{eq:hgh_nl} are derived in \cref{ssec:non_local_me}. Typically, electronic structure calculations are performed using one or two projectors. For the sake of simplicity, we consider the case of one projector per angular momentum $(i=j=1)$. In this case, the plane wave matrix elements for the HGH non-local potential are given by
\begin{align}
u^\mathrm{NL}_{pq}(\bm{R}) = &\frac{4\pi}{\Omega}  e^{i \bm{G}_\nu \cdot \bm{R}} \Bigg\{ 
4B_0 r_0^3~e^{-(G_p^2+G_q^2)r_0^2/2} \nonumber\\
& + \frac{16B_1 r_1^5}{3}~(\bm{G}_p \cdot \bm{G}_q)~e^{-(G_p^2+G_q^2)r_1^2/2}
\nonumber \\
& + \left[ \frac{32B_2 r_2^7}{15}~(\bm{G}_p \cdot \bm{G}_q)^2 \right. \nonumber\\ 
& \left. + \frac{32B_2 r_2^7}{45}~(G_p G_q)^2\right]e^{-(G_p^2+G_q^2)r_2^2/2} \Bigg\},
\label{eq:unl_me_1}
\end{align}
where the coefficient $B_l:=B_{11}^{(l)}$. To see roughly how the projector expression in \cref{eq:hgh_nl} could lead to the expression above, we show one example of how the equation decomposes to a sum of projections. Consider the first term in \cref{eq:unl_me_1} $4B_0 r_0^3~e^{-(G_p^2+G_q^2)r_0^2/2}\ket{\bp}\bra{\bq}$. When summed over $\bp,\bq$, this can be expressed as a scalar multiple of the projection onto the Gaussian superposition state
\begin{align}\label{eq:psi_I_o_example}
    \sum_{\bp \in \mcG }e^{-G_p^2r_0^2/2}\ket{\bp}.
\end{align}
A similar rewriting applies to other terms in \cref{eq:unl_me_1}, and involves projection onto (derivatives) of Gaussian superpositions. See \cref{app:U_NL_LCU} for more details.\\

\subsection{The pseudopotential Hamiltonian}
\label{sec:pp_u}
By including the ionic pseudopotentials, the all-electron problem defined by the Hamiltonian in~\cref{eq:AE_hamiltonian} transforms into a valence-only electron problem where $\eta_\text{val}=\sum_{I=1}^L Z_{\text{ion}_I}$. This results in a substantial reduction of the total number of electrons, plane waves $N$, and the overall norm of the Hamiltonian describing the valence electrons.

In this approach, the expressions for the operators $T$ and $V$ (\cref{eq:T,eq:V}) remain formally identical. However, the operator $U$ accounting for the electron-nuclei interactions in the all-electron case needs to be defined using the effective pseudopotentials describing the ionic cores:
\begin{align}
U = \sum_{i=1}^\eta \sum_{I=1}^L \frac{-Z_I}{\vert \bm{r}_i - \bm{R}_I \vert } \rightarrow U = \sum_{i=1}^{\eta_\text{val}} \sum_{I=1}^L u^\text{PP}(\bm{r}_i, \bm{R}_I).
\label{eq:u_pp}
\end{align}
Using~\cref{eq:kbb}, the plane-wave representation of the operator $U$ becomes
\begin{align}
& U = U_\text{loc}+U_\text{NL}, \\
& U_\text{loc} := \sum_{i=1}^{\eta_\text{val}} \sum_{p,q=1}^N \sum_{I=1}^L u_{pq}^\text{loc}(\bm{R}_I) \vert \bm{p} \rangle \langle \bm{q} \vert_i, \\
& U_\text{NL} := \sum_{i=1}^{\eta_\text{val}} \sum_{p,q=1}^N \sum_{I=1}^Lu_{pq}^\text{NL}(\bm{R}_I) \vert \bm{p} \rangle \langle \bm{q} \vert_i,
\end{align}
where $u_{pq}^\text{loc}(\bm{R}_I)$ and $u_{pq}^\text{NL}(\bm{R}_I)$ are the matrix elements given by \cref{eq:uloc_me,eq:unl_me_1}, respectively. This yields the {\it pseudopotential} Hamiltonian
\begin{align}\label{eq:pseudopotential_hamiltonian}
 H = T+ U_{loc}+U_{NL} + V. 
 \end{align}

\subsection{Linear Combination of Unitaries}\label{ssec:lcu}

We study the structure of the matrix entries of each of the four operators in the pseudopotential Hamiltonian when deriving the LCU for $H$ as in \cref{eq:pseudopotential_hamiltonian}. There are two main differences from the setting in~\cite{su2021fault}: (i) We consider the general case of non-cubic lattices, that is, the primitive vectors $\bm{a_i}$ have different lengths and are not orthogonal. Hence, the reciprocal lattice vector
\begin{align}
\bG_p &= \sum_{\omega=1}^3 p_\omega \bb_\omega = \nonumber\\
& \frac{2\pi}{\Omega}\big (p_1(\bm{a}_2 \times \bm{a}_3)+p_2(\bm{a}_3 \times \bm{a}_1)+ p_3(\bm{a}_1 \times \bm{a}_2)\big)\label{eq:recip_vectors},
\end{align}
can no longer be substituted by $\bG_p = 2\pi \Omega^{-1/3}\bp$, and (ii) While the local term $U_{loc}$ resembles the operator $U$ in the all-electron setting, the non-local term $U_{NL}$ has a radically different structure. Dealing with the complexity of this non-local term is one of the biggest challenges we face in deriving an efficient decomposition.

We briefly show how the LCUs are derived for $T,V,$ and $U_{loc}$. The case of $U_{NL}$ is explained at a high level, with details appearing in \cref{app:U_NL_LCU}. Hereafter, we make use of the following conventions. We use $\eta:= \eta_{\val}$ to denote the number of valence electrons in the pseudopotential Hamiltonian. To make our future discussions more precise, we define the compound index $I=(t, t_j)$ where $t$ indicates the atomic species of the $I$-th ionic core, and $t_j$ enumerate the cores of that type (reading for example as the `third oxygen atom'). To make the reference to nuclei $I$ more explicit, we may sometimes use $t_I$ and $t_{I,j}$.\\

\textit{LCU for $T$.} Recall that $T$ is a diagonal operator with entries $\frac{G_p^2}{2}$. We rewrite 
\begin{equation}
    G_p^2 = \bG_p \cdot \bG_p = \sum_{\omega,\omega'} \langle \bb_\omega, \bb_{\omega'}\rangle p_\omega p_{\omega'},
\end{equation} 
and take the binary expansion 
\begin{align}
p_\omega &= (-1)^{p_{\omega,n_p-1}} \sum_r 2^r p_{\omega,r}, \nonumber \\  p_{\omega'} &=(-1)^{p_{\omega',n_p-1}}\sum_s 2^s  p_{\omega',s},
\end{align}
where $p_{\omega',s}$ is the $s$-th bit of $p_{\omega}$. Note the signed integer representation, where the $(n_p-1)$-th bit determines the sign of the coordinate. One can apply the same trick as in the orthonormal lattice case \cite{su2021fault}, rewriting $p_{\omega,r}p_{\omega',s}= (1+(-1)^{p_{\omega,r}p_{\omega',s}+1})/2$. By doing so, we reach the following decomposition:
\begin{align}\label{eq:lcu_T}
\begin{split}
    &T = \sum_{j=1}^\eta \sum_{\bp \in \mcG} \frac{G_p^2}{2}\ket{\bp}_j \bra{\bp}_j =\\
    &\frac{1}{4} \sum_{j=1}^\eta \sum_{1 \le \omega,\omega' \le 3} |\langle \bb_\omega, \bb_{\omega'} \rangle| \sum_{1\le r,s \le n_p-2} 2^{r+s} \sum_{b \in \{0,1\}}\sum_{\bp \in \mcG} \\ &  (-1)^{\sgn(\langle \bb_\omega, \bb_{\omega'} \rangle p_\omega p_{\omega'})}(-1)^{b(p_{\omega,r} p_{\omega',s} + 1)} \ket{\bp}_j \bra{\bp}_j.
\end{split}
\end{align}
 where $\sgn(x) = 0$ if $x\ge 0$ and is equal to $1$ otherwise. Notice the appearance of the inner product in $(-1)^{\sgn(\langle \bb_\omega, \bb_{\omega'} \rangle)}$, which yields one for an orthogonal lattice. The unitaries are in the last line of the equation. The LCU in \cref{eq:lcu_T} is also used for the kinetic term of the all-electron calculations as the lattices in our case studies are no longer orthonormal.
\begin{rmk}\label{rmk:lcu_t_b_getting_dropped}
When the lattice is orthogonal, the sum corresponding to $b=0$ is a multiple of the identity, which can be omitted in the qubitization of $H$. This shifting is used in our case studies with orthogonal lattices, giving an improvement over the LCU proposed in Ref.~\cite{su2021fault}, as it decreases the value of $\lambda$ and thus the simulation cost.
\end{rmk}

\textit{LCU for $V$.} The unitaries used here are signed translations of the lattice by the momentum vector $\bnu$. This strategy applies, given that the matrix entry $V_{(\bp'_i,\bq'_j),(\bp_i,\bq_j)}$ of $V$ is nonzero only when $ \bp'_i-\bp_i = \bnu = \bq_j-\bq'_j \in \mcG_0$, and its value depends only on $\bnu$ (see \cite[App. E.2]{batterypaper}). This term identically appears in the all-electron setting. Furthermore, in that case, as the term $\frac{1}{G_\nu^2}$ is shared by $U$ and $V$ selection probabilities, the LCU for $U$ is similarly derived. See \cref{app:Ineq_test} for more details. The LCU is given by

\begin{align}\label{eq:lcu_V}
&V = \sum_{\bm{\nu}\in \mathcal{G}_0}\frac{\pi}{\Omega G_\nu^2}\sum_{i\neq j=1}^\eta\sum_{b\in\{0,1\}}\sum_{\bp,\bq \in \mcG} \nonumber  \\
&(-1)^{b([\bp+\bnu \notin \mathcal{G}]\vee[\bq-\bnu \notin \mathcal{G}])}\ket{\bp+\bnu}_i\bra{\bp}_i\ket{\bq-\bnu }_j\bra{\bq}_j,
\end{align}
where as before the unitaries are denoted in the last line of the equation.\\

\textit{LCU for $U_{loc}$.} The LCU for the local term is derived similarly to $V$, given its symmetry with respect to any translation of the lattice by $\bnu \in \mcG_0$. However, the selection probabilities are different, as $U_{loc}$ entries include an additional exponentially decaying term in the numerator, denoted by $\gamma_I(G_\nu)$ and defined below. The resulting LCU, with unitaries corresponding to the last line of the equation, is given by

\begin{align}\label{eq:lcu_loc}
    &U_{loc} = \sum_{j=1}^\eta\sum_{\bm{\nu}\in \mathcal{G}_0}\sum_{I=1}^{L}\frac{2\pi |\gamma_I(G_\nu)|}{\Omega G_\nu^2}\sum_{b\in\{0,1\}} \sum_{\bm{q}\in \mathcal{G}} \nonumber \\ &(-1)^{\sgn(\gamma_I(G_\nu))+b[(\bm{q}-\bm{\nu})\notin \mathcal{G}]+1}e^{i\bm{G}_{\nu}\cdot \bm{R}_I}\ket{\bm{q}-\bm{\nu}}\bra{\bm{q}}_j
\end{align}
where
\begin{align}
\begin{split}\label{eq:dfn_gamma_I}
    \gamma_I(G_\nu) &:= \Big(e^{-(G_\nu^2 r_{loc}^2)/2} \big( -Z_\ion+ \nonumber\\
    &(C_1+3C_2)\frac{\sqrt{\pi}r_{loc}^3G_\nu^2}{2} -C_2\frac{\sqrt{\pi}r_{loc}^5G_\nu^4}{2} \big)\Big).
    \end{split}\\
\end{align}
The parameters $r_{loc},C_1,C_2$ were previously introduced in \cref{eq:uloc_me}. They depend on the atomic type $t_I$, so the function $\gamma_I$ is determined by $t_I$. We use this fact to change the notation to $\gamma_t$ when the context is clear. \\

\textit{LCU for $U_{NL}$}.
The main insight in deriving an LCU for the non-local term is to exploit the projector representation of the operator as in \cref{eq:hgh_nl}. We then break down each projection $P$ into a sum of the identity and a reflection operator $\frac{1}{2}\mathbbm{1} - \frac{1}{2}(\mathbbm{1}-2P)$. This leads to a smaller $\lambda$ and more efficient strategies for the qubitization of $U_{NL}$. This decomposition also gives rise to the identity terms that lead to a shifted Hamiltonian. For convenience, we still refer to $U_{NL}$ and $H$ by the same name after the shifting. A full derivation can be found in \cref{app:U_NL_LCU}, and the resulting LCU is given by 
\begin{align}
    \begin{split}\label{eq:lcu_NL}
    &U_{NL} = \sum_{j=1}^\eta \sum_{I,\sigma} |c_{I,\sigma}|\\
    & (-1)^{\sgn(c_{I,\sigma})}R(\bR_I)^\dagger(\mathbbm{1}-2\ket{\Psi_{I,\sigma}}\bra{\Psi_{I,\sigma}})R(\bR_I),
    \end{split}
\end{align}
where
\begin{align}\label{eq:R_R_I_phase_NL}
   R(\bR_I)\ket{\bp} = e^{i\bG_p \cdot \bR_I}\ket{\bp}.
\end{align}
As before, the second line in the equation for the LCU are the unitaries in the decomposition. The coefficients $c_{I,\sigma}$ are complicated expressions representing sums over Gaussian terms, which we define fully in \cref{app:U_NL_LCU}. For example, the simplest one corresponds to
\begin{align}
c_{I,0} := \frac{-8\pi r_0^3 B_0(\sum_{\bp} e^{-G_p^2r_0^2})}{\Omega  }.
\end{align}
These coefficients depend on the atomic specie $t_I$ and the label $\sigma$ that denotes the type of Gaussian superposition defining the states $\ket{\Psi_{I,\sigma}}$. There are eleven different choices for $\sigma\in \{0,\ldots,10\}$. All corresponding Gaussian states are derived in \cref{app:U_NL_LCU}, with $\ket{\Psi_{I,\sigma}}$ for $\sigma =0$ shown in \cref{eq:psi_I_o_example}. Notice we need to use this sign function as the selection probabilities must be positive. 

We make a few comments on the generalizability of the LCUs. Our preference in \cref{eq:dfn_gamma_I} to only include $C_1,C_2$, and not the higher order terms in \cref{eq:uloc_me} is motivated by the fact that $C_3=C_4=0$ for all of our case studies. Nevertheless, the algorithm subroutines that are relevant to $\gamma_I(G_\nu)$ apply without changes to any material for which $C_3,C_4 \neq 0$. In fact, the LCUs above are applicable to any pseudopotential as defined in \cref{eq:u_loc_mast_eq,eq:gnl_7}. Indeed, for the local term, the LCU is identical to \cref{eq:lcu_loc}. For the non-local term, if one includes more than one projector, then the LCU will involve more Gaussian superpositions $\ket{\Psi_{I,\sigma}}$, some of which do not appear in our explicit derivation in \cref{app:U_NL_LCU}. The generalization of our subroutines is straightforward as well, especially given the fact that we employ QROM for the preparation of superpositions, where the functional defining the amplitudes could change according to the specific parameters of the pseudopotential.

\subsection{Breakdown of the qubitization subroutines}
The qubitization of $H=T+U_{loc}+U_{NL}+V$ is roughly broken down into a qubitization of each term. This means defining PREP and SEL subroutines for each of the four operators, for instance SEL$_T$ for the kinetic energy term. We use this notation from now on. Henceforth, `AE' refers to the all-electron setting, with Hamiltonian $H=T+U+V$ given by \cref{eq:T,eq:U,eq:V} and no assumption on the lattice. In our discussion of PREP and SEL, we mention the needed adjustments to the AE algorithm when the lattice is not orthonormal. The all-electron setting with an orthonormal lattice, hereafter referred to by `OAE', has been qubitized in \cite{su2021fault,batterypaper}. 
 
\subsubsection{Prepare operator (PREP)}\label{sssec:the_prep_subroutine_of_qubitization}
Below, the target PREP state for the pseudopotential Hamiltonian is shown, as implied by their LCUs in \cref{eq:lcu_T,eq:lcu_V,eq:lcu_loc,eq:lcu_NL}. The error analysis is done in \cref{sec:error_analysis_and_eff_value_lambda}. We refer to \cite[Eq. (48)]{su2021fault} for the similar equation in the OAE setting.

\begin{widetext}
\begin{align}\label{eq:master_prep_state}
    &\PREP_H \ket{\bm{0}} = \\\label{eq:prep_mcX}
    &\Big(\sum_{\rchi \in \{0,1\}^2} \sqrt{\frac{\lambda_{\rchi}}{\lambda}}\ket{\rchi}_{\mcX} \Big) \\\label{eq:prep_b_omega_omega_prime}
    &\otimes\Big(\sum_{\omega,\omega' \in \{1,2,3\}} \frac{\sqrt{|\langle \bb_\omega, \bb_{\omega'} \rangle|}}{\sqrt{\sum_{\omega,\omega' } |\langle \bb_\omega, \bb_{\omega'} \rangle|}} \ket{\omega,\omega',\sgn(\langle \bb_\omega, \bb_{\omega'} \rangle)}_f\Big) \\\label{eq:prep_r_s_binary_T}
    &\otimes  \frac{1}{2^{n_p-1}-1}\Big(\sum_{r,s=0}^{n_p-2} 2^{(r+s)/2} \ket{r}_g \ket{s}_h \Big) \\\label{eq:prep_b_c_d_e_eta}
    &\otimes\ket{+}_b \otimes \frac{1}{\sqrt{\eta}}\Big( \sqrt{\eta-1} \ket{0}_c\sum_{i\neq j=1}^{\eta}\ket{i}_d\ket{j}_e+\ket{1}_c\sum_{j=1}^{\eta}\ket{j}_d\ket{j}_e \Big) \\\label{eq:prep_v_state}
    &\otimes\Big( \sqrt{\frac{P_{\nu,V}}{\lambda_{\nu,V}}} \ket{0}_{j_V}\sum_{\bm{\nu}\in \mathcal{G}_0}\frac{1}{G_\nu}\ket{\bm{\nu}}_{k_V}+  \sqrt{1-P_{\nu,V}}\ket{1}_{j_V}\ket{\bm{\nu}^\perp}_{k_V} \Big)\\\label{eq:prep_loc_state}
    &\otimes\Big( \sum_{I,\bm{\nu}\in \mathcal{G}_0}\frac{|\gamma_I(G_\nu)|^{1/2}}{\lambda_{\nu,loc}^{1/2}G_\nu}\ket{\bm{\nu}}_{k_{loc}}\ket{I}_{k'_{loc}}\ket{\sgn(\gamma_I(G_\nu))}_{s_{loc}} \Big)  \\\label{eq:prep_nl_state}
    &\otimes  \Big(\sum_{I,\sigma} \frac{\sqrt{|c_{I,\sigma}|}}{\sqrt{\sum_{I,\sigma} |c_{I,\sigma}|}}\ket{I}_{k'_{NL}}\ket{\sigma}_{k_{NL}}\ket{\sgn(c_{I,\sigma})}_{s_{NL}} \Big).
\end{align}
\end{widetext}

Each register is denoted by a subscript, such as $\mcX$ which labels the first register. While we borrow techniques from \cite{su2021fault} to prepare some registers, there are also adjustments and new registers:
\begin{itemize}
    \item We have to qubitize four operators instead of three.
    \item As the lattice is no longer orthonormal, the selection probabilities are more general, and in cases like $T$, they include the inner product of the reciprocal lattice vectors.
    \item When using pseudopotentials, the selection probabilities of $U_{loc}$ and $V$ are not proportional to each other, hence their corresponding momentum state superposition cannot be shared. This is in contrast to the AE case, where the momentum state superposition of $U$ and $V$ is shared.
    \item The preparation of the state in \eqref{eq:prep_loc_state} for the local term is via QROM, instead of the usual inequality test involving quantum arithmetic \cite[Sec. II.C.]{su2021fault}. The same holds for the register storing the amplitudes $\sqrt{|c_{I,\sigma}|}$ for $U_{NL}$.
\end{itemize}
We give a sketch of the preparation of the state of each register, with more details provided in \cref{app:the_details_of_prep_and_sel}.
\begin{enumerate}
\item The state of register $\mcX$ is made of two qubits, and is a superposition prepared by QROM with amplitudes $\sqrt{\lambda_{\rchi}/\lambda}$, where $\lambda_{\rchi}$ is the sum of the selection probabilities of the LCU for $\rchi$. We enumerate $T,V$ with $\rchi = 00,01$ and $U_{loc},U_{NL}$ with $\rchi = 10,11$. This state enables us to end up with the desired $\frac{H}{\lambda} = (T+U_{loc}+U_{NL}+V)/\lambda$ instead of $\frac{T}{\lambda_T} + \frac{U_{loc}}{\lambda_{loc}} + \frac{U_{NL}}{\lambda_{NL}}  + \frac{V}{\lambda_V}$.

\item The state $f$ is made of five qubits, two for each of $\omega, \omega'$, and one storing $\sgn(\langle \bb_\omega, \bb_{\omega'} \rangle)$. This is prepared using QROM and is a part of the PREP state of $T$. There is a failure probability in the preparation of this state, where ineligible states are flagged by an additional qubit. This is not shown above to avoid cluttering.

\item The superposition in registers $g,h$ is the same one from OAE. It is prepared by implementing a unary state using controlled-Hadamards and bit flips~\cite[Fig. II.]{batterypaper}. These two registers along with $f,b,e$ form the PREP state of $T$.

\item The state of register $b$ is given by the $\ket{+}$ eigenstate of Pauli $X$. It is used in the PREP state of $T,U_{loc},V$, where the unitaries in the LCUs are signed permutations. Note that $b$ is not part of the PREP state of $T$ when the lattice is orthogonal (\cref{rmk:lcu_t_b_getting_dropped}).

\item The states of registers $c,d,e$ also appear in the OAE setting, and form the superposition over the electrons, which is needed in first quantization as part of the PREP state of all four operators. The technique to prepare the uniform superpositions is available in~\cite[App. A.2]{lee2021even}, and is followed by checking whether $i = j$, with the result stored in the additional qubit of register $c$.

\item The superposition over the momentum $\bnu$ corresponding to $V$ is given in registers $j_V,k_V$. This is prepared using an inequality test (\cref{app:Ineq_test}) followed by amplitude amplification to increase the probability of success, flagged by $\ket{0}_{j_{V}}$. The procedure is very similar to the OAE setting \cite[Sec. II.C.]{su2021fault}, except that we use QROM instead of quantum arithmetic to compute one side of the inequality test.

\item The superposition with registers $k_{loc},k'_{loc},s_{loc}$ correspond to the momentum state of $U_{loc}$. The register $k'_{loc}$ enumerates all nuclei $I=(t,t_{j})$. While for $V$ the similar superposition is created using an inequality test, according to our simulations, the exponentially small term $\gamma_I(G_\nu)$ makes the rejection of the inequality test happen too often, leading to excessive rounds of amplitude amplifications that increase the Toffoli cost. Therefore QROM is used to prepare almost the entire superposition. Note that $\gamma_I$ only depends on the atomic species, so QROM prepares the superposition over register $k_{loc},s_{loc}$ and the atomic species index $t$ in the register $k'_{loc}$. What remains to be done is a uniform superposition over all nuclei $t_{j}$ of each specie $t$, which is created using the technique in \cite[App. A.2]{lee2021even}.

\item The registers $k_{NL},k'_{NL},s_{NL}$ give a superposition for the PREP state of $U_{NL}$. The coordinate $I$ in register $k'_{NL}$ enumerates all nuclei, with the same splitting mentioned above for $k'_{loc}$. To prepare \cref{eq:prep_nl_state}, recall that $c_{I,\sigma}$ only depends on the atomic type $t_I$ and the Gaussian superposition type $\sigma$. Therefore, a QROM produces the superposition over $\ket{t}_{k'_{NL}}\ket{\sigma}_{k_{NL}}\ket{\sgn(c_{t,\sigma})}$. Then a uniform superposition over the $t_j$ nuclei of type $t$ is implemented, giving the desired \cref{eq:prep_nl_state}. 
\end{enumerate}

\subsubsection{Select operator (SEL)}\label{ssec:the_sel_subroutine_of_qubitization}

We mostly borrow the corresponding implementation in the OAE case \cite{su2021fault} for every SEL operator except $\SEL_{NL}$, while adjusting for general lattices and the different registers holding the momentum state for $U_{loc}$ and $V$. We devote more explanation to $\SEL_{NL}$ as it is the operator with no similar precedent in the literature. Nevertheless, this section is not a detailed compilation, especially for $\SEL_{NL}$ which is the most involved; we refer to \cref{app:the_details_of_prep_and_sel} for more details.

There are some commonalities among all SEL operators, which we briefly discuss. The action of each operator is controlled on a register that flags the success of the corresponding state preparation. 
For example for $T$, we need to check three conditions:
\begin{itemize}
        \item The state $\ket{\rchi}_{\mcX}$ is equal to $\ket{00}_{\mcX}$,
        \item The register $c$ flags the success of $i=j$ in registers $d,e$ (i.e., $\ket{1}_c$),
        \item The ancilla attached to register $f$ flags the meaningful basis states ($\omega \neq 4$ and $\omega' \neq 4$) in the superposition prepared by QROM.
    \end{itemize}
If all the above conditions hold, then we get $\ket{0}_T$, and get $\ket{1}_T$ otherwise, in which case SEL$_T$ acts as the identity. Checking correctness of the PREP states can be performed using a few logical gates (Toffoli, CNOT, X). Note that these checks are performed as part of the $\PREP$ procedure, but we find it more informative to introduce them here.

The second design shared by all select operators is a common CSWAP circuit and its inverse. This circuit first copies $\ket{\bp}_i$ (controlled on $\ket{i}_d$) into an auxiliary register, swaps it back into its place after the relevant SEL operations are carried out, then does the same for $\ket{\bp}_j$ controlled on $\ket{j}_e$. This ensures that the SEL operations are all done on a single auxiliary register, obviating the need for the far more costly controlled operations directly on the system register \cite[Eq. (E27)]{batterypaper}. In the AE case, this is actually the costliest part of implementing the select operator. We now discuss how to implement each select operator, the sum of which is the desired $\SEL_H$.\\

1. $\SEL_T$: The transformation is given by
    \begin{align}\begin{split}
        &\ket{0}_{T} \ket{+}_b\ket{j}_e \ket{\omega,\omega',\sgn(\langle \bb_\omega,\bb_{\omega'}\rangle)}_f\\
        &\ket{r}_g \ket{s}_h \ket{\bp}_j \to\\
        &(-1)^{b(p_{\omega,r}p_{\omega',s}+1)+\sgn(\langle \bb_\omega,\bb_{\omega'}\rangle p_\omega p_{\omega'})} \ket{0}_{T} \ket{+}_b \ket{j}_e \\
        &\otimes\ket{\omega,\omega',\sgn(\langle \bb_\omega,\bb_{\omega'}\rangle)}_f \ket{r}_g \ket{s}_h \ket{\bp}_j.
    \end{split}
    \end{align}
    The action above is essentially a phase, which is a combination of multi-controlled Z gates. The sign of the inner products appearing in the phase is the term making the distinction from the OAE case (\cite[Eq. (49)]{su2021fault}). \\

2. $\SEL_V$: The implementation follows the OAE setting \cite[Eq. (51)]{su2021fault}:
    \begin{align}\begin{split}
        &\ket{0}_{V} \ket{b}_b \ket{i}_d \ket{j}_e \ket{0}_c \ket{0}_{j_V} \ket{\bnu}_{k_V} \ket{\bp}_i \ket{\bq}_j \to \\
        &(-1)^{b([(\bp+\bnu) \not \in \mcG] \lor [(\bq-\bnu) \not\in \mcG])} \ket{0}_{V} \ket{b}_b \ket{i}_d \ket{j}_e  \\
        &\otimes \ket{0}_c \ket{0}_{j_V}\ket{\bnu}_{k_V} \ket{\bp+\bnu}_i \ket{\bq-\bnu}_j.
        \end{split}
    \end{align}
    The addition and subtraction along with the phase implementation are all controlled on $\ket{0}_V$. The non-diagonal superposition $\ket{i}_d\ket{j}_e$ over the electrons, flagged by $\ket{0}_c$, is acted upon by $\SEL_V$, while the rest of the SEL operators only act on the $e$ register. \\

3. $\SEL_{loc}$: The operator acts in two main steps:
    \begin{align}\begin{split}
        &\ket{0}_{loc} \ket{b}_b \ket{j}_e \ket{\bnu}_{k_{loc}} \ket{I}_{k'_{loc}}  \\
        &\otimes \ket{\sgn(\gamma_I(G_\nu))}_{s_{loc}} \ket{\bm{0}}_{\bR}   \ket{\bp}_j \to \\
        &(-1)^{b[(\bp-\bnu) \not \in \mcG]} \ket{0}_{loc} \ket{b}_b \ket{j}_e \ket{\bnu}_{k_{loc}} \ket{I}_{k'_{loc}} \\
        &\otimes\ket{\sgn(\gamma_I(G_\nu))}_{s_{loc}} \ket{\bR_I}_{\bR} \ket{\bp-\bnu}_j.
    \end{split}
    \end{align}
    This first step illustrates a controlled sign, along with a controlled subtraction $\ket{\bp}_j\rightarrow \ket{\bp-\bnu}_j$. It also shows a QROM that reads $\ket{x}_{loc}\ket{I}_{k'_{loc}}$ and outputs $\bR_I$ into the register $\ket{\cdot}_{\bR}$ only if $x=0$. This is then used to implement the phase $e^{i \bG_\nu \cdot \bR_I}$, following the techniques in \cite[Eq. (50)]{su2021fault}. The second step maps the previous state to
    \begin{align}
    \begin{split}
    &\to e^{i \bG_\nu \cdot \bR_I} (-1)^{b[(\bp-\bnu) \not \in \mcG]+\sgn(\gamma_I(G_\nu))} \ket{0}_{loc} \ket{b}_b \ket{j}_e\\
    &\otimes  \ket{\bnu}_{k_{loc}} \ket{I}_{k'_{loc}} \ket{\sgn(\gamma_I(G_\nu))}_{s_{loc}}  \ket{0}_{\bR} \ket{\bp-\bnu}_j.
    \end{split}
    \end{align}
    Here we erase the register $\bR$ by taking the inverse of the QROM, and finish by applying $\text{Z}_{s_{loc}}$ controlled on $\ket{0}_{loc}$. As a remark, the AE circuit for $\SEL_U$ is exactly the same as its OAE implementation in \cite{su2021fault}.\\
    
4. $\SEL_{NL}$: The unitary from the LCU in \cref{eq:lcu_NL} is
\begin{align}
    (-1)^{\sgn(c_{I,\sigma})}R(\bR_I)^\dagger(\mathbbm{1}-2\ket{\Psi_{I,\sigma}}\bra{\Psi_{I,\sigma}})R(\bR_I),
\end{align}
which can be broken down as a series of transformations, which we describe below. To avoid cluttering, we present only the registers involved in each stage. The first transformation applies a phase as follows:
\begin{align}
\begin{split}
     &\ket{0}_{NL} \ket{\bq}_j \ket{I}_{k'_{NL}}  \ket{\bm{0}}_{\bR} \to \\
     &\ket{0}_{NL}\ket{\bq}_j  \ket{I}_{k'_{NL}}  \ket{\bR_I}_{\bR} \to \\
     &e^{i \bG_q \cdot \bR_I} \ket{0}_{NL}\ket{\bq}_j \ket{I}_{k'_{NL}}   \ket{\bR_I}_{\bR}.
     \end{split}
\end{align}
To implement this transformation, first a QROM reads $\ket{x}_{NL}\ket{I}_{k'_{NL}}$ and outputs $\bR_I$ into $\ket{\cdot}_{\bR}$ only if $x=0$. Then the phase $e^{i \bG_q \cdot \bR_I}$ is applied, similar to how $e^{i \bG_\nu \cdot \bR_I}$ was applied for $U_{loc}$. The next stage is to apply the reflection onto a Gaussian state $\ket{\Psi_{I,\sigma}}$. To do so, we need to prepare the state and apply a reflection:
\begin{align}\label{eq:reflection_SEL_NL_text}
    U_{I,\sigma}(\mathbbm{1}-2\ket{0}_{NL}\ket{\bm{0}}\bra{\bm{0}}\bra{0}_{NL})U_{I,\sigma}^\dagger,
\end{align}
where $U_{I,\sigma}\ket{\bm{0}} = \ket{\Psi_{I,\sigma}}$ acts on the same register as $\ket{\bq}_j$. Note that the reflection also includes the flag qubit $\ket{0}_{NL}$, ensuring that $\SEL_{NL}$ acts by identity if the basis state has not been successfully prepared for the non-local term. 

This reflection is the most expensive part of $\SEL_H$, so it is worthwhile to discuss strategies to reduce its cost.  Many materials have either orthogonal or partially orthogonal lattices, i.e., when a lattice vector is orthogonal to the other two. This crystallographic feature is prevalent among many materials of interest, including the ones utilized as cathode materials. For instance, about half of the crystal structures available in the Materials Project database ~\cite{jain2013commentary} have orthogonal or partially orthogonal lattices. Assuming this, the Gaussian state can always be decomposed into the tensor product of three one-dimensional (1D) Gaussian states, or a 1D+2D Gaussian state, respectively. Since QROM cost rises exponentially with the number of read qubits, it is important to exploit this decomposition. As a result, for orthogonal (and partially orthogonal lattices), one has three (two) QROMs acting in parallel and all reading $n_p$ ($n_p$ and $2n_p$) qubits, instead of one QROM reading $3n_p$ qubits. As an example of the decomposition, we have the following for \cref{eq:psi_I_10_example} when the lattice is orthogonal:
\begin{align}
    &\ket{\Psi_{I,10}} = \sum_{\bp_1}  e^{-G_{p,1}^2r_2^2/2}\ket{\bp_1} \otimes  \nonumber \\
    &\sum_{\bp_2}  G_{p,2}e^{-G_{p,2}^2r_2^2/2}\ket{\bp_2}\otimes \sum_{\bp_3}  G_{p,3} e^{-G_{p,3}^2r_2^2/2}\ket{\bp_3}.
\end{align}
where $G_{p,\omega}$ is the $\omega$ coordinate of $\bG_p$. Once the reflection in \cref{eq:reflection_SEL_NL_text} is implemented on $\ket{\bq}_j \ket{0}_j$, it leads to a superposition of the form $\sum_{\bp \in \mcG} \phi_{I,\sigma}(\bp) \ket{\bp}_j$ for some amplitudes $\phi_{I,\sigma}(\bp)$. 

The final steps are (i) the application of the phase $e^{-i\bG_p \cdot \bR_I}$, which is done similar to its inverse at the beginning, (ii) the erasure of $\bR_I$ from the register $\bR$ by the inverse of the QROM that created it, and (iii) the Z gate $\text{Z}_{s_{NL}}$ on $s_{NL}$ controlled on $\ket{0}_{NL}$. Focusing on the notable registers, the final result is:
\begin{align}
    & (-1)^{\sgn(c_{I,\sigma})}\ket{0}_{NL} \ket{\sigma}_{k_{NL}}\ket{I}_{k'_{NL}}\ket{\sgn(c_{I,\sigma})}_{s_{NL}}\otimes    \nonumber \\
    &\Big( \sum_{\bp \in \mcG} e^{i (\bG_q-\bG_p) \cdot \bR_I} \phi_{I,\sigma}(\bp) \ket{\bp}_j  \Big).
\end{align}

Finally, we explain our choice of QROM for $U_{I,\sigma}$ in further detail. First, while the preparation of a discrete Gaussian state such as $\ket{\Psi_{I,0}} = \sum_{\bp}  e^{-G_p^2r_0^2/2}\ket{\bp}$ has been specifically treated in earlier~\cite{kitaev2008wavefunction}, we have to also prepare higher-order derivatives of such states, for example: 
\begin{align}\label{eq:psi_I_10_example}
    \ket{\Psi_{I,10}} = \sum_{\bp}  (G_{p,2}G_{p,3})e^{-G_p^2r_2^2/2}\ket{\bp}, 
\end{align}
Even for a diagonal covariance matrix corresponding to an orthogonal lattice, the preparation method in~\cite{kitaev2008wavefunction} is inefficient compared to QROM as it assumes an arithmetic oracle.

Furthermore, the specific case of a discrete Gaussian state with a non-diagonal covariance matrix has not been properly investigated and optimized. While the work in~\cite{kitaev2008wavefunction} provides some ideas like a simple basis change, the details regarding non-orthogonal lattices are far more complicated and our estimates show that the algorithm does not yield the actual state in a cost-efficient way. In our range of applications, more recent methods like inequality test coupled with quantum arithmetics~\cite{su2021fault} fail at providing a good balance of the product of number of qubits and number of gates. A more sophisticated preparation method called state preparation without coherent arithmetic~\cite{mcardle2022quantum} is promising. However, it is also more complicated for cost and error analysis and crucially provides less parallelization and qubit/gate trade-off opportunities, which we frequently exploit to reduce overall cost of the algorithm.

\section{Error analysis and the effective value of $\lambda$}\label{sec:error_analysis_and_eff_value_lambda}

The quantum phase estimation algorithm targets a maximum total error that we denote by $\error$. To achieve this, we need to identify all individual sources of error in the algorithm. We use $\error_X$ to denote each source of error, where $X$ will be replaced by a label describing the type of error. These errors are ultimately related to finite precision operations respectively using $n_X$ bits. The choice of $n_X$ further determines the number of qubits and non-Clifford gates used in the algorithm, and is involved in identifying the \textit{effective} value of the normalization factor $\lambda$ after qubitization. Below, we review the different sources of error, and discuss how to compute $\lambda$. We relegate the detailed derivations to \cref{app:lambdas,app:errors}.

\subsection{Overview of errors and finite size approximations}\label{sec:overview_of_errors}

Many of our errors are related to the precision of the rotation angles $\theta$ in \qromalgo~when building a superposition using QROM. Below, we show the complete list of all qubitization errors:
\begin{enumerate}
    \item $\error_{\rchi}$ is the error due to using $n_{\rchi}$ bits for the precision of the rotation angles necessary to build the superposition of register $\mcX$ (\cref{eq:prep_mcX}) using QROM.
    \item $\error_B$ is a similar error, due to using $n_{B}$ bits for preparing the superposition of register $f$ in \cref{eq:prep_b_omega_omega_prime}.
    \item $\error_{NL}$ is the error due to using $n_{NL}$ bits in the QROM for building the PREP state for $U_{NL}$ in \cref{eq:prep_nl_state}.
    \item $\error_{M_V}$ is the error due to using $n_{M_V}$ bits in the QROM computing the inequality test for preparing the PREP$_V$ state in \cref{eq:prep_v_state}.
    \item $\error_{M_{loc}}$ is the error due to using $n_{M_{loc}}$ bits for preparing the local PREP state \cref{eq:prep_loc_state} with QROM.
    \item $\error_{\Psi}$ is the error due to using $n_{\Psi}$ bits for building the superpositions $\ket{\Psi_{I,\sigma}}$ using QROM.
    \item $\error_{R} \le \error_{R,loc}+\error_{R,NL}$ is the error due to the finite size register $n_R$ used to represent $\bR_I$ in register $\bR$ for the implementation of $\SEL_{loc}$ and $\SEL_{NL}$.
\end{enumerate}
Finally, let $\error_{\text{QPE}}$ be the error from quantum phase estimation. To achieve an approximation $\error$ of the ground state energy, it is necessary and sufficient that
\begin{align}\label{eq:error_inequality}
    \error^2 \ge& ~ \error_{\text{QPE}}^2 + \nonumber \\&(\error_{\rchi}+\error_B+\error_{NL}+\error_R+\error_{M_V}+\error_{M_{loc}} + \error_{\Psi})^2.
\end{align}
The proof follows a similar argument to the OAE setting \cite[Thm. 4]{su2021fault}.

Each $\error_X$ is upper bounded by an expression depending on $n_X$, which thus determines the value of $n_X$ needed to obtain an error $\error_X$. We estimate all errors in \cref{app:errors}, with results summarized in \cref{tab:error_profile}. As an example, following the Lemma in \cref{ssec:background_qrom}, it can be shown that
\begin{align}\label{eq:error_rchi_main_txt_example}
    \error_{\rchi} \le \frac{4\pi }{2^{n_{\rchi}}}\lambda, 
\end{align}
which implies that
\begin{align}\label{eq:n_rchi_main_txt_example}
    n_{\rchi} = \left\lceil\log\left( \frac{4\pi\lambda}{\error_{\rchi}}\right)\right\rceil,
\end{align}
gives an error less than or equal to $\error_{\rchi}$.

\subsection{The effective value of $\lambda$}\label{ssec:effective_lambda}

The value of $\lambda$ determines the scaling in the block-encoding of $H$ and is a significant factor in the cost that is highly dependent on the chosen LCU and compilation strategy. Therefore, its accurate computation is important. The algorithm targets a PREP state that is different from the theoretical one implied by the LCUs, and the resulting effective $\lambda$ is closely related but technically different from the one implied by the LCUs. Note that this is also a feature of previous work (\cite[Thm. 4]{su2021fault}).

Computing the effective $\lambda$ requires finding the effective one for each of the four operators $T, U_{loc}, U_{NL}, V$. To do so, we identify any failure/success probability embedded in the PREP implementation. Here, failure refers to a basis state itself being inadmissible. This appears when preparing a uniform superposition over a basis that is not a power of two, or when an inequality test is rejected. Even upon success, the obtained amplitudes are usually different from the desired ones. For example, we use an inequality test to prepare amplitudes $G_\nu^{-1}$ for \cref{eq:prep_v_state}, and after success, we get some complicated expression approximating $G_\nu^{-1}$ (see \cref{eq:error_M_V_derivation}).

Overall, the $\lambda$ for each operator will roughly look like a sum of effective selection probabilities $\sum \tilde{\alpha_\ell}$ divided by a product of success probabilities $\prod_i P_i$. Then the effective total value $\lambda = \lambda_{T}+\lambda_{loc}+\lambda_{NL}+\lambda_{V}$ can be derived. We do this in \cref{app:lambdas} and summarize the values in \cref{tab:lambda_vals}. As an example, we show $\lambda_T$ below. This is computed after the compilation of the LCU of $T$ in \cref{eq:lcu_T}. In the numerator of the expression below, we have the explicit value for $\sum \tilde{\alpha_\ell}$. In the denominator, we have a success probability related to the preparation of the uniform superposition over pairs of electrons in \cref{eq:prep_b_c_d_e_eta}:
\begin{align}
\lambda_T = \frac{\eta  2^{2n_p-3}\sum_{\omega,\omega' \in \{1,2,3\}} |\langle \bb_\omega, \bb_{\omega'} \rangle|  }{\Ps(\eta,b_r)^2}.
\end{align}

\section{Gate and qubit cost}\label{sec:cost}
Having described the main steps of the algorithm and identified the sources of error, we now quantify the number of gates and qubits needed to run the full procedure. We follow standard practice in the literature and focus on Toffoli gates since they constitute the bulk of non-Clifford gates used in the algorithm.

The compilation of the algorithm involves numerous strategies, each requiring its own separate cost estimate. This leads to an extensive analysis that cannot be summarized in simple expressions. We have thus gathered all Toffoli costing calculations in the Appendix \cref{tab:gatecosts} and demonstrate the derivation of each in \cref{app:gate_costings}. These formulas are implemented in code at \url{https://github.com/XanaduAI/pseudopotentials} and used to perform resource estimation calculations. Here we focus on highlighting the most expensive steps of the algorithm.

First, we recall the rough estimate for the cost of a qubitization-based QPE algorithm:
\begin{align}
\left\lceil \frac{\pi \lambda}{\error_{\text{QPE}}}\right\rceil (2\PREP_{cost} + \SEL_{cost}),
\end{align}
where we explicitly include the qubitization costs of $\PREP$ and $\SEL$. The factor of two appears because in the qubitization operator we apply PREP and its complex conjugate. The multiplicative factor $\lambda/\error$ is the largest contributor to the total cost. For example, for a system with $N=10^5$ plane waves and a hundred electrons (\cref{table:mat_dis}), this fraction gives a factor of about $10^9$, while the qubitization cost contributes roughly a factor of $10^5$. 

The two most costly subroutines in the qubitization part of the algorithm are:
\begin{itemize}
    \item Preparing $\sum_{I,\bnu} \frac{|\gamma_I(G_\nu)|^{1/2}}{G_\nu} \ket{\bnu,I,\sgn(\gamma_I(G_\nu))}$, which is part of the PREP of the local term. 
    \item Performing the reflection on the Gaussian superpositions $\ket{\Psi_{I,\sigma}}$, which is part of SEL of the non-local term.
\end{itemize}
The expressions for the Toffoli cost of these steps are roughly given by
\begin{align}
    &2\big(2\left\lceil \frac{2^{3n_p+\tau+1}}{\beta_{loc}}\right\rceil+ 3\beta_{loc}n_{M_{loc}}(3n_p)\big),\\
    \begin{split}
    &6\big(8\left\lceil \frac{2^{n_{\text{QROM}}+1}-2^{n_{\text{QROM}}-n_p}}{\beta_{\Psi}}\right\rceil + 3\beta_{\Psi} n_{\Psi} n_p\big)  \\
    & +30\big(2\left\lceil \frac{2^{n_{\text{QROM}}+1}-2^{n_{\text{QROM}}-n_p}}{\beta_{\Psi}'}\right\rceil + 3\beta_{\Psi}' n_{\Psi} n_p\big),
    \end{split}
\end{align}
respectively, where $n_{\text{QROM}} = n_p + \lceil \log(\mcT)\rceil + 2$, where $\mcT$ is the number of atomic species in the cell. Given how the parameters $\beta_{loc},\beta_{\Psi},\beta_{\Psi}'$ divide the leading terms, their values are quite important in lowering the cost. However, these scalars have to satisfy constraints based, for example, on the number of available dirty qubits. The space-time trade-offs of QROM during the resource estimation is explained further in \cref{app:res_est_det_results}. 

In our calculations, each of these constitutes the largest share of $\PREP_{cost}$ and $\SEL_{cost}$, respectively. While $\SEL_{cost}$ is slightly cost-dominant for a smaller number of plane waves, the trend is reversed as $N$ grows beyond $10^5$. Consequently, one significant obstacle in further optimizing the qubitization algorithm is the preparation of \cref{eq:prep_loc_state}, where $\gamma_I(G_\nu)$ is the arithmetically involved term in \cref{eq:dfn_gamma_I}. 

The derivation of the qubit cost of the algorithm is of similar complexity to the gate cost. We provide a full analysis in \cref{app:qubit_cost}, where we reuse any uncomputed and clean qubits whenever possible. We also make the distinction between clean and dirty qubits as the latter is implicated in QROMs. Although our counting is different from \cite{su2021fault,batterypaper}, our results in \cref{sec:resource_estimation} follow the same behaviour: the overwhelming contribution to the qubit cost is the size of the system register, which requires $3\eta n_p$ qubits. As an example, the total clean qubit cost for simulating a system of 408 electrons with $N=10^5$ plane waves is about 9,892, while the system register uses $3\eta n_p = 3\cdot 408 \cdot 6= 7,344$ clean qubits (\cref{table:mat_limno}).

\section{Application: lithium-excess cathode materials}
\label{sec:applications}
In this section, we focus on the quantum simulation of lithium-excess (Li-excess) cathode materials, which have been recently proposed for higher-capacity cathodes~\cite{zhang2022pushing, campeon2021fundamentals, jiao2022achieving}. By replacing a fraction of the transition metals with Li atoms, Li-excess materials can potentially offer up to twice the specific capacity, i.e., the total amount of charge stored per unit mass, of conventional cathodes~\cite{yabuuchi2019material}. For example, $\text{Li}_2\text{Mn}\text{O}_3$ has a theoretical capacity of 460 $\text{mAh}~\text{g}^{-1}$ responsible for the voltage of $\sim 4$ V~\cite{yabuuchi2011detailed}. This yields a specific energy of $(460~\text{mAh}~\text{g}^{-1} \times 4~\text{V}) = 1840~\text{Wh}~\text{Kg}^{-1}$ which is well above the specific energy ($800~\text{Wh}~\text{Kg}^{-1}$ at the cathode material level) required to enable full driving performance and significantly reduce the cost of electric vehicles~\cite{li2020high}.  However experiments reveal average voltages of $\sim 3.8$ V and a much smaller capacity of about $180~\text{mAh}~\text{g}^{-1}$ following the first charging cycle~\cite{wang2013atomic}. This voltage decay and the abrupt capacity loss, which is common to other Li-excess materials, is an important obstacle in designing higher-capacity batteries. 

The voltage profile of a cathode material, i.e., the voltage $V(x)$ measured as a function of the concentration of the Li ions $x$, provides relevant information about the Li insertion process~\cite{van2020rechargeable}. The typical voltage profile of Li-excess materials is sketched in~\cref{fig:v-profile}. A distinguishing feature of this profile is a long plateau region following the initial sloping curve during charge. This is indicative of the cathode material transitioning from a solid solution phase to a region where two phases of the material coexist~\cite{van2020rechargeable}. Moreover, the hysteresis loop reveals that the removal of Li ions is accompanied by irreversible structural transformations causing the capacity loss observed experimentally~\cite{zhang2022pushing}.
\begin{figure}[!ht]
\centering
\includegraphics[width=1 \columnwidth]{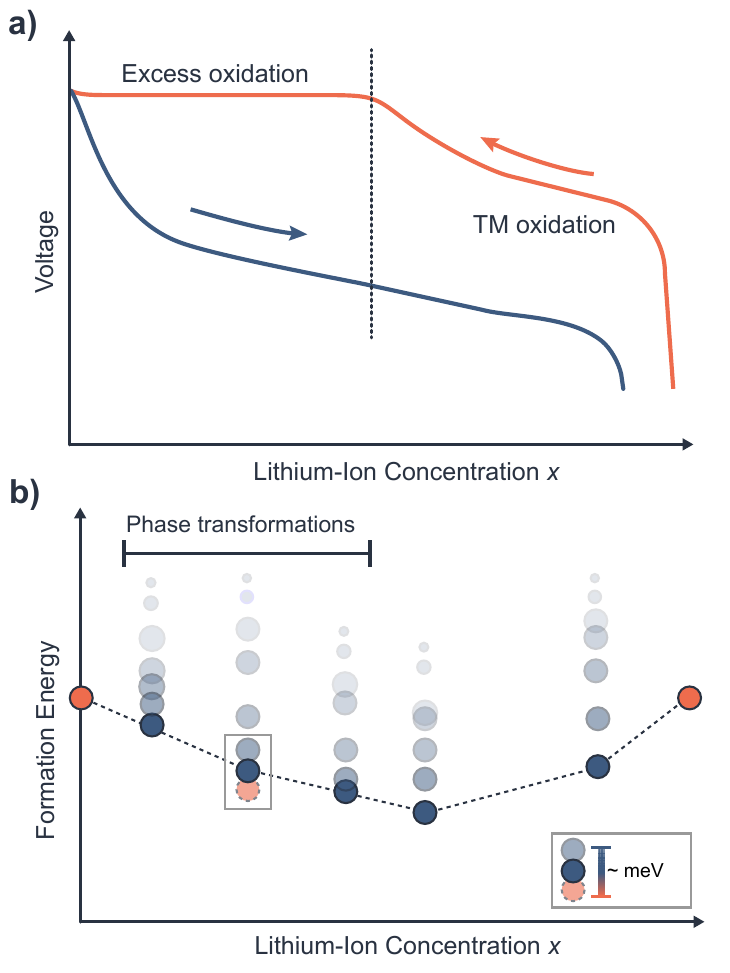}
\caption{a) Sketch of a typical voltage profile for Li-excess materials. The curves depict the charging (orange) and discharging (dark-blue) processes. In the transition metal oxidation region, the extraction of Li ions is charge-compensated by electrons provided by the transition metals. Further Li extraction is thought to be possible via the oxidation of oxygen atoms (excess oxidation). Phase transformations occur within the plateau region. b) A typical formation energy plot with respect to the Li-ion concentration $x$ in a Li-excess cathode. The brighter (dimmed) blue circles correspond to the more (less) stable phases in this representation. The dotted line indicates the convex hull consisting of the most stable phases. The orange circles correspond to the fully lithiated and delithiated phases where {\it ab initio} calculations are strongly supported by experimental data. Differences between stable, unstable, and inaccurate calculations (orange dimmed circle) are in the energy scale of millielectronvolts (meV), as depicted in the inset.}
\label{fig:v-profile}
\end{figure}

The capacity gain of Li-excess materials is most frequently attributed to the oxidation of oxygen anions ($\text{O}^{2-} \rightarrow \text{O}^{n-}, n<2$) (anion redox)~\cite{seo2016structural}. This redox mechanism and alternative mechanisms have been proposed and discussed extensively in Ref.~\cite{zhang2022pushing} along with their relation to structural transforms resulting in materials degradation and performance loss. While important questions remain open, there seems to be a consensus that the increase of oxygen hole states and lithium vacancies destabilizes the metal-oxide chemical bonds and leads to the formation of oxygen dimers~\cite{chen2016lithium, mccoll2022transition}. Furthermore, oxygen dimerization cooperatively favors the migration of transition metals to Li-vacancy sites in the structure, which is the main process driving the structural transformations in Li-excess materials~\cite{mohanty2014unraveling, kleiner2018origin}. 

It follows that identifying the most stable phases of the cathode material for compositions with excess Li is crucial for determining the dominant redox mechanisms and, more importantly, for deriving potential strategies (e.g., doping, modifying the crystal structure) to retain more reversible capacity. The relevant quantity for ranking the stability of possible phases of the delithiated material is the formation energy computed for a given Li ion concentration~\cite{urban2016computational}. For example, in the case of $\text{Li}_2\text{Mn}\text{O}_3$, this is given by~\cite{lim2015origins}
\begin{align}
E_f(x) = E(\text{Li}_x\text{Mn}\text{O}_3) &- \frac{x}{2}E(\text{Li}_2\text{Mn}\text{O}_3) \nonumber \\
& - \left(1-\frac{x}{2}\right)E(\text{Li}_0\text{Mn}\text{O}_3),
\label{eq:form-energy}
\end{align}
where $E(\text{Li}_x\text{Mn}\text{O}_3)$ is the ground-state energy of the material in a given phase, and $E(\text{Li}_2\text{Mn}\text{O}_3)$ and $E(\text{Li}_0\text{Mn}\text{O}_3)$ are respectively the total energies of the stable reference materials at the end points of the charging curve ($x=2$ and $x=0$). The difference between the formation energies of different phases is typically on the scale of 1 meV $\approx 0.04~ \text{mHa}$~\cite{urban2016computational, lee2014structural, lim2015origins, mccoll2022transition}, which sets the required accuracy for computing the energies needed to evaluate~\cref{eq:form-energy}.

The release of molecular oxygen, $\text{O}_2$, upon Li extraction has been observed in experiments and also theoretically predicted~\cite{zhang2022pushing}. Quite recently, McColl {\it et al.}~\cite{mccoll2022transition} investigated the stability of the delithiated phases of the disordered rocksalt structures of $\text{Li}_2\text{MnO}_2\text{F}$ and found that the formation of $\text{O}_2$, following transition metal migration, is the dominant redox mechanism. The thermodynamical driving force for this reaction is the oxygen-vacancy formation energy~\cite{lee2014structural,lim2015origins}, defined as 
\begin{align}
E_{\text{O}_2}(x) = E(\text{Li}_x\text{MnO}_{2-\delta}\text{F}) & - E(\text{Li}_x\text{MnO}_{2}\text{F}) \nonumber \\
& + \frac{\delta}{2}E(\text{O}_2), 
\label{eq:form-o2}
\end{align}
where $\delta$ denotes the number of oxygen atoms removed per formula unit and $E(\text{O}_2)$ is the ground-state energy of the oxygen molecule. 

Different strategies have been proposed to mitigate the capacity loss of Li-excess materials~\cite{li2014k+, qiu2016gas, shi2016mitigating, shin2018alleviating, ku2018suppression, pei2020reviving, chen2020highly, eum2020voltage}. Most of them rely on modifying the composition and/or the atomic structure of the material to suppress transition metal migration in the delithiated cathode. To that end, computing accurate site energies, i.e., the ground-state energy of the material as the transition metal occupies different lattice sites, has proven to be useful~\cite{eum2020voltage}. Furthermore, from the ground-state energy of the transition state (TS) of the material along the transition metal migration path~\cite{neb_method}, it is possible to compute the activation energy $E_\text{TS}-E_0$ to describe the kinetic pathways and, crucially, determine the reversibility of the structural transformations of the material.

The accuracy of the electronic structure method used to simulate Li-excess materials is key as it impacts the entire computational methodology used to investigate these materials~\cite{zhang2022pushing, seo2016structural, eum2020voltage, mccoll2022transition}. Due to the practical computational limitations of more accurate methods, we only have access to approximate density functional theory (DFT) methods which can introduce significant errors as they rely on approximate parametrizations of the exchange-correlation density functional. In particular, the simulation of key battery properties requires computing the difference of ground-sate energies of materials with very different electronic structures, see for example~\cref{eq:form-energy} to compute the formation energy. In these scenarios, DFT approximations do not benefit from cancellation of errors, and very accurate energies need to be computed. Standard local and semi-local approximations to the density functional can not properly describe the strong on-site electronic correlations between the $d$ electrons in the transition metal~\cite{urban2016computational}. 

This problem is partially mitigated by adding a Hubbard-like term in the Kohn-Sham Hamiltonian, the so-called DFT+U method~\cite{jain2011formation}. However, the value of the Hubbard parameter $U$, typically obtained from experimental data~\cite{seo2016structural, eum2020voltage, mccoll2022transition}, is strongly system-dependent which limits applicability of the method to explore new materials. In addition, authors in Ref.~\cite{seo2016structural} noted that DFT+U cannot predict the band structure of Li-excess materials with the required accuracy. Instead, they used a hybrid functional~\cite{heyd2004efficient} which incorporates a fraction of the exact exchange from Hartree-Fock theory as part of the exchange-correlation functional. This approach also depends on an adjustable parameter selecting the amount of HF exchange included in the calculations, and finding its optimal value is an issue if no experimental data is available. Finally, Zhang {\it et al.}~\cite{zhang2022pushing} have recently pointed out that these approximations may break down in the presence of oxygen-oxygen bonding which is one of the most important relaxation process in delithiated Li-excess materials~\cite{mccoll2022transition}.

In summary, relevant electronic structure calculations for Li-excess materials include formation energies of delithiated phases, oxygen-vacancy formation energies, site energies, and activation energies for kinetic pathways. Any of these simulations can be reduced to a series of ground-state energy calculations, each of which can be performed using our quantum algorithm. Note that the proposed algorithm is a full first-principles approach that, for a given structural model of the target material, can be used to compute its ground-state energy with guaranteed precision using a fault-tolerant quantum computer. It does not depend on any semi-empirical parameter and it can be used to simulate any material consisting of atomic species for which HGH PPs are accessible. We now study the resources required to implement our quantum algorithm for the ground-state energy calculations for Li-excess materials.

{
	\renewcommand{\arraystretch}{1.5}
	\begin{table*}[!ht]
	\centering
 \resizebox{\textwidth}{!}{
		\begin{tabular}{|Sc|Sc|Sc|Sc|}
  \hline
			\shortstack{{\bf Material}} & \shortstack{\limno} & \shortstack{\limnnio} & \shortstack{\limnfo}\\
			\hline
			{\bf Crystal system / space group} & Monoclinic / [C 2/m] & Hexagonal / [$\text{P}6_3\text{mc}$]  & Cubic / [Fm3m]
			\\\hline
   			{\bf Supercell size} & $2 \times 2 \times 1$ & $2 \times 3 \times 2$ & $3 \times 2 \times 2$ \\ \hline
   \shortstack{\\{\bf Lattice vectors} ($\text{\AA}$) } & \shortstack{$\vec{a}_1=(10.02, 0, 0)$ \\ $\vec{a}_2=(0, 17.32, 0)$ \\ $\vec{a}_3=(-1.6949, 0, 4.7995)$}  & \shortstack{$\vec{a}_1=(5.7081, 0, 0)$ \\ $\vec{a}_2=(-4.2811, 7.4151, 0)$ \\ $\vec{a}_3=(0, 0, 19.6317)$} & \shortstack{$\vec{a}_1=(12.48, 0, 0)$ \\ $\vec{a}_2=(0, 8.32, 0)$ \\ $\vec{a}_3=(0, 0, 8.32)$} \\ \hline
   {\bf Supercell volume} ($\text{\AA}^3$) & 832.9405 & 830.9604 & 863.8955 \\ \hline
   $N_\text{atoms}$ & 72 (8 Li, 16 Mn, 48 O) & 90 (22 Li, 14 Mn, 6 Ni, 48 O) & 76 (12 Li, 16 Mn, 16 F, 32 O) \\ \hline
   $\eta^\text{PP}~(\eta^\text{AE})$ & 408 (808) & 468 (968) & 428 (836) \\ \hline
   $N^\text{PP}~(N^\text{AE})$ & 55,473 ($5.8 \times 10^8$) & 67,767 ($8.7 \times 10^8$) & 57,655 ($6.4 \times 10^8$) \\ \hline
		\end{tabular}}
		\caption{Crystal lattice parameters for the materials selected to perform resource estimation. Recall that $\eta$ denotes the number of electrons in the supercell structural models and $N$ is the number of plane waves required to converge the ground-state energy of the material at the level of density functional theory. The superscripts PP and AE are used to differentiate between the pseudopotential and all-electron cases, respectively. The structural models of the selected materials were built according to Refs.~\cite{lim2015origins, eum2020voltage, mccoll2022transition}.}
		\label{table:materials}
	\end{table*}
}

\section{Resource estimation}
\label{sec:resource_estimation}
We now estimate the number of qubits and Toffoli gates needed to implement our pseudopotential-based algorithm for three different Li-excess materials: lithium manganese oxide ($\text{Li}_2\text{MnO}_3$), lithium nickel-manganese oxide ($\text{Li}[\text{Li}_{0.17}\text{Ni}_{0.25}\text{Mn}_{0.58}]\text{O}_2$), denoted as LLNMO, and lithium manganese oxyfluoride ($\text{Li}_{2}\text{MnO}_2\text{F}$). To that end we have built supercell structural models of these materials as described in detail in Refs.~\cite{lim2015origins, eum2020voltage, mccoll2022transition}. The atomic models are shown in \cref{fig:mat-structures}. The lattice parameters of the supercells are summarized in \cref{table:materials}, and the procedure to delithiate the pristine materials is described in \cref{app:models}. 
\begin{figure*}[!ht]
\centering
\includegraphics[width= 1\textwidth]{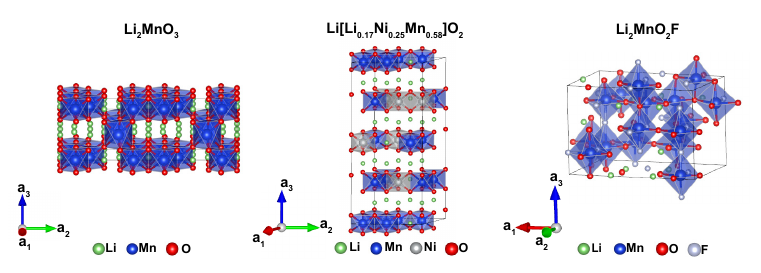}
\caption{Representation of the atomic structures of the Li-excess materials selected to perform the resource estimation of the quantum algorithm. These structural models corresponds to the supercells described in \cref{table:materials}. The polyhedra depict the octahedrally coordinated transition metals. These figures were generated using the VESTA package~\cite{vesta}.}
\label{fig:mat-structures}
\end{figure*}

The resource estimation results targeting chemical accuracy are shown in \cref{table:materials_NPP_vs_NAE_cost}. All calculations are performed using our resource estimation software, which is available at \url{https://github.com/XanaduAI/pseudopotentials}. This code allows us to compute the resources of both the pseudopotential and the all-electron algorithms for any material. In all cases, we see that the pseudopotential-based algorithm has gate counts that are roughly four orders of magnitude lower than in the all-electron case. Qubit numbers are also considerably lower, roughly needing less than half as many qubits. 
{
	\renewcommand{\arraystretch}{1.5}
\begin{table*}[ht]
\centering
		\begin{tabular}{|Sc||Sc|Sc||Sc|Sc|}
 \hline
		\multirow{2}{*}{Material}& \multicolumn{2}{Sc||}{Qubit cost}&\multicolumn{2}{Sc|}{Toffoli cost}\\
        \cline{2-5}
        & PP & AE &  PP & AE \\\hline
		\limno  & \textbf{9808} & 24974  & $\mathbf{5.00\times 10^{15}}$ & 2.16$\times 10^{19}$
        \\\hline
        \limnnio & \textbf{11130} & 29784  & $\mathbf{4.84\times 10^{15}}$ & 3.67$\times 10^{19}$
		\\\hline
		\limnfo  & \textbf{10260} & 26629 &  $\mathbf{3.87\times 10^{15}}$ & 1.18$\times 10^{19}$
		\\\hline
        \dis & \textbf{2650} & 4859 & $\mathbf{6.38 \times 10^{13}}$ & 1.59$\times 10^{17}$\\\hline
	\end{tabular}
	\caption{Resource estimation using $N^{\text{PP}}$ and $N^{\text{AE}}$ plane waves as specified in \cref{table:materials} for the pseudopotential (PP) and the all-electron (AE) algorithm, respectively. For \dis, we have $N^{\text{PP}} = 19,549, N^{\text{AE}} = 5.46\times 10^{7}$ and $\eta^{\text{PP}}=100,\eta^{\text{AE}}=150$. The rest of parameters needed are supplied in \cite{batterypaper}. Better numbers are indicated in bold. Qubit cost corresponds to logical qubits.}
	\label{table:materials_NPP_vs_NAE_cost}
\end{table*}
}

We note that the cost of the pseudopotential (PP) and all-electron algorithms are comparable even when using the same number of planewaves. For \limno~and $N=10^5$, the Toffoli cost for PP is $2.68 \times 10^{14}$, and for the all-electron is $1.44\times 10^{14}$. We refer to the tables in \cref{app:res_est_det_results} for comparisons for each material. The cost competitiveness for simulations using the same number of plane waves is not just the result of only having to use half the number of electrons with the pseudopotential, but also a result of our tailored compilation strategy, which involves a suitable LCU to lower the value for $\lambda$, and the use of QROM for the state preparation method. When targeting the same energy accuracy, the cost of the all-electron algorithm is significantly higher as the number of electrons increases, and a very large number of plane waves are needed to properly describe the quantum system (see~\cref{table:materials}).

The number of plane waves used to perform resource estimation are reported in \cref{table:materials}. These numbers were estimated by performing a convergence analysis of the total energy of the materials using density functional theory calculations with the Perdew-Burke-Ernzerhof (PBE) exchange–correlation functional~\cite{perdew1996generalized} (see \cref{app:NPWs}). Using pseudopotentials, we achieve convergence with roughly $6\times 10^4$ plane waves. For the all-electron calculations, we need approximately $7 \times 10^8$ plane waves.

Calculations using the HGH pseudopotentials were performed using the {\sc Quantum ESPRESSO} package~\cite{giannozzi2009quantum}, and the total energy was computed for the Gamma ($\Gamma$)-point only, which is sufficient to represent the large systems studied in this work~\cite{payne1992iterative}. A cutoff energy of $70~\text{Ry}$ $\approx$ 950 eV was set for both $\text{Li}_{0.5}\text{MnO}_3$ and $\text{Li}_{0.75}\text{MnO}_2\text{F}$, while a cutoff of 80 Ry $\approx$ 1000 eV was set for $\text{Li}[\text{Li}_{0.17}\text{Ni}_{0.25}\text{Mn}_{0.58}]\text{O}_2$ (see Appendix Fig.~\ref{fig:pw-classical}). For the all-electron case, we used the localized augmented plane-wave plus local orbitals (LAPW+lo) method as implemented in the {\sc WIEN2k} code~\cite{blaha2020wien2k}. In this case, a larger number of plane waves are needed to achieve convergence of the total energy. The most important parameters that have to be considered are the muffin-tin radii ($R_\text{MT}$) and the plane-wave cutoff ($K_\text{max}$) for the expansion of the wave function in the interstitial. The basis set cutoff parameter is defined by the product $R_\text{MT}K_\text{max}$, which was set to 7 for all the structures in this work. In all cases, the convergence energy was calculated without applying structure relaxation as this would not significantly affect the total number of plane waves needed to achieve convergence.

Using the data in \cref{table:materials}, we can measure the Toffoli and logical qubit cost of our algorithm. These estimates are benchmarked against the all-electron setting for the same materials. Results are also shown for the dilithium iron silicate cathode \dis, studied earlier in \cite{batterypaper}. We only report the cost of performing one round of quantum phase estimation. A more comprehensive analysis of total cost should include the overhead due to a limited overlap of the initial state and the cost of initial state preparation. We can also consider the T-gate cost of the algorithm, but these are at least an order of magnitude lower than that of the qubitization-based QPE~\cite{batterypaper}.

We also report the depth of the circuits along with the clean portion of the qubit cost in \cref{app:res_est_det_results} illustrating further the advantage of using pseudopotentials and QROMs in our algorithm. We compute this for the following reasons. First, studying depth is the first step towards answering the more complex and interesting question of algorithmic runtime. Second, compared to other state preparation methods, QROM offers far more space-depth trade-off flexibility, something we would like to leverage. Also, given the high costs of the algorithm, it is reasonable to assume that capable hardware would also offer the possibility of simultaneous Toffoli applications. 

To estimate the Toffoli depth, we have to specify certain parameters such as the maximum number $n_{\text{dirty}}$ of dirty qubits available for QROM, the maximum allowed number $n_{\text{tof}}$ of simultaneous Toffoli applications. More details are provided in \cref{app:res_est_det_results}. Notice that due to the relatively significant use of QROMs, the PP-based algorithm stands to benefit more than the AE algorithm on circuit depth reduction. As an example, for \limnnio, the clean qubit cost and Toffoli depth for PP are respectively 11,130 and 9.59$\times 10^{14}$, while for AE they are 29,784 and 3.59$\times 10^{19}$.

Lastly, note that while computing depth gets us one step closer to physical resource estimation, we are still estimating resources at the logical level, i.e. estimating the depth of the logical circuit. One of the assumptions in the logical setting regarding QROM is the all-to-all connectivity of the logical qubits, an assumption that could break down in realistic hardware. However, such geometrical constraints can be overcome through clever architecture designs and lattice surgery techniques \cite{litinski2018lattice,litinski2019game}, showing only a logarithmic overhead of Clifford gates is incurred to bring together far away encoded logical qubits on the hardware.

\section{Conclusions}\label{sec:conclusion}
This work introduced the first example of a quantum algorithm that makes use of ionic pseudopotentials to simulate periodic materials. Our main technical contributions are a collection of highly-optimized compilation strategies for qubitization-based quantum phase estimation in first quantization. They are designed to reduce the cost of implementing the algorithm despite the mathematical complexity of pseudopotential operators. A key ingredient is the use of quantum read-only memories to avoid performing complicated arithmetic operations on a quantum computer. This also helps navigate tradeoffs between the number of qubits and number of gates in the algorithm, which can be exploited to reduce costs. Overall, we reduce the cost of the quantum algorithm by about four orders of magnitude compared to the all-electron approach described in Ref.~\cite{su2021fault} when applied to simulating lithium-excess materials for a fixed target accuracy of the ground-state energy.

Using these tailored quantum algorithms, we estimated the number of qubits and Toffoli gates needed to simulate Li-excess cathode materials proposed in the literature. In each of these cases, a large supercell is necessary to ensure the quality of the simulation, which considerably increases the cost of the all-electron approach, making the use of pseudopotentials even more important. On the other hand, even though our algorithm is applicable to any lattice, it benefits from the orthogonality of the lattice vectors, as this reduces the cost of quantum read-only memory techniques for preparing relevant Gaussian superposition states. It is desirable to study strategies where the cost of implementing the quantum algorithm is less dependent on the specific structure of the lattice.

The accuracy of the quantum algorithm depends on the quality of the pseudopotentials. Therefore, in selecting the Hartwigsen-Goedecker-Hutter pseudopotential we rely on extensive benchmarking by the density functional theory community that have identified it as accurate and transferable~\cite{lejaeghere2016reproducibility}. Additionally, we have constrained our analysis to separable pseudopotentials with one projector per angular momentum channel in the non-local component of the pseudopotential operator. Possible extensions of this work might consider more than one projector and the use of ultrasoft pseudopotentials~\cite{vanderbilt1990} to further reduce the number of plane-wave basis functions as observed in classical algorithms.

Quantum computing offers the potential to perform high-accuracy simulations of strongly-correlated systems of unprecedented size. This is an outstanding challenge for classical methods, that either cannot offer the same accuracy guarantees, or incur prohibitive costs. Still, to realize the potential advantages of quantum computing, further efforts are invaluable to reduce the cost of quantum algorithms. For example, the quality of the initial state preparation method should be addressed, since a single round of quantum phase estimation may need to be repeated too many times for states with poor overlap. 

Overall, we have demonstrated that the method of pseudopotentials can be effectively applied to reduce the cost of quantum algorithms to simulate battery materials that require large supercell structural models. This can potentially unlock the application of quantum computing to address more complicated processes involving doped materials as well as chemical reactions at the electrode-electrolyte interface.

\section{Acknowledgments}
\label{sec:acknowledgments}
We thank Matthew Kiser, Stepan Fomichev, Soran Jahangiri, and Nathan Killoran for their fruitful comments.

\bibliographystyle{plainnat}
\bibliography{main}

\newpage
\onecolumngrid
\appendix
\vspace{1cm}
\begin{center}
    \large{\textbf{Appendix}}
\end{center}
\section{Plane wave matrix elements of the pseudopotential operator}
\label{sec:pp_me}
In this section we derive the expressions for the matrix elements of the local and non-local components of the pseudopotential operator described in~\cref{ssec:hgh} in a plane-wave basis.

\subsection{Matrix elements of the local potential}
\label{ssec:loc_pot_me}
For an ion located at the coordinates $\bm{R}$, the plane-wave matrix element of the local potential is defined by the integral
\begin{equation}
u_{pq}^\mathrm{loc}(\bm{R}) = \frac{1}{\Omega} \int d\bm{r}^3 u^\text{loc}(||\bm{r}-\bm{R}||) e^{i \bm{G}_\nu\cdot \bm{r}},
\label{eq:u_loc_1}
\end{equation}
where $\bm{G}_\nu = \bm{G}_q-\bm{G}_p$. By changing the variable $\bm{r} \to \bm{r} + \bm{R}$ we obtain
\begin{equation}
u_{pq}^\mathrm{loc}(\bm{R}) = \frac{1}{\Omega}e^{i \bm{G}_\nu \cdot \bm{R}} \int d\bm{r}^3 u^\text{loc}(r) e^{i \bm{G}_\nu\cdot \bm{r}},
\label{eq:u_loc_2}
\end{equation}
with $r = \|\bm{r}\|$. In spherical coordinates, the integral above transforms as
\begin{align}
u_{pq}^\mathrm{loc}(\bm{R}) & = \frac{2\pi}{\Omega} e^{i \bm{G}_\nu \cdot \bm{R}} \int_0^\infty dr~u^\text{loc}(r)\int_0^\pi d\theta \cos(G_\nu r \cos(\theta))~r^2\sin(\theta).
\label{eq:u_loc_3}
\end{align}
To integrate the angular variable we use
\begin{align}\label{eq:u_loc_4}
\mathrm{cos}[G_\nu r \mathrm{cos}(\theta)] r^2 \mathrm{sin}(\theta) = r \frac{d}{d\theta} \left[ - \frac{\mathrm{sin}(G_\nu r \mathrm{cos}(\theta))}{G_\nu} \right],
\end{align}
which simplifies \cref{eq:u_loc_3} to the integral over the radial variable
\begin{align}
u_{pq}^\mathrm{loc}(\bm{R}) = \frac{4\pi}{\Omega} & e^{i \bm{G}_\nu \cdot \bm{R}} \left[ \frac{1}{G_\nu} \int_0^\infty dr~r~u^\text{loc}(r)~\text{sin}(G_\nu r) \right].
\label{eq:u_loc_5}
\end{align}
Note that~\cref{eq:u_loc_5} is general and it can be used to compute the matrix element of any given local potential $u^\text{loc}(r)$. For the case of the Hartwigsen-Goedecker-Hutter (HGH) pseudopotential we insert~\cref{eq:hgh_loc} into~\cref{eq:u_loc_5} to obtain the following expression:
\begin{align}
u_{pq}^\mathrm{loc}(\bm{R}) & = \frac{4\pi e^{i \bm{G}_\nu \cdot \bm{R}}}{\Omega}  \left[ \frac{-Z_\mathrm{ion}}{G_\nu} \int_0^\infty dr ~\text{erf}(\alpha r)~\text{sin}(G_\nu r) + \sum_{i=1}^4 \frac{C_i}{G_\nu} \int_0^\infty dr~r~(\sqrt{2}\alpha r)^{2i-2}~e^{-(\alpha r)^2}~\text{sin}(G_\nu r) \right]
\label{eq:loc_6}
\end{align}
where $\alpha=\frac{1}{\sqrt{2} r_\text{loc}}$. The first integral is computed as follows:
\begin{equation}
     \kern-6pt -\frac{Z_\mathrm{ion}}{G_\nu} \int_0^\infty dr ~\text{erf}(\alpha r)~\text{sin}(G_\nu r) =  -\frac{Z_{\mathrm{ion}}e^{-G_\nu^2\alpha^{-2}/4}}{G_\nu^2}.
    \label{eq:loc_7}
\end{equation}

To evaluate the second term we compute the integrals:
\begin{align}
&\frac{C_1}{G_\nu} \int_0^\infty dr~r~e^{-(\alpha r)^2}~\text{sin}(G_\nu r) = C_1 \frac{\sqrt{\pi}}{2} r_\text{loc}^3 e^{-(G_\nu r_\text{loc})^2/2}, \label{eq:loc_8} \\
&\frac{2C_2}{G_\nu} \int_0^\infty dr~r~(\alpha r)^2~e^{-(\alpha r)^2}~\text{sin}(G_\nu r) = C_2 \frac{\sqrt{\pi}}{2} \left[3 r_\text{loc}^3 - r_\text{loc}^5 G_\nu^2 \right] e^{-(G_\nu r_\text{loc})^2/2}, \label{eq:loc_9} \\
& \frac{4C_3}{G_\nu} \int_0^\infty dr~r~(\alpha r)^4~e^{-(\alpha r)^2}~\text{sin}(G_\nu r) = C_3 \frac{\sqrt{\pi}}{2} \left[15 r_\text{loc}^3 - 10 r_\text{loc}^5 G_\nu^2 + r_\text{loc}^7 G_\nu^4 \right] e^{-(G_\nu r_\text{loc})^2/2}, \label{eq:loc_10} \\
& \frac{8C_4}{G_\nu} \int_0^\infty dr~r~(\alpha r)^6~e^{-(\alpha r)^2}~\text{sin}(G_\nu r) = C_4 \frac{\sqrt{\pi}}{2} \left[105 r_\text{loc}^3 - 105 r_\text{loc}^5 G_\nu^2 + 21 r_\text{loc}^7 G_\nu^4 - r_\text{loc}^9 G_\nu^6 \right] e^{-(G_\nu r_\text{loc})^2/2}. \label{eq:loc_11}
\end{align}

Using \cref{eq:loc_7} and Eqs. ~(\ref{eq:loc_8}) to ~(\ref{eq:loc_11}) we obtain a final expression for the matrix elements of the local potential
\begin{align}\label{eq:u_loc_mast_eq}
u_{pq}^\mathrm{loc}(\bm{R}) = \frac{4\pi}{\Omega} e^{i \bm{G}_\nu \cdot \bm{R}}~e^{-(G_\nu r_\text{loc})^2/2} \bigg\{-\frac{Z_{\mathrm{ion}}}{G_\nu^2} + \frac{\sqrt{\pi}}{2} \Big[& C_1 r_\text{loc}^3 + C_2(3r_\text{loc}^3 - 5r_\text{loc}^5 G_\nu^2) \nonumber \\
& + C_3(15r_\text{loc}^3 - 10r_\text{loc}^5 G_\nu^2 + r_\text{loc}^7 G_\nu^4) \nonumber \\
& + C_4(105r_\text{loc}^3 - 105r_\text{loc}^5 G_\nu^2 + 21 r_\text{loc}^7 G_\nu^4 - r_\text{loc}^9 G_\nu^6) \Big] \bigg\}.
\end{align}

\subsection{Matrix elements of the non-local potential}
\label{ssec:non_local_me}

The matrix elements of the non-local potential $u^\text{NL}$ defined in ~\cref{eq:hgh_nl} are given by
\begin{align}
u^\mathrm{NL}_{pq} = \sum_{lm} \sum_{i,j}~\langle \varphi_p \vert \beta_i^{(lm)} \rangle~B_{ij}^{(l)}~\langle \beta_j^{(lm)} \vert \varphi_q \rangle,
\label{eq:gnl_2}
\end{align}
where $\langle \bm{r} \vert \beta_i^{(lm)} \rangle =\beta_i^{(lm)}(\bm{r}-\bm{R}) = \beta_i^{(l)}(||\bm{r}-\bm{R}||)Y_{lm}(\phi,\theta)$, the coefficients $B_{ij}^{(l)}$ are reported in Ref.~\cite{hartwigsen1998}, and $\bm{R}$ denotes the coordinates of an ion in the material. Here $Y_{lm}(\phi,\theta)$ denote spherical harmonics. By changing the variable $\bm{r} \to \bm{r} + \bm{R}$ the overlap integral $\langle \varphi_p \vert \beta_i^{(lm)} \rangle$ writes as
\begin{equation}
\langle \varphi_p \vert \beta_i^{(lm)} \rangle = e^{- i \bm{G}_p \cdot \bm{R}} \int d\bm{r} \varphi_p(\bm{r}) \beta_i^{(lm)}(\bm{r}).
\label{eq:gnl_4}
\end{equation}
The plane wave function $\varphi_p(\bm{r})$ can be expanded in terms of spherical harmonics $Y_{lm}(\phi, \theta)$ and Bessel functions $j_l(G_pr)$ \cite{abramowitz1964handbook}
\begin{equation}
\varphi_p(\bm{r}) = \frac{4\pi}{\sqrt{\Omega}} \sum_{lm} i^l j_l(G_pr) Y_{lm}^*(\hat{\bm{G}}_p) Y_{lm}(\hat{\bm{r}}),
\label{eq:pw_expansion}
\end{equation}
where $\hat{\bm{r}}$ and $\hat{\bm{G}}_p$ are unitary vectors. Using this expansion we compute the overlap in~\cref{eq:gnl_4} as  $\langle \varphi_p \vert \beta_i^{(lm)} \rangle =$
\begin{align}
 & \frac{4 \pi e^{- i \bm{G}_p \cdot \bm{R}}}{\sqrt{\Omega}}  \sum_{l'm'} (-i)^{l'} Y_{l'm'}(\hat{\bm{G}}_p) \int_0^\infty dr~r^2 j_{l'}(G_pr) \beta_i^{(l)}(r) \int_0^{2\pi} d\phi \int_0^\pi d\theta~\mathrm{sin}(\theta) Y_{l'm'}^*(\phi, \theta) Y_{lm}(\phi, \theta) \nonumber\\ \\
& = \frac{4 \pi}{\sqrt{\Omega}} e^{- i \bm{G}_p \cdot \bm{R}} (-i)^{l} Y_{lm}(\hat{\bm{G}}_p) \int dr~r^2 j_{l}(G_pr) \beta_i^{(l)}(r).
\label{eq:gnl_5}
\end{align}
Similarly, the projection $\langle \beta_j^{(lm)} \vert \varphi_q \rangle$ is given by
\begin{equation}
\langle \beta_j^{(lm)} \vert \varphi_q \rangle = \frac{4 \pi}{\sqrt{\Omega}} e^{i \bm{G}_q \cdot \bm{R}} i^l Y_{lm}^*(\hat{\bm{G}}_q) \int dr~r^2 j_{l}(G_qr) \beta_j^{(l)}(r).
\label{eq:gnl_6}
\end{equation}
Plugging Eqs. \eqref{eq:gnl_5}-\eqref{eq:gnl_6} into \cref{eq:gnl_2} we obtain an expression for computing the matrix elements in \cref{eq:gnl_2} for a given set of projectors $\beta_i^l(r)$
\begin{align}
u^\mathrm{NL}_{pq}(\bm{R}) = & \frac{4\pi}{\Omega} e^{i \bm{G}_\nu \cdot \bm{R}} \sum_{l=0}^{l_\mathrm{max}} (2l+1) P_l(\hat{\bm{G}}_p \cdot \hat{\bm{G}}_q) \sum_{i,j}~\langle p \vert i \rangle_l~B_{ij}^{(l)}~\langle j \vert q \rangle_l
\label{eq:gnl_7},
\end{align}
where $\bm{G}_\nu=\bm{G}_q-\bm{G}_p$, $P_l$ is a Legendre polynomial, and $\langle p \vert i \rangle_l$ is an overlap integral over the radial coordinates given by
\begin{equation}
\langle p \vert i \rangle_l = \int dr~r^2 j_{l}(G_pr) \beta_i^{(l)}(r).
\label{eq:gnl_8}
\end{equation}
To compute the matrix elements of the non-local component of the Hartwigsen-Goedecker-Hutter (HGH) pseudopotenial, we use the HGH projectors defined as:
\begin{equation}
\beta_i^{(l)}(r) = A_l^i r^{l+2i-2} \mathrm{exp}\left[ -\frac{1}{2}\left(\frac{r}{r_l}\right)^2 \right],
\label{eq:app_hgh_gaussians}
\end{equation}
where the radii $r_l$ give the range of the $l$-dependent projectors, and
\begin{equation}
A_l^i= \frac{\sqrt{2}}{r_l^{l+(4i-1)/2}\Gamma(l+\frac{4i-1}{2})},
\label{eq:cnst_hgh_pp}
\end{equation}
with $\Gamma$ denoting the gamma function. For the case of one projector ($i=1$) and $l_\text{max} \leq 2$, the overlap integrals defined by~\cref{eq:gnl_8} are calculated as:
\begin{align}
\langle p \vert 1 \rangle_0 & =  \int_0^\infty dr~r^2 \frac{\mathrm{sin}(G_pr)}{(G_p r)}~\beta_1^{(0)}(r) = 2 r_0^{3/2}~e^{-(G_pr_0)^2/2}, \label{eq:hgh_9} \\ \nonumber \\
\langle p \vert 1 \rangle_1 & = \int_0^\infty dr~r^2 \left[ \frac{\mathrm{sin}(G_p r)}{(G_p r)^2} - \frac{\mathrm{cos}(G_pr)}{(G_p r)} \right]~\beta_1^{(1)}(r) =\frac{4}{3} r_1^{5/2}~G_p~e^{-(G_pr_1)^2/2},
\label{eq:hgh_10} \\ \nonumber \\
\langle p \vert 1 \rangle_2 &= \int_0^\infty dr~r^2 \left[ \frac{3\mathrm{sin}(G_pr)}{(G_p r)^3}  - \frac{\mathrm{sin}(G_pr)}{(G_p r)} - \frac{3\mathrm{cos}(G_pr)}{(G_p r)^2} \right]~ \beta_1^{(2)}(r)=\frac{8}{15}r_2^{7/2}~G_p^2~e^{-(G_pr_2)^2/2} \label{eq:hgh_11}.
\end{align}
Using~\eqref{eq:hgh_9}-\eqref{eq:hgh_11} we compute the product of the projections $\braket{p|i}_l  \braket{i|q}_l$ entering~\cref{eq:gnl_7} to obtain the final expression for the matrix elements
\begin{align}
u^\mathrm{NL}_{pq}(\bm{R}) = \frac{4\pi}{\Omega}  e^{i \bm{G}_\nu \cdot \bm{R}} \Bigg\{ 
& 4B_0 r_0^3~e^{-(G_p^2+G_q^2)r_0^2/2} + \frac{16B_1 r_1^5}{3}~(\bm{G}_p \cdot \bm{G}_q)~e^{-(G_p^2+G_q^2)r_1^2/2} +
\nonumber \\
& \left[ \frac{32B_2 r_2^7}{15}~(\bm{G}_p \cdot \bm{G}_q)^2 + \frac{32B_2 r_2^7}{45}~(G_p G_q)^2\right]e^{-(G_p^2+G_q^2)r_2^2/2} \Bigg\},
\label{eq:hgh_12}
\end{align}
where the coefficient $B_l:=B_{11}^{(l)}$ in~\cref{eq:gnl_7}.
\begin{figure}[!ht]
\centering
\includegraphics[width=1 \columnwidth]{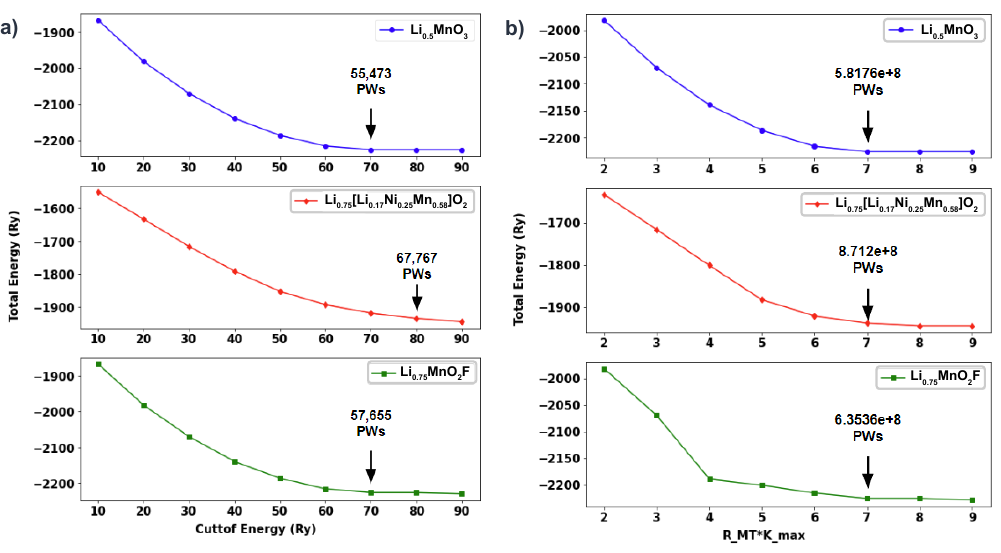}
\caption{Convergence of the ground-state energy for the selected materials computed using DFT at the level of the PBE-GGA exchange-correlation functional~\cite{perdew1992atoms}. a) Energies computed using plane waves and the HGH pseudopotentials as implemented in the {\sc Quantum ESPRESSO} package. b) Energies obtained for the all-electron calculations using {\sc WIEN2k} as a function of the cutoff parameters $R_\text{MT}$ and $K_\text{max}$. The number of plane waves used to perform the resource estimations for each material is indicated with arrows.}
\label{fig:pw-classical}
\end{figure}

\section{Structural models of lithium-excess materials}\label{app:models}
Here we provide more details on how structural models were built for the lithium-excess materials studied in this work. These models are used for determining the number of plane waves needed for convergence of the total energy, computed using density functional theory, and for resource estimation of the quantum algorithm. 

For $\text{Li}_{0.5}\text{MnO}_3$, we first modify the position of the Mn atoms in the pristine structure to generate the spinel-like phase simulated in Ref~\cite{lim2015origins}. Thus, we delithiated the material following the results reported in Ref.~\cite{lee2014structural}. The structural model for the LLNMO material corresponds to the most stable structure of the $\text{Li}_{0.75}[\text{Li}_{0.17}\text{Ni}_{0.25}\text{Mn}_{0.58}]\text{O}_2$ material predicted in Ref.~\cite{eum2020voltage}. Similarly, we removed one lithium layer from the supercell to estimate the resources to compute the site energies reported in Ref.~\cite{eum2020voltage}. Finally, the structural model for the delithiated $\text{Li}_{0.75}\text{MnO}_2\text{F}$ material was obtained by randomly taking Li ions off the pristine material until we match the Li-ion concentration of $x=0.75$. The relaxed structure of the fully lithiated material was taken from Ref.~\cite{sharpe2020redox}.

\section{Number of plane waves used to perform resources estimation of the quantum algorithm}\label{app:NPWs}

The purpose of estimating the number of plane waves is to have an appropriate number that we can use to perform resource estimations of the quantum algorithm. To this aim, we estimate the number of plane waves needed to perform a classical calculation using the structures studied in this work. We conducted an energy convergence test by changing the cutoff energy and calculating the total energy at the $\Gamma$-point (Fig.~\ref{fig:pw-classical}). Each point in Fig.~\ref{fig:pw-classical}a plot corresponds to the total energy value, calculated by using the {\sc Quantum ESPRESSO} package, for a given kinetic energy cutoff. Fig.~\ref{fig:pw-classical}b) depicts a similar analysis conducted using the {\sc WIEN2k} program, an all-electron approach, where the cutoff in the reciprocal space is given by the ($R_\text{MT}$ and $K_\text{max}$) parameter. The number of plane waves was chosen to ensure that the total energy variation was less than 1 kcal/mol = 0.043 eV (chemical accuracy).

\section{Linear combination of unitaries for the non-local potential}\label{app:U_NL_LCU}
This section provides details on the construction of the linear combination of unitaries (LCU) decomposition for the non-local term of the pseudopotential operator.  The LCU for the non-local term exploits the projector representation of the operator, and breaks down each projection $P$ to the sum of unitaries $\frac{1}{2}\mathbbm{1} - \frac{1}{2}(\mathbbm{1}-2P)$. Below we demonstrate how this is done in details. Recall
\begin{align}
U_{NL} = & \sum_{j=1}^\eta \sum_{I=1}^L f_I(\bp,\bq) \ket{\bp}_j \bra{\bq}_j, \\ \label{eq:f_Ipq}
f_I(\bp,\bq) = \frac{4\pi}{\Omega}e^{i \bG_{\bnu} \cdot \bR_I} \Bigg\{& 4r_0^3B_0~e^{-(G_p^2+G_q^2)r_0^2/2} + \frac{16r_1^5B_1}{3}~(\bm{G}_p \cdot \bm{G}_q)~e^{-(G_p^2+G_q^2)r_1^2/2} \nonumber \\
& + \left[ \frac{32 r_2^7B_2}{15}~(\bm{G}_p \cdot \bm{G}_q)^2 + \frac{32r_2^7B_2}{45}~(G_p G_q)^2\right]e^{-(G_p^2+G_q^2)r_2^2/2} \Bigg\}.
\end{align}
We decompose $U_{NL}$ by focusing on each individual term in the function $f_I$. Define:
\begin{align}\label{eq:Psi_I,0}
    \ket{\Psi_{I,0}} &:= \frac{1}{\sqrt{\sum_{\bp} e^{-G_p^2r_0^2}}}\sum_{\bp} e^{-G_p^2r_0^2/2}\ket{\bp}.
\end{align}
Now for the term in \cref{eq:f_Ipq} including $r_0$, we define
\begin{align}
    U_{NL,0} &:= \frac{4\pi }{\Omega} \sum_{j=1}^\eta \sum_{I=1}^L \left(\sum_{\bp} e^{-G_p^2r_0^2}\right)4r_0^3B_0  R(\bR_I)^\dagger\ket{\Psi_{I,0}}\bra{\Psi_{I,0}}R(\bR_I) \nonumber \\
    &= \sum_{I=1}^L\frac{8\pi r_0^3B_0 \eta   (\sum_{\bp} e^{-G_p^2r_0^2})}{\Omega  }\mathbbm{1}+ \sum_{j=1}^\eta \sum_{I=1}^L \frac{-8\pi r_0^3B_0(\sum_{\bp} e^{-G_p^2r_0^2})}{\Omega  }R(\bR_I)^\dagger(\mathbbm{1}-2\ket{\Psi_{I,0}}\bra{\Psi_{I,0}})R(\bR_I),
\end{align}
where $R(\bR_I)\ket{\bp} = e^{i\bG_p \cdot \bR_I}\ket{\bp}$ is a phase action and we have suppressed the electron index $j$ for the phase and for the register in state $\ket{\Psi_{I,0}}$. Thus, in the formula above, using the same notations of \cref{eq:lcu_NL} in the main text, for $\sigma=0$ we have 
\begin{align}
c_{I,0} := \frac{-8\pi r_0^3B_0 (\sum_{\bp} e^{-G_p^2r_0^2})}{\Omega }.
\end{align}
The sign of $c_{I,0}$ only depends on $B_0$, i.e., on the atomic type of $I$. For the next term in $U_{NL}$ involving $r_1$, we adopt the notation $\sigma=(1,\omega)$. For $\omega \in \{x,y,z\}$, we define 
\begin{align}\label{eq:Psi_I,1}
    \ket{\Psi_{I,1,\omega}} &:= \frac{1}{\sqrt{\sum_{\bp} G_{p,\omega}^2e^{-G_p^2r_1^2}}}\sum_{\bp}  G_{p,\omega} e^{-G_p^2r_1^2/2}\ket{\bp},
\end{align}
where $\bG_p = (G_{p,1}, G_{p,2}, G_{p,3})$, and we have the following decomposition depending on the coordinate $\omega$:
\begin{align}
    U_{NL,1,\omega} &:= \frac{4\pi }{\Omega} \sum_{j=1}^\eta \sum_{I=1}^L \left(\sum_{\bp} G_{p,\omega}^2e^{-G_p^2r_1^2}\right)\frac{16r_1^5B_1}{3}  R(\bR_I)^\dagger\ket{\Psi_{I,1,\omega}}\bra{\Psi_{I,1,\omega}}R(\bR_I) \nonumber\\
    &= \sum_{I=1}^L\frac{32\pi r_1^5B_1 \eta  (\sum_{\bp} G_{p,\omega}^2 e^{-G_p^2r_1^2})}{3\Omega  } \mathbbm{1} + \nonumber \\
    &\sum_{j=1}^\eta \sum_{I=1}^L \frac{-32\pi r_1^5B_1(\sum_{\bp} G_{p,\omega}^2e^{-G_p^2r_1^2})}{3\Omega  }R(\bR_I)^\dagger(\mathbbm{1}-2\ket{\Psi_{I,1,\omega}}\bra{\Psi_{I,1,\omega}})R(\bR_I)
\end{align}
implying
\begin{align}
    c_{I,1,\omega} =  \frac{-32\pi r_1^5B_1(\sum_{\bp} G_{p,\omega}^2e^{-G_p^2r_1^2})}{3\Omega  }.
\end{align}
For the term including the scalar $32r_2^7B_2/45$, corresponding to the index $\sigma=(2,0)$, we define 
\begin{align}\label{eq:Psi_I,2,0}
    \ket{\Psi_{I,2,0}} &:= \frac{1}{\sqrt{\sum_{\bp} G_p^4e^{-G_p^2r_2^2}}}\sum_{\bp}  G_p^2 e^{-G_p^2r_2^2/2}\ket{\bp},
\end{align}
and note the following decomposition:
\begin{align}
    U_{NL,2,0} &:= \frac{4\pi }{\Omega} \sum_{j=1}^\eta \sum_{I=1}^L (\sum_{\bp} G_p^4e^{-G_p^2r_2^2}) \frac{32r_2^7B_2}{45}  R(\bR_I)^\dagger\ket{\Psi_{I,2,0}}\bra{\Psi_{I,2,0}}R(\bR_I) \nonumber \\
    &= \sum_{I=1}^L\frac{64\pi r_2^7 B_2\eta (\sum_{\bp} G_p^4 e^{-G_p^2r_2^2})}{45\Omega  }\mathbbm{1}+ \nonumber \\
    &\sum_{j=1}^\eta \sum_{I=1}^L \frac{-64\pi r_2^7B_2(\sum_{\bp} G_p^4e^{-G_p^2r_2^2})}{45\Omega  }R(\bR_I)^\dagger(\mathbbm{1}-2\ket{\Psi_{I,2,0}}\bra{\Psi_{I,2,0}})R(\bR_I),
\end{align}
implying
\begin{align}
    c_{I,2,0} = \frac{-64\pi r_2^7 B_2 (\sum_{\bp} G_p^4e^{-G_p^2r_2^2})}{45\Omega}.
\end{align}
Finally, for the term including the scalar $\frac{32r_2^7B_2}{15}$ in \cref{eq:f_Ipq}, corresponding to the index $\sigma=(2,(\omega,\omega'))$ where $\omega\neq\omega' \in \{x,y,z\}$, we first define
\begin{align}\label{eq:Psi_I,2,w}
    \ket{\Psi_{I,2,(\omega,\omega')}} &:= \frac{1}{\sqrt{\sum_{\bp} (G_{p,\omega}G_{p,\omega'})^2e^{-G_p^2r_2^2}}}\sum_{\bp}  (G_{p,\omega}G_{p,\omega'})e^{-G_p^2r_2^2/2}\ket{\bp} 
\end{align}
and derive the following decomposition
\begin{align}
    U_{NL,2,(\omega,\omega')} &:= \frac{4\pi }{\Omega} \sum_{j=1}^\eta \sum_{I=1}^L\left(\sum_{\bp} (G_{p,\omega}G_{p,\omega'})^2e^{-G_p^2r_2^2}\right) \frac{32r_2^7B_2}{15}  \ket{\Psi_{I,2,(\omega,\omega')}}\bra{\Psi_{I,2,(\omega,\omega')}}  \nonumber \\ 
    &= \sum_{I=1}^L\frac{64\pi r_2^7B_2 \eta   \left(\sum_{\bp} (G_{p,\omega}G_{p,\omega'})^2e^{-G_p^2r_2^2}\right)}{15\Omega  }\mathbbm{1}+ \nonumber \\
    &\sum_{j=1}^\eta \sum_{I=1}^L \frac{-64\pi r_2^7B_2\left(\sum_{\bp} (G_{p,\omega}G_{p,\omega'})^2e^{-G_p^2r_2^2}\right)}{15\Omega  }R(\bR_I)^\dagger(\mathbbm{1}-2\ket{\Psi_{I,2,(\omega,\omega')}}\bra{\Psi_{I,2,(\omega,\omega')}})R(\bR_I),
\end{align}
implying
\begin{align}
    c_{I,2,(\omega,\omega')} = \frac{-64\pi r_2^7 B_2\left(\sum_{\bp} (G_{p,\omega}G_{p,\omega'})^2e^{-G_p^2r_2^2}\right)}{15\Omega }.
\end{align}
This finishes the LCU for $U_{NL}$:
\begin{align}
U_{NL} = U_{NL,0}+\sum_{\omega} U_{NL,1,\omega} + U_{NL,2,0} + \sum_{\omega\neq \omega'} U_{NL,2,(\omega,\omega')}.
\end{align}
In each of the above decompositions, there is a multiple of the identity, which are not considered in the qubitization as we can shift the Hamiltonian by the appropriate scalar. Throughout the text, we use the alternative indexing $0\le \sigma \le 10$ to index the operators above in the order they were derived.

\section{QROM: application, parallelization and costs}\label{app:QROM_all_apps}
This section of the appendix provides more details on how quantum read-only memories (QROMs) can be used to prepare arbitrary superposition states. It also explains the cost of implementing a QROM and the space-time tradeoffs that arise. 

\subsection{Using QROM to prepare superpositions}\label{app:QROM_superposition}
As mentioned in \cref{ssec:background_qrom}, there are three different types of QROMs that one can use. Two of these, called \textsc{Select} and \textsc{SelSwapDirty}, are of interest to us. \cref{fig:qroms_circuit} provides an overview of their circuit implementation. We now prove the error estimate in \cref{lem:qrom_superposition_error}. We use the same notations as in \cref{ssec:background_qrom}, which we briefly recall here. The target state is $\ket{\psi}=\sum_x a_x \ket{x}$, where we assume $a_x\geq 0$ for simplicity. For any bit-string $y$ of length $w\le n$, define $p_y = \sum_{\text{prefix}_w(x)=y} |a_x|^2$, $\cos(\theta_y)= \sqrt{p_{y0}/p_y}$, and the QROM oracles $O_w \ket{y}\ket{\bm{0}} = \ket{y}\ket{\theta_y}$, outputting $\theta_y$ up to $b$ bits of precision for each $w$. The precise form of \cref{lem:qrom_superposition_error} assumes an exact preparation for the synthesis of rotation $R$.
\begin{figure}[!h]
	\centering
 \resizebox{\textwidth}{!}{
	\begin{tabular}[t]{ll}
		\begin{tabular}[t]{lc}
			a)&			\includegraphics[trim={0 0 0 0},clip]{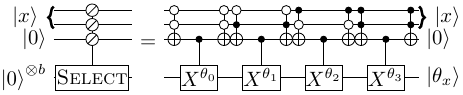}\vspace{0.2cm}
		\end{tabular}&
		\begin{tabular}[t]{lc}
			b)&		\includegraphics[trim={0 0 0 0},clip]{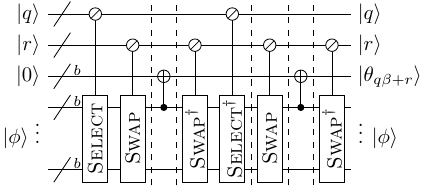}
		\end{tabular}
	\end{tabular}}
	\caption{(\cite[Fig. 1]{low2018trading}) (a) The \textsc{Select} $\sum_{x=0}^{N-1}\ket{x}\bra{x}\otimes X^{\theta_x}$ with $N=4$. The symbol $\oslash$ indicates control by a number state. This QROM variant is the most simple and useful for reading a small number of qubits, and it has no garbage. Its cost becomes prohibitive when reading a large number of qubits $\log(N)$. (b) The \textsc{SelSwapDirty} QROM is a cost-friendly alternative when reading a large number of qubits. It uses an additional $\lceil\log N\rceil+b$ clean qubits and $b\beta$ dirty qubits to implement the data-lookup oracle without garbage with $\beta \in [1,N]$ being the programmer's choice to trade-off gate-qubit cost. See \cite{low2018trading} for detailed implementation.}
	\label{fig:qroms_circuit}
\end{figure}
\begin{lem}\label{lem:qrom_superposition_error}
Assuming an exact rotation synthesis for $R$ and real positive amplitudes $a_x \ge 0$ for the target state, the error of the QROM state preparation method in \qromalgo~is $2^{-b}\pi n$.
\end{lem}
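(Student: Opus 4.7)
The plan is to induct on the iteration index $w$, comparing the actual state produced by \qromalgo\ (using $b$-bit rotation angles) with the ideal state that would result from applying rotations by the exact angles $\theta_y$. Denote by $\ket{\tilde\psi_w}$ the actual state after $w$ rounds of the procedure and by $\ket{\psi_w}$ the ideal analogue. I would establish the inductive hypothesis
\[
\bigl\lVert \ket{\tilde\psi_w} - \ket{\psi_w} \bigr\rVert \;\le\; w\,\pi\,2^{-b},
\]
with base case $w=0$, where both states equal $\ket{\bm{0}}\ket{\bm{0}}\ket{\bm{0}}$ and the error is $0$.

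For the inductive step, I would trace through the three operations applied in the $w{+}1$-st round: $O_w$, $R_{w+1}$, and $O_w^\dagger$. The QROM oracles $O_w$ and $O_w^\dagger$ are exact unitaries once the classical table of $b$-bit truncations $\tilde\theta_y$ is fixed, so they contribute no additional error beyond the distortion already present in $\ket{\tilde\psi_w}$ (unitaries preserve the $\ell^2$ distance). The only source of new error is the controlled single-qubit rotation $R_{w+1}$, which rotates by $\tilde\theta_y$ instead of $\theta_y$. Because $\tilde\theta_y$ is the $b$-bit truncation of $\theta_y \in [0,\pi/2]$, we have $|\tilde\theta_y - \theta_y| \le \pi\,2^{-b}$, and the standard bound
\[
\bigl\lVert R(\theta) - R(\tilde\theta) \bigr\rVert \;=\; 2\,\bigl|\sin\!\bigl((\theta-\tilde\theta)/2\bigr)\bigr| \;\le\; |\theta - \tilde\theta|
\]
gives that applying the inexact controlled rotation to the ideal state $\ket{\psi_w}$ (after $O_w$) deviates from the exact controlled rotation by at most $\pi\,2^{-b}$ in operator norm, hence also in state norm. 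Combining this with the inductive error bound via the triangle inequality and the unitary invariance of $O_w^\dagger$ yields
\[
\bigl\lVert \ket{\tilde\psi_{w+1}} - \ket{\psi_{w+1}} \bigr\rVert \;\le\; w\,\pi\,2^{-b} + \pi\,2^{-b} \;=\; (w+1)\,\pi\,2^{-b}.
\]

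Setting $w = n$ yields the claimed bound $\pi n\,2^{-b}$. I do not expect a serious obstacle in this argument; the main subtlety is just verifying that the per-step rotation error is indeed at most $\pi\,2^{-b}$ (which depends on how the $b$-bit encoding of $\theta_y$ is fixed), and ensuring the accounting is clean once the $O_w^\dagger$ is applied to possibly entangled ancillae. The overall phase-aware version (allowing the extra $\pm 1$ phase $\phi_x$ applied at the last iteration) introduces no additional error, as that phase is realized by an exact Clifford ($Z$) gate.
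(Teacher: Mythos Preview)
Your proposal is correct and follows essentially the same approach as the paper: induct on $w$, use that $O_w$ and $O_w^\dagger$ are exact unitaries (hence isometries), bound the per-step rotation error by $\pi\,2^{-b}$, and accumulate via the triangle inequality to get $n\pi\,2^{-b}$. The paper writes out the error state $\ket{\bm{\error}_w}$ explicitly and bounds the new contribution as $(\sum_y p_y\,\error_{b,y}^2)^{1/2}\le\max_y\error_{b,y}$, whereas you phrase it via the operator-norm bound $\lVert R(\theta)-R(\tilde\theta)\rVert\le|\theta-\tilde\theta|$; these are the same estimate in slightly different packaging.
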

\begin{proof}
We follow the same inductive process of the algorithm to estimate the final error. Assuming an error state $\ket{\bm{\error}_w}$ with norm $\error_w$ for the inductive step up to $w \in  \{1,\ldots, n\}$, and an error state $\ket{\bm{\error}_{b,y}}$ with norm $\error_{b,y}$ for each rotation, the next step gives
\begin{align}
&\left(\sum_{y\in\{0,1\}^w}\sqrt{p_y}\ket{y}+\ket{\bm{\error}_w}\right)\ket{0}\ket{\bm{0}}  \underset{O_w}{\mapsto} \sum_{y\in\{0,1\}^w}\sqrt{p_y}\ket{y}\ket{0}\ket{\theta_y}+O_w\ket{\bm{\error}_w}\ket{0}\ket{\bm{0}} \underset{R_w}{\mapsto} \\  
&\sum_{y\in\{0,1\}^w}\sqrt{p_y}\ket{y}\left(\sqrt{\frac{p_{y0}}{p_y}}\ket{0}+\sqrt{\frac{p_{y1}}{p_y}}\ket{1} + \ket{\bm{\error}_{b,y}}\right)\ket{\theta_y}+R_wO_w\ket{\bm{\error}_w}\ket{0}\ket{\bm{0}} \underset{O_w^\dagger}{\mapsto} \\
&\left(\sum_{y\in\{0,1\}^{w+1}}\sqrt{p_y}\ket{y}\ket{\bm{0}} + \sum_{y\in\{0,1\}^{w}}\sqrt{p_y}\ket{y}\ket{\bm{\error}_{b,y}}\ket{\bm{0}}\right)  + O_w^\dagger R_wO_w\ket{\bm{\error}_w} \ket{0}\ket{\bm{0}}
\end{align}
We observe $\ket{\bm{\error}_{w+1}} = \sum_{y\in\{0,1\}^{w}}\sqrt{p_y}\ket{y}\ket{\bm{\error}_{b,y}}\ket{\bm{0}}  + O_w^\dagger R_wO_w\ket{\bm{\error}_w}$ with norm $\error_{w+1} \le (\sum p_y \error_{b,y}^2)^{1/2} +  \error_w \le  \error_b  + \error_w$, where $\error_b = \max\limits_{1\le w \le n,y \in \{0,1\}^w} \error_{b,y}$. Therefore, we have $\error_{n} \le n \error_b$. The statement follows as $\error_b \le 2^{-b}\pi$ (\cite[Eq. (55)]{su2021fault}).
\end{proof}

\begin{rmk}\label{rmk:rot_syn_cost}
To synthesize a rotation, one needs a so-called gradient state. Given access to a gradient state, the synthesis of an exact rotation has cost $b-3$ Toffolis \cite[Eq. (55)]{su2021fault}. We note that the gradient state used in all single-qubit rotations synthesis in our algorithm is precomputed. Even if we were to consider the error, the cost added is polylogarithmic and can be safely ignored in our resource estimations.
\end{rmk}
\begin{rmk}\label{rmk:last_iter_qrom_phase}
Assuming the amplitudes are not positive, there is one last iteration of the algorithm which we did not include in the previous estimate. If the phase $\phi_x = \text{arg}[a_x/|a_x|]$ is nontrivial, then a QROM and its inverse reading $n$ qubits along with a rotation are needed. Thus, in general, the error of this last rotation needs to be added to the estimate above. However, we explain below why there is no error for our specific cases.\\

Our applications are either part of the PREP subroutine or part of $\SEL_{NL}$. For $\SEL_{NL}$, the preparation of the Gaussian states $\ket{\Psi_{I,\sigma}}$ of certain types $\sigma$ has amplitudes with a phase that is $\pm 1$. In those cases, the rotation is a $Z$ gate which has no error. For the PREP subroutines, such as those in \cref{eq:prep_loc_state,eq:prep_nl_state} for the momentum state of $U_{loc}$ and the PREP state corresponding to $U_{NL}$, we notice a register $s_{loc},s_{NL}$ that holds a sign which is later used in the corresponding SEL subroutine. The cost of computing this is one last QROM that reads $n$ qubits to compute the sign. Again, there is no error in this last step.
\end{rmk}

\subsection{QROM gate and qubit costings}\label{app:QROM_gate_qubit_costings}
In this section, we list the costs that are relevant for performing resource estimation of various parts of the algorithm that use variants of QROM. \cref{table:qroms_cost} further below, which is a copy of \cref{table:qroms_cost_main_text} in the main text \ is used to derive the estimates, along with a careful examination of the cost of each step of \qromalgo. In some cases, we are simply recomputing some of the estimates in \cite[App. D.b]{low2018trading} but with more precision. First, we recall the Toffoli cost of a single QROM for our variants of interest:
\begin{align}\label{eq:qrom_gate_cost_output_sel_variant}
    \textsc{Select} \ &: \  2^{n} \\
    \label{eq:qrom_gate_cost_output_1d}
    \textsc{SelSwapDirty} \ &: \  3b \beta +2\left\lceil \frac{2^{\log(N)}}{\beta}\right\rceil.
\end{align}
Using the above for \textsc{SelSwapDirty} and recalling \cref{rmk:rot_syn_cost}, we estimate the Toffoli cost of \qromalgo~for $N=2^n$. It is given by
\begin{align}
    2\left(3b \beta n + 2\sum_{w=0}^{n}\left\lceil\frac{2^w}{\beta}\right\rceil\right) + (b-3)n,
\end{align}
which is bounded by
\begin{align}\label{eq:qrom_gate_cost_superposition_1d}
     2\left(3b \beta n  + 2\left\lceil\frac{2^{n +1}-1}{\beta}\right\rceil+2n\right) + (b-3)n.
\end{align} 
The leading factor of two is due to the application of $O_w$ and its inverse in the iterative process. Further, the sum over $w$ goes from $0$ to $n$, instead of $0$ to $n-1$, since in the last iteration we need to also output the phase $\phi_x = \text{arg}[a_x/|a_x|]$, see~\cref{rmk:last_iter_qrom_phase}. 

To find the optimal value of $\beta$, one has to also take into account the maximum number of available dirty qubits $n_{\text{dirty}}$. This number is set equal to the total number of qubits on the circuit, minus the ones already used as clean qubits in the QROM itself. As the QROM consumes $\beta b$ many dirty qubits, we must have $\beta b \le n_{\text{dirty}}$. Since the leading term in the cost is always $2(2^{n +1}-1)$, one can show that the optimal value of $\beta$ satisfying its constraint is
\begin{align}\label{eq:optimal_beta_cost}
    \beta = \left\lfloor \min\left(\sqrt{\frac{2(2^{n+1}-1)}{3bn}},\frac{n_{\text{dirty}}}{b}\right)\right\rfloor.
\end{align}
Lastly, the cost for preparing a state using the \textsc{Select} variant of QROM is
\begin{align}\label{eq:qrom_gate_cost_superposition_sel_variant}
    2(2^{n +1}-1) + (b-3) n.
\end{align}
The qubit costings are already described in \cref{table:qroms_cost}. Note the clean qubit cost $b+\lceil \log(N) \rceil $ is independent of $\beta$.
\begin{rmk}\label{rmk:why_select_qrom}
    The variant of choice for QROM when applied on very few qubits is \textsc{Select}. Its implementation, as shown in \cref{fig:qroms_circuit}, along with its cost analysis, is far simpler than that of \textsc{SelSwapDirty}.
\end{rmk}
{
	\renewcommand{\arraystretch}{1.1}
	\begin{table}[ht]
 \centering
		\begin{tabular}{|Sc|Sc|Sc|Sc|}
  \hline
			\shortstack{Operation\\$\;$}&\shortstack{Additional qubits\\$\;$}&\shortstack{Toffoli Depth\\$\;$}&\shortstack{Toffoli count\\ $\le\cdot+\mathcal{O}(\log\cdot)$}\\
			\hline
			\textsc{Select}&$b+\lceil\log{N}\rceil$&$N$&$N$
			\\\hline
			{\textsc{SelSwapDirty}}&$b(\beta+1) +\lceil\log{N}\rceil$&$2\lceil\frac{N}{\beta}\rceil+3\lceil\log{\beta\rceil}$&$2\lceil\frac{N}{\beta}\rceil+3b\beta$\\\hline
		\end{tabular}
		\caption{(\cite[Table II]{low2018trading}) Gate and qubit cost of \textsc{Select} and \textsc{SelSwapDirty} QROMs. The space-depth trade-off is determined by $\beta\in[1,N]$. Note that $b\beta$ qubits of the circuit shown in~\cref{fig:qroms_circuit}b are dirty qubits, while $b+\lceil \log(N) \rceil $ are clean qubits.}
		\label{table:qroms_cost}
	\end{table}
}

\subsection{QROM Toffoli depth}\label{app:qrom_parallelization}
\textsc{SelSwapDirty} offers considerable flexibility in controlling the Toffoli depth. As shown in \cref{fig:qroms_circuit}, the implementation of the operator SWAP in \textsc{SelSwapDirty} involves $\beta$-controlled swap operations on $\beta+1$ registers, each of size $b$. Without any parallelization, the depth would be $\beta b $. However, such an operation can be extensively parallelized on a circuit to a depth of $b\log(\beta)$ as explained in \cite[App. B.2.b]{low2018trading}, using simply controlled swap operations applied on two registers of size $b$. The parameters involved in this depth reduction and considered for the purpose of resource estimation, are 
\begin{itemize}
    \item $\kappa$: This factor is between $1$ and $b$ and determines the extent to which we further parallelize the circuit by implementing $\kappa$ of the $b$ many controlled swap operations simultaneously. This would bring down the depth to $(b/\kappa)\log\beta$. The assumption in \cite{low2018trading} is that $\kappa = b$, meaning all controlled swap operations are applied in parallel.
    \item $n_{\text{tof}}$: The maximum allowed number of simultaneous Toffoli application.
    \item $n_{\text{dirty}}$: Already defined in \cref{app:QROM_gate_qubit_costings}.
\end{itemize}
It can be shown that the depth of \textsc{SelSwapDirty} on $n$ qubits is:
\begin{align}\label{eq:qrom_gate_cost_output_1d_parallelized_v2}
    3\left\lceil \frac{b}{\kappa} \right\rceil \lceil \log\beta\rceil  +2\left\lceil \frac{2^{n}}{\beta}\right\rceil
\end{align}
As mentioned, the choice of $\kappa = b$ gives \cref{table:qroms_cost}, but we have the following constraints on the number of simultaneous Toffoli applications and dirty qubits: 
\begin{align}\label{eq:beta_nparallel}
&\beta \kappa \le n_{\text{tof}} \\\label{eq:beta_ndirty}
&\beta b \le n_{\text{dirty}}.
\end{align}
Finally, the depth of \qromalgo~is
\begin{align}\label{eq:qrom_gate_cost_superposition_1d_parallelized_v2}
        2\left(3\left\lceil \frac{b}{\kappa}\right\rceil \left\lceil\log\beta\right\rceil n+2\left\lceil\frac{2^{n +1}-1}{\beta}\right\rceil+2n\right) + (b-3) n.
\end{align}
Given the range of parameters in our case studies, we always have $b,n \ll 2^{n+1}-1$. Thus, the optimal value of $\beta$ satisfying the constraints is
\begin{align}\label{eq:optimal_beta_depth}
    \beta = \left\lfloor \min\left(\frac{n_{\text{dirty}}}{b}, \frac{n_{\text{tof}}}{\kappa}, \frac{2\cdot (2^{n+1}-1)}{3bn/\kappa} \log_e 2\right)\right\rfloor,
\end{align}
where $\log_e$ is the logarithm in the natural basis. Note that if $n_{\text{dirty}},n_{\text{tof}}$ are large enough, for example $\sim O(2^{n+2})$, then we could set $\kappa = b$ and the minimum possible depth would be achieved with $\beta = \lfloor \frac{2\cdot (2^{n+1}-1)}{3bn/\kappa} \log_e 2\rfloor $:
\begin{align}\label{eq:min_possible_depth}
    2\Big(3 \big(n+2-\log(3n)+\log(\log_e 2 )\big)n+\frac{3n}{\log_e 2}+2n\Big) + (b-3) n \ \sim \ O(n^2+(b-3)n).
\end{align}

\section{Inequality test} \label{app:Ineq_test}
In this section, we review and generalize to general lattices the inequality test technique used for the preparation of the momentum state superposition for the all-electron Hamiltonian. Recall that this state is shared by the PREP state of $U$ and $V$. The task is to prepare:
\begin{align}\label{eq:prep_U_V_common_momentum_state}
    \frac{1}{\sqrt{\lambda_{\nu}}}\sum_{\nu \in \mathcal{G}_0} \frac{1}{\|\bm{\nu}\|}\ket{\nu_x}\ket{\nu_y}\ket{\nu_z},
\end{align}
where $\lambda_\nu$ is a normalization factor. There are two main steps, with the first independent from the lattice structure.
\begin{enumerate}
    \item Prepare a unary-encoded register on $\mu=2$ to $n_p+1$, i.e.,
    $\frac{1}{\sqrt{2^{n_p+2}}}\sum_{\mu= 2}^{n_p+1}\sqrt{2^{\mu}}\ket{\mu}$. Then using controlled Hadamards over registers $\ket{\nu_x}$, $\ket{\nu_y}$, and $\ket{\nu_z}$, prepare a uniform superposition taking values from $-2^{\mu-1}+1$ to $2^{\mu-1}-1$ as signed integers. These superpositions are over a series of nested cubes $\mathcal{C}_\mu = \{\bnu :  \|\bnu\|_\infty < 2^{\mu-1} \}$, whose differences are denoted by $\mcB_\mu = \mathcal{C}_\mu \backslash \mathcal{C}_{\mu - 1}$.   
    In the previous preparation, $\ket{+0}$ and $\ket{-0}$ both appear, and the latter is flagged as failure. Furthermore, to avoid double-counting, for a $\bm{\nu}$ prepared for $\mu$, if $\bm{\nu} \notin \mcB_\mu$, it is flagged as failure.     
    Finally, we prepare a uniform superposition over a register $\ket{m}$ of size $n_M$, where $n_M$ is to be determined later by error analysis. Overall, we obtain the following state: 
    \begin{equation}
    \frac{1}{\sqrt{M2^{n_p+2}}}\sum_{\mu=2}^{n_p+1}\sum_{\bm{\nu}\in \mcB_{\mu}}\sum_{m=0}^{M-1} \frac{1}{2^\mu}\ket{\mu}\ket{\nu_x}\ket{\nu_y}\ket{\nu_z}\ket{m}\ket{00}_{\text{flag}} + \ket{\Phi^{\perp}},
    \end{equation}
    where $M=2^{n_M}$ and $\ket{\Phi^\perp}$ includes the basis states flagged as failure by any of the two qubits in register $flag$. See \cite[p. 4-5]{babbush2019quantum} for more details.
    
    \item For every $\ket{\mu}\ket{\bnu}\ket{m}$, we check whether:
    \begin{equation}\label{eq:ineq_test_m}
         m G_\nu^2 < (2^{\mu-2}b_{min})^2 M,
    \end{equation}
    where $b_{min} = \sigma_3(B)$ is the smallest singular value of the lattice matrix $B := (\bb_1,\bb_2,\bb_3)$ with $\bb_\omega$ the reciprocal lattice column vector. Notice $b_{min}$ coincides with $\min_\omega \|\bb_\omega\|$ for an orthogonal lattice. One must show the soundness of the inequality, in other words, as $m<M$, we need to show $\frac{M(2^{\mu-2}b_{min})^2}{G_\nu^2 } \le M$. Indeed, for $\bnu \in \mcB_\mu = \{\bnu : 2^{\mu-2} \le \|\bnu\|_\infty \le 2^{\mu-1}-1 \}$, we have $G_\nu^2 = \langle \bnu, B^TB \bnu \rangle \ge \sigma_3(B)^2\|\bnu\|^2 \ge \sigma_3(B)^22^{2(\mu-2)}$. To implement the inequality test, we opt to use a \textsc{SelSwapDirty} QROM circuit $O$ to compute:
    \begin{align}
        O\ket{\bnu} \ket{0} = \ket{\bnu}\ket{a_{\bnu}}, 
    \end{align}
    where $a_{\bnu} = \lceil  M(2^{\mu-2}b_{min}/G_\nu)^2 \rceil$ is calculated classically. This is followed by an $n_M$-bit comparator circuit to compare $a_{\bnu}$ to $m$ and store the success as $\ket{0}$ in the third flag qubit below:
    \begin{equation}\label{eq:PREP_U+V_result_state}
      \frac{1}{\sqrt{M2^{n_p+2}}}\sum_{\mu=2}^{n_p+1}\sum_{\bm{\nu}\in \mcB_{\mu}}\sum_{m=0}^{\lceil M(2^{\mu-2}b_{min}/G_\nu)^2\rceil-1} \frac{1}{2^\mu}\ket{\mu}\ket{\nu_x}\ket{\nu_y}\ket{\nu_z}\ket{m}\ket{000}_{\text{flag}} + \ket{\Psi^\perp}.
    \end{equation}
    Upon success, the amplitude for each $\bm{\nu}$ is as desired up to a uniform scale:
    \begin{equation}\label{eq:the_amps_ineq_test}
        \sqrt{\frac{\lceil M(2^{\mu-2}b_{min}/G_\nu)^2\rceil}{M2^{2\mu}2^{n_p+2}}}\approx \frac{b_{min}}{4\sqrt{2^{n_p+2}}}\frac{1}{G_\nu}, \ \text{as }M  \to \infty.
    \end{equation}
\end{enumerate}

Amplitude amplification for $\ket{000}_{\text{flag}}$ is the last step, amplifying its probability above some predetermined $p_{\text{th}}$. This finishes the preparation of \cref{eq:prep_U_V_common_momentum_state}.

\section{Exact amplitude amplification with known initial amplitude} \label{app:aa_fewer_steps}
In our implementation of $\SEL_{NL}$, we need an improved amplitude amplification that yields a success probability equal to one for preparing the state $\ket{\Psi_{I,2,0}}$. As mentioned in \cite[App. A]{mcardle2022quantum}, this technique is folklore knowledge but no reference could be found describing the costings of the method. This section gives an overview of the algorithm and discusses how to compute its cost.

\subsection{The algorithm}
Assume we want to amplify the probability of $\ket{0}$ in $\ket{\Phi} = p^{1/2}\ket{0}+(1-p)^{1/2}\ket{1}$ to one (or in practice very close to one), where $p$ is known. Let $l$ be the smallest nonnegative integer satisfying $p\ge \sin(\frac{\pi}{2(2l+1)})^2$. Choose $0<a\le 1$ such that $a^2p=q$ where $q = \sin(\frac{\pi}{2(2l+1)})^2$. Prepare a rotated ancilla $\ket{r} = a\ket{0}+(1-a^2)^{1/2}\ket{1}$ with sufficient precision, the costing of which is determined later. Then apply a circuit with Toffoli with $X$ gates to $\ket{\Phi}\ket{r}\ket{0}$ to make the state:
\begin{align}\label{eq:lin_trick_eq}
    ap^{1/2}\ket{00}\ket{0} + (1-a^2p)^{1/2}\ket{\psi}\ket{1}
\end{align}
where $\ket{\psi}$ is some unit state in the subspace orthogonal to $\ket{00}$. Then $l$ steps of amplitude amplification with the last qubit as the flag gives a success probability of $\sin^2((2l+1)\arcsin(q^{1/2})) = 1$.

\subsection{Cost and error}
The Toffoli cost in this case, assuming a priori access to a gradient state, is the one Toffoli used to prepare \cref{eq:lin_trick_eq}, plus $n_{AA}-3$ Toffolis to make the rotated ancilla with $n_{AA}$ bits of precision. The qubit cost is the two additional ones used in \cref{eq:lin_trick_eq} plus the $n_{AA}$ qubits to be used in the gradient state to achieve required precision. In our resource estimations, we choose a default value of $n_{AA}=35$, which is more than enough to ensure high precision in our cases. Notice this cost is to be added to the cost of preparing $\Phi$, and then multiplied by $(2l+1)$ to give the full amplitude amplification cost. Overall, requiring a lot more bits of precision is very cheap, given that the costs are many orders of magnitude less than the other subroutines involved in the preparation of $\ket{\Psi_{I,2,0}}$. 

Lastly, no error on the ground state energy estimation is induced from this step; the effective value of $\lambda$ changes very slightly, through the scaling of $\lambda_{NL}$ (which is by far the lowest contributor to $\lambda$) by a factor of about $(1-2^{-2n_{AA}})^{-1}$. Due to its negligible impact in our resource estimations, we ignore the exact calculation of this change but note that it can be easily included.

\section{The details of Prepare and Select operators}\label{app:the_details_of_prep_and_sel}
This section describes the implementation of prepare and select operators in more depth. We make use of the following notation:
\begin{itemize}
    \item $\mcT $ : the number of atomic species in the cell, and $\tau = \lceil \log(\mcT)\rceil$,
    \item $N_t$ : the number of nuclei with atomic type $t$ in the cell, and $n_t = \lceil \log(N_t) \rceil$.
\end{itemize}
Note that $\sum_{t=0}^{\mcT-1} N_t = L$ by definition, where $L$ is the number of atoms. Expressing the nuclei $I=(t,t_{j})$ in register $k'_{loc}$ or $k'_{NL}$ costs $\tau+ \max_t n_t$ qubits. 
\subsection{Prepare}\label{appssec:prep_implementation}
We first address the registers that have a known or straightforward implementation. 
\begin{enumerate}
\item In \cref{eq:master_prep_state}, for states of the registers $b,g,h,c,d,e$, the preparation is the same as in the OAE setting \cite{su2021fault}. 
\item For the states of the registers $\mcX$ (\ref{eq:prep_mcX}) and $f$ (\ref{eq:prep_b_omega_omega_prime}), as motivated by \cref{rmk:why_select_qrom}, we choose the \textsc{Select} variant of QROM in \qromalgo. The small subtlety in the case of $f$ is that \qromalgo~is used to prepare the superposition over $\ket{\omega,\omega'}_f$, and the fifth qubit $\ket{\sgn(\langle \bb_\omega,\bb_{\omega'}\rangle )}_f$ is computed directly by reading $\ket{\omega,\omega'}_f$ using a \textsc{Select} QROM. \item Finally, for the superposition in \cref{eq:prep_v_state} on $j_{V},k_{V}$, the exact same algorithm in \cref{app:Ineq_test} applies, where one only needs to change the notation from $M \to M_V$. For example, the inequality test is $m G_\nu^2 <  M_{V}(2^{\mu-2}b_{min})^2$.
\end{enumerate}
Next, we discuss the more involved superposition for the local and non-local term.

\subsubsection{Momentum state superposition for $U_{loc}$}\label{appsssec:prep_implementation_loc_state}

We rewrite the superposition in \cref{eq:prep_loc_state} on $k_{loc},k'_{loc},s_{loc}$:
\begin{align}\label{eq:prep_loc_state_appendix}
\frac{1}{\sqrt{\lambda_{\nu,loc}}}\sum_{I,\bm{\nu}\in \mathcal{G}_0}\frac{|\gamma_I(G_\nu)|^{1/2}}{G_\nu}\ket{\bm{\nu}}_{k_{loc}}\ket{I}_{k'_{loc}}\ket{\sgn(\gamma_I(G_\nu))}_{s_{loc}}
\end{align}
The preparation of the above state is done almost entirely using QROM. Recall that the function $\gamma_I$ is solely determined by the atomic type $t$ of $I=(t,t_j)$, where $0\le t \le \mcT-1$, and $0\le t_j\le N_t-1$ enumerates the type $t$ atoms in the cell. The state $\ket{t}_{k'_{loc}}$ is represented with a binary string of length $\tau = \lceil\log(\mcT) \rceil $, and the state $\ket{t_j}_{k'_{loc}}$ with one of length $n_t= \lceil\log(N_t) \rceil$. We prepare the state below over $\ket{\bm{\nu}}_{k_{loc}}\ket{t}_{k'_{loc}}$ using \qromalgo~with \textsc{SelSwapDirty} QROMs, followed by a final \textsc{SelSwapDirty} QROM that reads $\ket{\bm{\nu}}_{k_{loc}}\ket{t}_{k'_{loc}}$ and outputs $\ket{\sgn(\gamma_t(G_\nu))}_{s_{loc}}$ (\cref{rmk:last_iter_qrom_phase}):
\begin{align}\label{eq:prep_loc_n_t_ps}
\sum_{t,\bm{\nu}\in \mathcal{G}_0}\left(\frac{n_t}{\Ps(n_t,b_r)}\right)^{1/2}\frac{|\gamma_t(G_\nu)|^{1/2}}{G_\nu}\ket{\bm{\nu}}_{k_{loc}}\ket{t}_{k'_{loc}}\ket{\sgn(\gamma_t(G_\nu))}_{s_{loc}}.
\end{align}
Here, $\Ps(n,b)$ is the probability of success for the preparation of a uniform superposition over $n$ basis states using $b$ bits of precision for the involved rotation (\cite[App. J]{su2021fault}). In all of our implementations, we set $b_r=8$ which gives a very high success probability for all different values of $n$ in our simulations. 

The state we get from using \qromalgo~is not exactly that of \cref{eq:prep_loc_n_t_ps} and, similarly to the preparation of \cref{eq:prep_U_V_common_momentum_state}, it may include inadmissible states, such as $\ket{\bnu} = \ket{\pm 0}$ or $\ket{t}$ for a $t\ge \mcT$. Notice that for states prepared using QROM, the corresponding action of $\SEL$ on inadmissible basis states does not have to be a trivial action with an exactly computable scalar, as we are not attempting to shift the Hamiltonian by some known scalar. We take into account these inadmissible basis states by the error they induce throughout the block-encoding. Thus it is not necessary to flag these basis states, unless it is to make the action of $\SEL$ well defined. However, as shown by the definition of $\SEL$, the action there is already well defined: these basis states can only be an issue when computing $\bR_I$ or the phase action $e^{i\bG_\nu \cdot \bR_I}$. When $\bnu=\ket{\pm 0}$, $\bG_\nu$ is the origin vector, and when $t\ge \mcT$ leading to some undefined $\bR_I$, the QROM computing $\bR_I$ is programmed to output the all-zero state. Therefore, the action is not only well-defined but also trivial.

The last step is a direct application of the technique in \cite[App. A.2]{lee2021even} to prepare a uniform superposition over the $n_t$ nuclei of type $t$, meaning
\begin{align}\label{eq:I_loc_first_appear}
&\sum_{t,\bm{\nu}\in \mathcal{G}_0}\left(\frac{n_t}{\Ps(n_t,b_r)}\right)^{1/2}\frac{|\gamma_t(G_\nu)|^{1/2}}{G_\nu}\ket{\bm{\nu}}_{k_{loc}}\ket{t}_{k'_{loc}} \ket{\sgn(\gamma_I(G_\nu))}_{s_{loc}} \otimes \nonumber \\ &\left(\left(\frac{\Ps(n_t,b_r)}{n_t}\right)^{1/2}\ket{0}_{I,loc}\sum_{t_j=0}^{n_t-1}\ket{t_j}_{k'_{loc}} + (1-\Ps(n_t,b_r))^{1/2}\ket{1}_{I,loc}\ket{\psi_t^\perp}_{k'_{loc}}\right) ,
\end{align}
where $\ket{\psi_t^\perp}_{k'_{loc}}$ is some inadmissible state and $\ket{\cdot}_{I,loc}$ flags the desired uniform superposition. Simplifying the expression above for $\ket{0}_{I,loc}$ yields the desired state of \cref{eq:prep_loc_state_appendix}.

\subsubsection{PREP state for $U_{NL}$}\label{appsssec:prep_state_nl}

We wish to prepare the state of registers $k_{NL},k'_{NL},s_{NL}$ in \cref{eq:prep_nl_state}, rewritten below:
\begin{align}\label{eq:prep_nl_state_appendix}
    \frac{1}{\sqrt{\sum |c_{I,\sigma}|}} \sum_{I,\sigma} \sqrt{|c_{I,\sigma}|}\ket{\sigma}_{k_{NL}}\ket{I}_{k'_{NL}}\ket{\sgn(c_{I,\sigma})}_{s_{NL}}.
\end{align}
The process is very similar to $U_{loc}$. First, the state 
\begin{align}\label{eq:qrom_nl_state_a_o}
    \sum_{t=0}^{\mcT-1}\sum_{\sigma=0}^{10}\left(\frac{n_t|c_{t,\sigma}|}{\Ps(n_t,b_r)}\right)^{1/2} \ket{\sigma}_{k_{NL}}\ket{t}_{k'_{NL}},
\end{align}
is prepared using a \textsc{SelSwapDirty} QROM reading $\tau+4$ many qubits, where the four qubits encode $0\le \sigma \le 10$. This is followed by a uniform superposition over $n_t$ basis states with success probability $\Ps(n_t,b_r)$:
\begin{align}
    &\sum_{t,\sigma}\left(\frac{n_t|c_{t,\sigma}|}{\Ps(n_t,b_r)}\right)^{1/2} \ket{\sigma}_{k_{NL}}\ket{t}_{k'_{NL}} \otimes \nonumber \\
    &\left(\left(\frac{\Ps(n_t,b_r)}{n_t}\right)^{1/2}\ket{0}_{I,NL}\sum_{t_j=0}^{n_t-1}\ket{t_j}_{k'_{NL}} + (1-\Ps(n_t,b_r))^{1/2}\ket{1}_{I,NL}\ket{\psi_t^\perp}_{k'_{NL}}\right).
\end{align}
Finally, for the value in register $s_{NL}$, we need to know the atomic type $t$ and which of the three subgroups of the 11 types $\sigma$ refers to, i.e., $\sigma = 0$, or $\sigma=(1,\omega)$, or $\sigma = (2,x)$ where $x$ is either $0$ or $(\omega,\omega')$. This identification requires us to compute two bits of information, which can be done by inequality tests as the three subgroups correspond to the enumeration $\sigma = 0, 1\le \sigma \le 3,$ and $4\le \sigma \le 10$. Thus a \textsc{Select} QROM reading $\tau+2$ qubits can be used to finish the preparation of \cref{eq:prep_nl_state_appendix}.

\subsection{Select}\label{appssec:SEL_implementation}

Below, we go over the remaining details of $\SEL_{T},\SEL_{V},$ and $\SEL_{loc}$, and dedicate a separate section for $\SEL_{NL}$.\\

1. $\SEL_T$: We recall the transformation implemented by this operator
    \begin{align}\begin{split}
        \SEL_T: &\ket{0}_{T} \ket{+}_b\ket{j}_e \ket{\omega,\omega',\sgn(\langle \bb_\omega,\bb_{\omega'}\rangle)}_f \ket{r}_g \ket{s}_h \ket{\bp}_j \to\\
        &(-1)^{b(p_{\omega,r}p_{\omega',s}+1)+\sgn(\langle \bb_\omega,\bb_{\omega'}\rangle p_\omega p_{\omega'})} \ket{0}_{T} \ket{+}_b \ket{j}_e \ket{\omega,\omega',\sgn(\langle \bb_\omega,\bb_{\omega'}\rangle)}_f \ket{r}_g \ket{s}_h \ket{\bp}_j.
    \end{split}
    \end{align}
    In addition to the CSWAP that sends back and forth the $(j,\omega,r),(j,\omega',s)$ coordinates to an auxiliary register, we have a phase that is controlled on the following:
    \begin{itemize}
        \item The state $\ket{\rchi}$ is equal to $\ket{00}$,
        \item The register $c$ flags the success of $i=j$ in registers $d,e$ (i.e., $\ket{1}_c$),
        \item The ancilla attached to register $f$ flags the admissible states ($(\omega \neq 4) \land (\omega' \neq 4)$) in the approximate superposition prepared by QROM. This uses three Toffolis and two additional qubits.
    \end{itemize}
    With the help of three additional Toffolis and a single auxiliary register, we can record the success of all three above in terms of the state $\ket{0}_T$. This would indicate the success of the state preparation of $T$. Therefore the phase action of $\SEL_T$ is only controlled on a single auxiliary state $\ket{\cdot}_T$ when the lattice is orthogonal (\cref{rmk:lcu_t_b_getting_dropped}), and otherwise, it is controlled on an additional register $\ket{\cdot}_b$. Note this additional control is only for implementing $(-1)^{b(p_{\omega,r} p_{\omega',s})}$, where one needs two Toffolis.\\
   
2. $\SEL_V$: We recall the phase and controlled arithmetics carried out by this operator:
     \begin{align}\begin{split}
        \SEL_V:  &\ket{0}_{V} \ket{b}_b \ket{i}_d \ket{j}_e \ket{0}_c \ket{0}_{j_V} \ket{\bnu}_{k_V} \ket{\bp}_i \ket{\bq}_j \to \\
        &(-1)^{b([(\bp+\bnu) \not \in \mcG] \lor [(\bq-\bnu) \not\in \mcG])} \ket{0}_{V} \ket{b}_b \ket{i}_d \ket{j}_e \ket{0}_c \ket{0}_{j_V}\ket{\bnu}_{k_V} \ket{\bp+\bnu}_i \ket{\bq-\bnu}_j.
        \end{split}
    \end{align}
    The addition and subtraction along with the phase implementation must be controlled on the success of state preparation for $V$, which occurs when the state $(\ket{\rchi}_{\mcX}$ is equal to $\ket{01}_{\mcX})$ and the $j_v, c$ registers are in state  $\ket{0}_{j_V} \ket{0}_c$. This can be computed via three Toffolis into one single register $\ket{0}_{V}$. Thus addition and subtraction along with the phase implementation are controlled on a single auxiliary register.\\

3. $\SEL_{loc}$: We recall the overall action:
    \begin{align}\begin{split}
        \SEL_{loc}: &\ket{0}_{loc} \ket{b}_b \ket{j}_e \ket{\bnu}_{k_{loc}} \ket{I}_{k'_{loc}} \ket{\sgn(\gamma_I(G_\nu))}_{s_{loc}} \ket{\bm{0}}_{\bR}   \ket{\bp}_j \to \\
        &(-1)^{b[(\bp-\bnu) \not \in \mcG]} \ket{0}_{loc} \ket{b}_b \ket{j}_e \ket{\bnu}_{k_{loc}} \ket{I}_{k'_{loc}} \ket{\sgn(\gamma_I(G_\nu))}_{s_{loc}} \ket{\bR_I}_{\bR} \ket{\bp-\bnu}_j \to \\
        &e^{i \bG_\nu \cdot \bR_I} (-1)^{b[(\bp-\bnu) \not \in \mcG]+\sgn(\gamma_I(G_\nu))} \ket{0}_{loc} \ket{b}_b  \ket{j}_e \ket{\bnu}_{k_{loc}} \ket{I}_{k'_{loc}} \ket{\sgn(\gamma_I(G_\nu))}_{s_{loc}}  \ket{0}_{\bR} \ket{\bp-\bnu}_j.
    \end{split}
    \end{align}
    The subtraction of $\bnu$ from $\bp$ is controlled on $(\ket{\rchi}_{\mcX}$ being equal to $\ket{10}_{\mcX})$, and registers $c$ and $I,loc$ being in state $\ket{1}_c \ket{0}_{I,loc}$. This is computed via three Toffolis and recorded as $\ket{0}_{loc}$, indicating the success of the state preparation for $U_{loc}$. Recall $\ket{\cdot}_{I,loc}$ was computed as part of the preparation algorithm in \cref{eq:I_loc_first_appear}.
    
    To implement $e^{i \bG_\nu \cdot \bR_I}$, we employ a \textsc{Select} QROM that outputs $\bR_I$ into a register denoted by $\bR$ and of size $n_R$. This QROM reads $\ket{x}_{loc}\ket{I}_{k'_{loc}}\ket{\bm{0}}_{\bR}$ and outputs $\bR_I$ in register $\bR$ if $x=0$ and leaves $\ket{\bm{0}}_{\bR}$ unchanged otherwise. Note that in case $I$'s type is inadmissible (as discussed in \cref{appsssec:prep_implementation_loc_state}), one may simply leave $\bR = \bm{0}$. The implementation of the phase $e^{i \bG_\nu \cdot \bR_I}$ does not need to be controlled as for any case other than the local and non-local operators being qubitized, the register $\bR$ is all-zero. Once the phase is implemented, we erase the register $\bR$ (by applying the inverse of the QROM) and apply $\text{Z}_{s_{loc}}$ controlled on $\ket{0}_{loc}$. \\

4- $\SEL_{NL}$: The costliest and most involved subroutine of $\SEL_{NL}$ is the preparation of $\ket{\Psi_{I,\sigma}}$. For the sake of illustration, we first assume that the lattice of our model is orthogonal, as is the case for \dis~and \limnfo. Recall that the states $\ket{\Psi_{I,\sigma}}$ are all (derivatives of) three-dimensional Gaussian superposition states. Thus, assuming an orthogonal lattice, they decompose to the tensor product of three one-dimensional Gaussian states. Below, we show this decomposition for all the different types $\sigma$ of states $\ket{\Psi_{I,\sigma}}$: $(a) \ket{\Psi_{I,0}}$, $(b) \ket{\Psi_{I,1,\omega}}$, $(c) \ket{\Psi_{I,2,0}}$, $(d) \ket{\Psi_{I,2,(\omega,\omega')}}$ where $1\le \omega\neq\omega' \le 3$. Due to orthogonality, $\bG_p = \sum_\omega p_\omega \bb_\omega = (G_{p,1},G_{p,2},G_{p,3})$ where $G_{p,i} = p_i(\bm{b}_i)_i$. We drop the normalization factors to avoid cluttering:
\begin{align}\label{eq:psi_I_o_decomp}
(a)\ \sum_{\bp}  e^{-G_p^2r_0^2/2}\ket{\bp} =  \bigotimes_{i = 1,2,3}\Big(\sum_{\bp_i}  e^{-(G_{p,i})^2r_0^2/2}\ket{\bp_i}\Big) 
\end{align}
\begin{align}
(b)\ \sum_{\bp}  G_{p,\omega} e^{-G_p^2r_1^2/2}\ket{\bp} =
\bigotimes_{i\neq \omega }  \Big(\sum_{\bp_i}   e^{-(G_{p,i})^2r_1^2/2}\ket{\bp_i}\Big) \otimes \Big(\sum_{\bp_\omega} G_{p,\omega}e^{-(G_{p,\omega})^2r_1^2/2}\ket{\bp_\omega}\Big) 
\end{align}
\begin{align}\label{eq:psi_I_o_decomp_20}
(c)\ \sum_{\bp}  G_p^2 e^{-G_p^2r_2^2/2}\ket{\bp} = \sum_\omega \bigotimes_{i\neq \omega }  \Big(\sum_{\bp_i}   e^{-(G_{p,i})^2r_2^2/2}\ket{\bp_i}\Big) \otimes \Big(\sum_{\bp_\omega} G_{p,\omega}^2e^{-(G_{p,\omega})^2r_2^2/2}\ket{\bp_\omega}\Big)
\end{align}
\begin{align}
(d)\  \sum_{\bp}  (G_{p,\omega}G_{p,\omega'})e^{-G_p^2r_2^2/2}\ket{\bp}  =&  \Big(\sum_{\bp_{\omega''}}   e^{-(G_{p,{\omega''}})^2r_2^2/2}\ket{\bp_{\omega''}}\Big) \otimes  \Big(\sum_{\bp_\omega} G_{p,\omega}e^{-(G_{p,\omega})^2r_2^2/2}\ket{\bp_\omega}\Big)\otimes \nonumber \\
&\Big(\sum_{\bp_{\omega'}} G_{p,{\omega'}}e^{-(G_{p,{\omega'}})^2r_2^2/2}\ket{\bp_{\omega'}}\Big)
\end{align}

The QROM-based state preparation breaks to three separate ones, applied in parallel and each reading $n_p+(\tau+4)$ qubits, significantly reducing the cost. While the orthogonal decomposition works as intended to implement the reflection on states of type (a), (b), (d) above, the decomposition for (c) does not lead to the reflection on $\sum_{\bp} G_p^2 e^{-G_p^2r_2^2/2}\ket{\bp}$ but rather to the sum of that on the three states in its decomposition, meaning $\sum_\omega (\mathbbm{1}-2\ket{\psi_\omega}\bra{\psi_\omega})$ instead of $I-2\ket{\psi}\bra{\psi}$ where $\psi=\sum_\omega \psi_\omega$. For this reason, we separately treat further the states of type (c) below.

\textit{Implementation of the reflection onto $\ket{\Psi_{I,2,0}}$. }The preparation $U_{I,\sigma}$ must act by identity if $\sigma$ is of type (c), and is followed by another operator $V_{I,\sigma}$. This operator specifically takes care of preparing a state of type (c), and is different from the identity only if $\sigma=(2,0)$. $V_{I,\sigma}$ acts as follows:
\begin{enumerate}
    \item The \textsc{Select} QROM-based preparation of the state $\sum_{i=1}^{3} \sqrt{\frac{(\bb_i)_i^4}{\sum_j (\bb_j)_j^4}} \ket{i}$, where $\ket{i}$ is one-hot-encoded in three qubits.
    \item The parallel \textsc{SelSwapDirty} QROM-based preparation of three states, each QROM reading $\ket{\bp_\omega}\ket{t}_{k'_{NL}}\ket{\cdot}_{NL,c}\ket{i}_\omega$, a total of $n_p+\tau+1+1$ qubits. We ensure that the state to be prepared is indeed of type (c) by reading $\ket{\cdot}_{NL,c}$. This register is computed using one Toffoli calculating the AND of $\ket{\cdot}_{NL}$ and $\ket{\sigma==(2,0)}$, where the latter is determined using three Toffolis and three qubits by checking the condition $\sigma==(2,0)$ (recall $0\le \sigma\le 10$ is represented using four qubits).
    \item Following the previous step, we have prepared $\sum \ket{\Psi_{i,I,\sigma}}\ket{i}$ where $\sum \ket{\Psi_{i,I,\sigma}} = \ket{\Psi_{I,\sigma}}$ is the desired state. Therefore, we need to disentangle the $\ket{i}$ register from the system register:
    \begin{itemize}
        \item Apply the Hadamard gate on each three qubits of $\ket{i}$. This leads to the state $\ket{\Psi_{I,\sigma}} \ket{000} + \ldots$. We use two Toffolis to flag success by one flag qubit as $\ket{\Psi_{I,\sigma}}\ket{000}\ket{0}_{\text{flag}} + \ldots$ .
        \item The probability of success for $\ket{0}_{\text{flag}}$ is $\frac{1}{8}$, which is larger than $\sin\left(\frac{\pi}{2(2\cdot 2 +1)}\right)^2\approx 0.095$. Therefore, using the technique in \cref{app:aa_fewer_steps}, we can ensure that after $l=2$ steps of amplitude amplification, the success probability is very close to one, so that the error in our implementation of the reflection is negligible. 
    \end{itemize} 
\end{enumerate}

\textit{The case of non-orthogonal lattices.} We have two materials in \cref{table:materials} with a non-orthogonal lattice. However, as one of the lattice vectors is orthogonal to the other two, the states $\ket{\Psi_{I,\sigma}}$ admit a decomposition into states of size $n_p$ and $2n_p$ qubits. Therefore, the QROM cost will need to change accordingly for all states of type (a), (b), and (d). For type (c), after a suitable change of axis, we always have $\bb_1 = (x_1,0,0), \bb_2 = (0,x_2,0),\bb_3=(0,x_4,x_3)$ for some $x_i>0$. Hence, the preparation of $\ket{\Psi_{I,2,0}}$ can be adapted as follows: prepare a superposition $\sqrt{\frac{x}{x + y}}\ket{01} + \sqrt{\frac{y}{x + y}}\ket{10}$, where $x = (\sum_{p_1} G_{p,1}^4 e^{-G_{p,1}^2r_2^2})(\sum_{p_2,p_3} e^{-(G_{p,2}^2+G_{p,3}^2)r_2^2})$ and $y=(\sum_{p_1} e^{-G_{p,1}^2r_2^2})(\sum_{p_2,p_3} (G_{p,2}^2+G_{p,3}^2)^2e^{-(G_{p,2}^2+G_{p,3}^2)r_2^2})$. The amplitude amplification will need to be done on a qubit with success probability $\frac{1}{4}$. Thus only one amplification is necessary to get exact success probability 1, and there is no need to use the technique in \cref{app:aa_fewer_steps}.

\section{Derivation of $\lambda$}\label{app:lambdas}
In this section, we follow the guideline of \cref{ssec:effective_lambda} for computing the effective values of $\lambda_T$, $\lambda_V$, $\lambda_{NL}$ and $\lambda_{loc}$. This includes finding the relevant success probabilities in state preparation, which flag the admissible states in the superposition. Then one needs to find the approximated amplitudes implemented by the algorithm for the admissible states.

\begin{table*}[ht]
    \centering
    \begin{tabular}{|Sc|Sc|}
        \hline \cref{eq:lambda_T} & $\lambda_T = \eta  2^{2n_p-3} \Ps(\eta,b_r)^{-2}\sum_{\omega,\omega' \in \{1,2,3\}} |\langle \bb_\omega, \bb_{\omega'} \rangle|$ \\ \hline
        \cref{eq:lambda_NL} & $\lambda_{NL} = \eta  \Ps(\eta,b_r)^{-2}\sum_{I,\sigma} |c_{I,\sigma}|$ \\ \hline
        \cref{eq:lambda_V} & $\lambda_V = \frac{2\pi \eta (\eta-1)\lambda_{\nu,V}}{\Omega P_{\nu,V}^{amp} \Ps(\eta,b_r)^{2}}$, where $\lambda_{\nu,V} =  \sum_{\mu = 2}^{n_p+1} \sum_{\bnu \in \mcB_\mu}\frac{\lceil M_V(2^{\mu-2}b_{min}/G_\nu)^2\rceil }{M_V(2^{\mu-2}b_{min})^2}$ \\\hline
        \cref{eq:lambda_loc} & $\lambda_{loc}=\frac{4\pi \eta \lambda_{\nu,loc}}{\Omega\Ps(\eta,b_r)^{2}}$, where  $\lambda_{\nu,loc} = \sum_{I,\bnu \in \mcG_0} \frac{|\gamma_I(G_\nu)|}{G_\nu^2}$ \\
         \hline
         \cref{eq:lambda_val} & $\lambda = \lambda_T+\lambda_{loc}+\lambda_{NL}+\lambda_V$\\
         \hline
    \end{tabular}
    \caption{Values of $\lambda_T$, $\lambda_V$, $\lambda_{NL}$, $\lambda_{loc}$ and total $\lambda$ according to calculations in \cref{app:lambdas}. The corresponding qubit numbers, e.g. $n_T$ are computed in \cref{app:errors}. The probability $P_{\nu,V}^{amp}$ given by \cref{eq:p_nu_v_amp_dfn} is an amplification of the initial probability $P_{\nu,V}$ defined in \cref{eq:p_nu_v}. Each term must include the success probability $\Ps(\eta,b_r)^2$ for preparing the superposition over electron pairs in \cref{eq:prep_b_c_d_e_eta}.}
    \label{tab:lambda_vals}
\end{table*}

\subsection{$\lambda_T$}\label{sssec:lambda_T}
For the kinetic term $T$, there are two success probabilities to consider. One is for preparing the states of registers $g,h$ in \cref{eq:prep_r_s_binary_T}, which brings an adjustment by a probability of $\frac{(2^{n_p-1}-1)^2}{2^{2n_p-2}}$, as demonstrated in \cite[Eq. (71)]{su2021fault}. The other is for preparing \cref{eq:prep_b_c_d_e_eta}, which involves creating two uniform superposition over electrons, yielding the adjustment by $\Ps(\eta,b_r)^2$. So we need to replace the theoretical value $\lambda_T = \eta\sum\limits_{\omega,\omega' \in \{1,2,3\}} |\langle \bb_\omega, \bb_{\omega'} \rangle| (2^{n_p-1}-1)^2/2$ by 
\begin{align}\label{eq:lambda_T}
    \lambda_T = \frac{\eta  2^{2n_p-2}\sum\limits_{\omega,\omega' \in \{1,2,3\}} |\langle \bb_\omega, \bb_{\omega'} \rangle|}{2 \Ps(\eta,b_r)^2}.
\end{align}
If the lattice is orthogonal, $\lambda_T$ is half the above value (\cref{rmk:lcu_t_b_getting_dropped}).

\subsection{$\lambda_{NL}$}\label{sssec:lambda_NL}
Given the implementation in \cref{appsssec:prep_state_nl}, the amplitudes are correctly scaled such that the success probabilities $\Ps(n_t,b_r)$ are canceled out. Thus, only the success probability $\Ps(\eta,b_r)^2$ for the superposition over pairs of electrons must be taken into account. Therefore the theoretical value of $\lambda_{NL} = \eta\sum_{I,\sigma} |c_{I,\sigma}|$ is adjusted as follows:
\begin{align}\label{eq:lambda_NL}
    \lambda_{NL} = \Ps(\eta,b_r)^{-2}\eta  \sum_{I,\sigma} |c_{I,\sigma}|.
\end{align}

\subsection{$\lambda_{V}$}\label{sssec:lambda_V}
We discussed in \cref{appssec:prep_implementation} the implementation of the PREP state for $V$. Here, the derivation of $\lambda_V$ is very similar to the OAE case \cite[Eq. (124)]{su2021fault}. We simply review it by making the small changes needed for general lattices. Let us recall the momentum state superposition:
\begin{align}
    \sqrt{\frac{P_{\nu,V}}{\lambda_{\nu,V}}} \ket{0}_{j_V}\sum_{\bm{\nu}\in \mathcal{G}_0}\frac{1}{G_\nu}\ket{\bm{\nu}}_{k_V}+\sqrt{1-P_{\nu,V}}\ket{1}_{j_V}\ket{\bm{\nu}^\perp}_{k_V}.
\end{align}
Here, we need to consider multiple adjustments to the theoretical value of $\lambda_V$. The first one is the probability of success $P_{\nu,V}$, flagged by $\ket{0}_{j_V}$. We shall amplify it above some set threshold probability $p_{\text{th}}$. Furthermore, similar to $U,V$ in the OAE case \cite[Eq. (124)]{su2021fault}, the amplitudes implemented by the inequality test in \cref{appssec:prep_implementation} are not exactly $\frac{1}{G_\nu}$. Indeed, while theoretically $\lambda_{\nu,V} = \sum_{\nu \in \mcG_0} 1/G_\nu^2$, the effective amplitudes are as mentioned in \cref{eq:the_amps_ineq_test}. Therefore, we have an adjustment for the normalization of the success state flagged by $\ket{0}_{j_V}$:
\begin{align}
    \lambda_{\nu,V} =  \sum_{\mu = 2}^{n_p+1} \sum_{\bnu \in \mcB_\mu}\frac{\lceil M_V(2^{\mu-2}b_{min}/G_\nu)^2\rceil }{M_V(2^{\mu-2}b_{min})^2}.
\end{align}
This is then used to adjust the value of $\lambda_V$, along with the amplified probability and the usual $\Ps(\eta,b_r)^2$ for the electron pairs superposition:
\begin{align}\label{eq:lambda_V}
    \lambda_V = \frac{2\pi \eta (\eta-1)\lambda_{\nu,V}}{\Omega P_{\nu,V}^{amp}\Ps(\eta,b_r)^2}.
\end{align}
To complete our derivation, we recall the expression for the amplified probability $P_{\nu,V}^{amp}$, where
\begin{align}\label{eq:p_nu_v_amp_dfn}
    P_{\nu,V}^{amp} = \sin^2((2a_V+1)\arcsin(\sqrt{P_{\nu,V}})),
\end{align}
given $a_V$ many amplitude amplifications to reach a set success probability threshold $p_{\text{th}}$. Lastly, following the \cref{eq:the_amps_ineq_test}, $P_{\nu,V}$ can be shown to be given by
\begin{align}\label{eq:p_nu_v}
    P_{\nu,V} = \frac{\lambda_{\nu,V}b_{min}^2}{2^{n_p+6}}.
\end{align}

\subsection{$\lambda_{loc}$}\label{sssec:lambda_loc}
We recall the implementation of the momentum state for $U_{loc}$ in \cref{appsssec:prep_implementation_loc_state}, where the QROM scaled the amplitudes by $\Ps(n_t,b_r)^{-1}$, ensuring this success probability gets canceled after preparing the superposition over the nuclei of each atomic type. Thus, similar to $\lambda_{NL}$, we only need to adjust by the usual electron pairs success probability preparation:
\begin{align}\label{eq:lambda_loc}
        \lambda_{loc} = \frac{4\pi \eta \lambda_{\nu,loc}}{\Omega \Ps(\eta,b_r)^2},
\end{align}
where $\lambda_{\nu,loc} = \sum_{I,\bnu }\frac{|\gamma_I(G_\nu)|}{G_\nu^2}$.

\subsection{The effective value of $\lambda$} \label{sssec:lambda}
Gathering the previous results in \cref{eq:lambda_T,eq:lambda_NL,eq:lambda_V,eq:lambda_loc}, the effective value of $\lambda$ is 
\begin{align}\label{eq:lambda_val}
\lambda = \lambda_T+\lambda_{loc}+\lambda_{NL}+\lambda_V.
\end{align}
\setlength\cellspacetoplimit{5pt}
\setlength\cellspacebottomlimit{5pt}
\begin{table*}[ht]
    \centering
    \begin{tabular}{|Sc|Sc|}
    \hline
    \cref{eq:error_rchi_deriv} & $\error_{\rchi} \le \frac{4\pi }{2^{n_{\rchi}}}\lambda $
    \\
    \hline
    \cref{eq:error_B_deriv} &  $\error_B \le 4\frac{\pi \eta  2^{2n_p-2}}{2^{n_B}}\sum_{\omega,\omega'} |\langle\bb_\omega,\bb_{\omega'}\rangle| $ \\
    \hline
    \cref{eq:error_k_deriv} &
    $\error_{NL} \le 2\frac{(\tau+4) \pi  }{2^{n_{NL}}} \eta \sum_{t,\sigma} |\frac{c_{t,\sigma}n_t}{\Ps(n_t,b_r)}| $\\
    \hline 
    \cref{eq:error_M_V_derivation} &
    $\error_{M_V} \le \frac{8\pi\eta(\eta-1)}{\Omega 2^{n_{M_V}} b_{min}^2}(7\times 2^{n_p+1}-9n_p - 11-3\times 2^{-n_p}) $  \\
    \hline
    \cref{eq:error_M_loc_deriv} &
    $\error_{M_{loc}} \le \frac{8 \pi^2 \eta \max_t(\frac{n_t}{\Ps(n_t,b_r)}) (3n_p+\tau) \sum_{t,\bnu} \frac{|\gamma_t(G_\nu)|}{G_\nu^2} }{2^{n_{M_{loc}}}\Omega}$ \\
    \hline
    \cref{eq:error_R_deriv_pessimistic} &
    $\error_R  \le \frac{2\eta\pi \max \|\bm{a}_i\|}{2^{n_R}\Omega} \sum_I (\sum_{\bnu \in \mcG_0}\frac{|\gamma_I(G_\nu)|}{G_\nu} + \sum_{\bnu \in \mcG}G_\nu F_{I,\bnu}) $\\
    \hline
    \cref{eq:error_NL_deriv} &    
    $\error_{\Psi} \le \frac{18(n_p+4+\tau)\pi\eta}{2^{n_{\Psi}}} \sum_{I,\sigma} |c_{I,\sigma}| $\\
    \hline
    \end{tabular}
    \caption{Equations derived in \cref{app:errors} linking the target errors $\error_X$ and their associated finite size registers $n_X$. By replacing the inequality with equality, one obtains the $n_X$ that achieves a target error $\error_X$. In the estimation for $\error_R$, $F_{I,\bnu}$ (\cref{eq:F_NL_I_bnu}) is some expression bounding the entries of the non-local term.}
    \label{tab:error_profile}
\end{table*}
\setlength\cellspacetoplimit{2pt}
\setlength\cellspacebottomlimit{2pt}

\section{Error Analysis}\label{app:errors}
We estimate the errors listed in \cref{sec:overview_of_errors}. To do so, we make the following basic observation. Assume that an LCU of the form $\sum_a \alpha_a U_a$ is approximated by $\sum_a \xi_a V_a$. Here, $\xi_a$ is obtained after a series of approximations due to choosing finite size registers $S_\alpha=(n_{s_1},\ldots,n_{s_k})$ and similarly for $V_a$, where we have a series of approximations using finite size registers $S_U=(m_{s_1},\ldots , m_{s_l})$. We estimate $\|\sum_a \alpha_a U_a - \sum_a \xi_a V_a\|$ using the triangle inequality, by building the following LCU series:
\begin{itemize}
    \item $\sum_a \alpha_a U_{a,j}$ where $0\le j\le l$ means we perform the series of approximations up to $m_{s_t}$. Note that $U_{a,l}=V_a$ and $U_{a,0}:=U_a$.
    \item $\sum_a \alpha_{a,i} V_a$, where $0\le i\le k$ means we use only the finite size registers up to $n_{s_i}$. Note that $\alpha_{a,k} = \xi_a$ and $\alpha_{a,0}=\alpha_a$.
\end{itemize}
Then we can estimate $\sum_a \alpha_a U_{a,j}$ by $\sum_a \alpha_a U_{a,j+1}$ for $0\le j\le l-1$, and $\sum_a \alpha_{a,i} V_a$ by  $\sum_a \alpha_{a,i+1} V_a$ for $0 \le i \le k-1$. Hence, triangle inequality gives us:
\begin{align}\label{eq:series_approx_order}
    \|\sum_a \alpha_a U_a - \sum_a \xi_a V_a\| \le \sum_{j=0}^l \sum_a \alpha_a \|U_{a,j} - U_{a,j+1}\| + \sum_{i=0}^k \sum_a |\alpha_{a,i}-\alpha_{a,i+1}|.
\end{align}
For each of the four operator $\rchi=T,V, U_{NL}, U_{loc}$, we must identify the order in which the approximations must be introduced. For the selection probabilities, in addition to the choice $n_{\rchi}$ which is the first to be made for all of the four operators, there is only one other approximation. For example, for $T$, the order of approximation is $(n_{\rchi},n_{B})$ while for $U_{NL}$, it is $(n_{\rchi},n_{NL})$. There is also at most one choice for the unitaries for all four operators, with the exception of the non-local term; there, the order of approximations is $(n_R,n_{\Psi})$.

We finish this discussion with a lemma that is essential in getting an accurate estimate of the errors made by QROM when scaling a qubitized operator by its $\lambda$.
\begin{lem}\label{lem:qrom_denormalized_bd}
Assume the unit state $\ket{\psi} = \frac{1}{\sqrt{\lambda}} \sum_a \alpha_a \ket{a}$ with $\alpha_a \in \mbbC$, is approximated by the unit state $\widetilde{\ket{\psi}} = \sum_a \xi_a\ket{a}$ up to error:
$$\| \ket{\psi} - \widetilde{\ket{\psi}} \| = \sqrt{\sum_a \left|\xi_a-\frac{\alpha_a}{\sqrt{\lambda}}\right|^2} \le \epsilon.$$ 
Then
\begin{align}
    \sum_a ||\xi_a|^2 \lambda - |\alpha_a|^2 | \le 2\epsilon \lambda.
\end{align}
\end{lem}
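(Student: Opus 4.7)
\textit{Proof proposal.} The plan is to reduce the statement to an inequality about unit vectors in $\ell^2$ and then apply Cauchy--Schwarz. First, let $\beta_a = \alpha_a/\sqrt{\lambda}$, so that the hypothesis becomes
\begin{equation}
\sum_a |\xi_a - \beta_a|^2 \le \epsilon^2,
\end{equation}
and both $(\xi_a)$ and $(\beta_a)$ are unit vectors in $\ell^2$ (the first because $\widetilde{\ket{\psi}}$ is a unit state, the second because $\ket{\psi}$ is a unit state, i.e., $\sum_a |\alpha_a|^2 = \lambda$). The goal is to show $\sum_a \bigl| |\xi_a|^2 - |\beta_a|^2 \bigr| \le 2\epsilon$, which upon multiplying both sides by $\lambda$ yields the claim.

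Next, I would factor the difference of squares and apply the reverse triangle inequality $\bigl| |\xi_a| - |\beta_a| \bigr| \le |\xi_a - \beta_a|$ to obtain
\begin{equation}
\sum_a \bigl| |\xi_a|^2 - |\beta_a|^2 \bigr| = \sum_a \bigl| |\xi_a| - |\beta_a| \bigr| \bigl( |\xi_a| + |\beta_a| \bigr) \le \sum_a |\xi_a - \beta_a| \bigl( |\xi_a| + |\beta_a| \bigr).
\end{equation}
Then Cauchy--Schwarz gives
\begin{equation}
\sum_a |\xi_a - \beta_a| \bigl( |\xi_a| + |\beta_a| \bigr) \le \sqrt{\sum_a |\xi_a - \beta_a|^2} \cdot \sqrt{\sum_a \bigl( |\xi_a| + |\beta_a| \bigr)^2}.
\end{equation}
The first factor is at most $\epsilon$ by assumption. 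For the second, Minkowski's inequality yields
\begin{equation}
\sqrt{\sum_a \bigl( |\xi_a| + |\beta_a| \bigr)^2} \le \sqrt{\sum_a |\xi_a|^2} + \sqrt{\sum_a |\beta_a|^2} = 1 + 1 = 2,
\end{equation}
using that both sequences have unit $\ell^2$-norm. Combining these bounds gives $\sum_a \bigl| |\xi_a|^2 - |\beta_a|^2 \bigr| \le 2\epsilon$, and scaling by $\lambda$ finishes the proof.

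There is no real obstacle here; the only subtlety is being careful with the complex amplitudes, which is handled by the reverse triangle inequality allowing us to bound the difference of moduli by the modulus of differences. The constant $2$ is tight in the sense that it is the operator bound arising from two unit vectors via Minkowski.
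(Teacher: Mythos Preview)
Your proof is correct and follows essentially the same route as the paper: factor the difference of squares, bound $\bigl||\xi_a|-|\beta_a|\bigr|$ by $|\xi_a-\beta_a|$ via the reverse triangle inequality, apply Cauchy--Schwarz, and use the unit normalization of both vectors. The only cosmetic difference is that you normalize first by setting $\beta_a=\alpha_a/\sqrt{\lambda}$ and invoke Minkowski's inequality by name, whereas the paper keeps the $\sqrt{\lambda}$ factors throughout and bounds $\sum_a(|\xi_a|\sqrt{\lambda}+|\alpha_a|)^2$ by expanding the square and applying Cauchy--Schwarz to the cross term; these are the same computation.
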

\begin{proof}
We use Cauchy-Schwarz and triangle inequality $\sum_a ||\xi_a|^2 \lambda - |\alpha_a|^2 | =$
\begin{align}
    &\sum_a ||\xi_a| \sqrt{\lambda} - |\alpha_a|| \cdot ||\xi_a| \sqrt{\lambda} + |\alpha_a||  \le (\sum_a (|\xi_a| \sqrt{\lambda} - |\alpha_a|)^2)^{1/2} \cdot (\sum_a (|\xi_a| \sqrt{\lambda} + |\alpha_a|)^2)^{1/2} \le \\
    &(\sum_a |\xi_a \sqrt{\lambda} - \alpha_a|^2)^{1/2} \cdot (\sum_a |\xi_a|^2 \lambda + \sum_a |\alpha_a|^2 + 2\sum_a |\xi_a\alpha_a| \sqrt{\lambda} )^{1/2} \le \epsilon\sqrt{\lambda} \cdot (2\lambda + 2\sum_a |\xi_a\alpha_a| \sqrt{\lambda} )^{1/2} \le \\
    &\epsilon\sqrt{\lambda} \cdot (2\lambda + 2 (\sum_a |\xi_a|^2)^{1/2} (\sum_a|\alpha_a|^2)^{1/2} \sqrt{\lambda} )^{1/2} = \epsilon\sqrt{\lambda} \cdot (2\lambda + 2 \sqrt{\lambda} \sqrt{\lambda} )^{1/2} = 2\epsilon \lambda.
\end{align}
The first equality is the conjugate identity, the inequality after is Cauchy-Schwartz. It is followed by a triangle inequality for $(|\xi_a| \sqrt{\lambda} - |\alpha_a|)^2 \le(|\xi_a \sqrt{\lambda} - \alpha_a|)^2 $ and the expansion of the term $(|\xi_a| \sqrt{\lambda} + |\alpha_a|)^2$. Then we use directly the assumption to bound the first term, while the second term expansion simplifies since $\sum_a |\xi_a|^2=1, \sum_a |\alpha_a|^2=\lambda$. The rest is another application of Cauchy-Schwartz.
\end{proof}
While the order of approximation in \cref{eq:series_approx_order} starts with the unitaries and then the selection probabilities, we found it more instructive to first discuss the errors related to PREP, i.e., the selection probabilities.
\subsection{Errors in PREP}

\subsubsection{$\error_{\rchi}$}\label{sssec:error_rchi}
The register $\mcX$ is a superposition made by QROM with target amplitudes $\lambda_{\rchi}/\lambda$. As shown in \cref{lem:qrom_superposition_error}, the error in estimating the normalized state is $\epsilon =\frac{n\pi}{2^{n_{\rchi}}}$ where $n$ is the number of qubits in register $\mcX$. Since we have four operators, $n=2$.  Thus the equation determining $\error_{\rchi}$ after taking into account the normalization $\lambda$ and using \cref{lem:qrom_denormalized_bd} is:
\begin{align}\label{eq:error_rchi_deriv}
    \error_{\rchi} \le 2\cdot \frac{2\pi }{2^{n_{\rchi}}}\lambda \implies n_{\rchi}=\lceil \log(\frac{4\pi \lambda}{\error_{\rchi}})\rceil .
\end{align}
Note that in the OAE setting, $\error_T$ \cite[Eq. (D29)]{su2021fault} is the closest analog to our $\error_{\rchi}$.

\subsubsection{$\error_B$}\label{sssec:error_B}
This error is derived similarly to the previous one. It approximates the normalized state in register $f$ (\cref{eq:prep_b_omega_omega_prime}) up to error $\epsilon = \frac{4\pi}{2^{n_B}}$ as we use $4$ qubits to denote the two coordinates $\omega,\omega'$. The normalization factor $\lambda$ (in the context of  \cref{lem:qrom_denormalized_bd}) is $\frac{\eta  2^{2n_p-2}}{2}\sum_{\omega,\omega'} |\langle\bb_\omega,\bb_{\omega'}\rangle|$, and thus the error induced is 
\begin{align}\label{eq:error_B_deriv}
    \error_B \le 2\frac{4\pi}{2^{n_B}} \frac{\eta  2^{2n_p-2}}{2}\sum_{\omega,\omega'} |\langle\bb_\omega,\bb_{\omega'}\rangle| \implies n_B = \lceil \log(\frac{4\pi \eta 2^{2n_p-2}}{\error_B}\sum_{\omega,\omega'} |\langle\bb_\omega,\bb_{\omega'}\rangle|) \rceil.
\end{align}
If the lattice is orthogonal, then $\error_B$ is bounded by half the estimate above (\cref{rmk:lcu_t_b_getting_dropped}).

\subsubsection{$\error_{NL}$}\label{sssec:error_k}
The superposition over $\ket{t}_{k_{NL}'}\ket{\sigma}_{k_{NL}}\ket{\sgn(c_{t,\sigma})}_{s_{NL}}$ in \cref{eq:prep_nl_state} is made by a QROM reading $\tau+4$ qubits representing $\ket{t}_{k_{NL}'}\ket{\sigma}_{k_{NL}}$. According to \cref{lem:qrom_superposition_error}, this leads to an error $\epsilon = \frac{\pi(\tau+4)}{2^{n_{NL}}}$ in preparing the normalized state. By \cref{lem:qrom_denormalized_bd}, the error induced on the selection probabilities is
\begin{align}\label{eq:error_k_deriv}
    &\error_{NL} \le 2\frac{(\tau+4) \pi  }{2^{n_{NL}}} \eta \sum_{t,\sigma} |\frac{c_{t,\sigma}n_t}{\Ps(n_t,b_r)}| \implies \\
    &n_{NL} = \lceil \log(\frac{2(\tau+4) \pi \eta \sum_{t,\sigma} |\frac{c_{t,\sigma}n_t}{\Ps(n_t,b_r)}| }{\error_{NL}}) \rceil 
\end{align}

Let us explain the factor $\eta \sum_{t,\sigma} |\frac{c_{t,\sigma}n_t}{\Ps(n_t,b_r)}| $, which is supposed to be the factor $\lambda$ in \cref{lem:qrom_denormalized_bd}. First, note that $\eta$ is simply taking into account the sum over the $\eta$ electrons. For $ \sum_{t,\sigma} |\frac{c_{t,\sigma}n_t}{\Ps(n_t,b_r)}| $, recall that the QROM in \cref{appsssec:prep_state_nl} gives amplitudes $\xi_a$ for $a=(t,\sigma)$, approximating $\alpha_a = (\frac{n_t|c_{t,\sigma}|}{\Ps(n_t,b_r)})^{1/2}$. Further, we needed to prepare the uniform superposition over $n_t$ basis states enumerating nuclei of atomic type $t$. As a result, the estimation of our error is more relaxed than the one in \cref{lem:qrom_denormalized_bd} appears: instead of estimating $\sum_{a=(t,\sigma)} ||\xi_a|^2 \lambda - |\alpha_a|^2 |$, one has to estimate $\sum_{a=(t,\sigma)} p_t||\xi_a|^2 \lambda - |\alpha_a|^2 |$ where $p_t = \Ps(n_t,b_r) \le 1$; thus the same bound still applies, where we substitute for $\lambda = \sum_a \frac{n_t|c_{t,\sigma}|}{\Ps(n_t,b_r)}$ and $\epsilon = \frac{(\tau+4) \pi  }{2^{n_{NL}}}$.

\subsubsection{$\error_{M_V}$}\label{sssec:error_M_V}
$\error_{M_V}$ has a similar derivation to $\error_M$ in the OAE case \cite[Eq. (111)]{su2021fault}, and we follow that very closely while generalizing it to arbitrary lattice:
\begin{align}\label{eq:error_M_V_derivation}
    &\error_{M_V}=||V-\widetilde{V}|| \le \frac{2\pi\eta(\eta-1)}{ \Omega} \sum_{\mu=2}^{n_p+1}\sum_{\bnu \in \mcB_\mu} |\frac{1}{G_\nu^2}-\frac{1}{G_\nu'^2}| \le \\
    &\frac{2\pi\eta(\eta-1)}{ \Omega} \sum_{\mu=2}^{n_p+1}\sum_{\bnu \in \mcB_\mu} \frac{16}{M_V2^{2\mu}b_{min}^2} \le \frac{2\pi \eta(\eta-1)}{ \Omega}\frac{4}{M_Vb_{min}^2}(7\times 2^{n_p+1}-9n_p - 11-3\times 2^{-n_p}) \implies \\
    &\error_{M_V} \le \frac{8\pi\eta(\eta-1)}{\Omega M_V b_{min}^2}(7\times 2^{n_p+1}-9n_p - 11-3\times 2^{-n_p}) \implies \\
    &n_{M_V} = \lceil \log\left(\frac{8\pi\eta(\eta-1)}{\error_{M_V}\Omega b_{min}^2}(7\times 2^{n_p+1}-9n_p - 11-3\times 2^{-n_p})\right) \rceil
\end{align}
Note the replacement of $\frac{1}{G_\nu'^2}$ by $\frac{16(M_V2^{2\mu}b_{min}^2/(16G_\nu^2) +1)}{M_V2^{2\mu}b_{min}^2}$, which is the estimate made by the inequality test method for the target amplitude $\frac{1}{G_\nu}$. This substitution follows the same reasoning in \cite[Eq. (113)]{su2021fault} when picking $\alpha=1$ in \cite[Eq. (109)]{su2021fault}. Also note that $b_{min} \le 2\pi \Omega^{-1/3}$ with equality in the orthonormal case, which is a sanity check as it shows we can recover \cite[Eq. (111)]{su2021fault} when combined with \cite[Eq. (113)]{su2021fault}.

\subsubsection{$\error_{M_{loc}}$}\label{sssec:error_M_loc}

The error analysis here is similar to $\error_{NL}$ in \cref{sssec:error_k}, as the preparation method of the momentum state for the local term is also based on QROM followed by a uniform superposition over $n_t$ basis states. The coefficients estimated by the QROM are $\alpha_a = \frac{\gamma_t(G_\nu)^{1/2}n_t^{1/2}}{G_\nu \Ps(n_t,b_r)^{1/2}}$, where $a = (t,\bnu)$. After applying QROM, superpositions over $n_t$ nuclei of atomic species $t$ are created which introduce an amplitude of $\sqrt{\frac{\Ps(n_t,b_r)}{n_t}}$. So we need to bound the error $\sum_{a=(t,\bnu)} \frac{n_t}{\Ps(n_t,b_r)} |\xi_a-\alpha_a|^2$. We use the simple bound $\error = \max_t(\frac{n_t}{\Ps(n_t,b_r)}) \cdot \frac{\pi(3n_p+\tau)}{2^{n_{M_{loc}}}}$ where the latter term is the bound on $\sum_{a=(t,\bnu)} |\xi_a-\alpha_a|^2$ given by the QROM approximation of the normalized state (\cref{lem:qrom_superposition_error}). Therefore, by virtue of \cref{lem:qrom_denormalized_bd} with $\lambda$ in that lemma set as $\frac{4\pi\eta}{\Omega}\sum_{t,\bnu} \frac{|\gamma_t(G_\nu)|}{G_\nu^2} $, we obtain
\begin{align}\label{eq:error_M_loc_deriv}
    &\error_{M_{loc}} = \frac{8 \pi^2 \eta \max_t(\frac{n_t}{\Ps(n_t,b_r)}) (3n_p+\tau) \sum_{t,\bnu} \frac{|\gamma_t(G_\nu)|}{G_\nu^2} }{2^{n_{M_{loc}}}\Omega} \implies \\
    &n_{M_{loc}} = \lceil \log\left(\frac{8 \pi^2 \eta \max_t(\frac{n_t}{\Ps(n_t,b_r)}) (3n_p+\tau) \sum_{t,\bnu} \frac{|\gamma_t(G_\nu)|}{G_\nu^2} }{\error_{M_{loc}}\Omega} \right)\rceil.
\end{align}

\subsection{Errors in SEL}
\subsubsection{$\error_R$}\label{sssec:error_R}
For $\error_{R,loc},\error_{R,NL}$, we need to follow the same estimations in \cite[Eqs. (101-103)]{su2021fault}, applied to $U_{loc},U_{NL}$. We let $\widetilde{U_{loc}}$ be the approximation of $U_{loc}$ as a result of using $n_R$ bits to compute the approximation $\widetilde{\bR_I}$ of $\bR_I$, and define $\delta_R = \max_I \|\bR_I - \widetilde{\bR_I}\|$. We have $\delta_R \le \frac{\max \|\bm{a}_i\|}{2^{n_R+1}}$ as $\bR_I = \sum_{i=1}^3 \bm{a}_ir_{i,I}$, where $0\le r_{i,I} \le 1$ are the given fractional coordinates of the nuclei in the cell. 
Given the LCU of $U_{loc}$ in \cref{eq:lcu_loc}, we have:
\begin{align}
    &\error_{R,loc} = \| U_{loc} - \widetilde{U_{loc}}\| \le \frac{4\eta\pi}{\Omega} \sum_{\bnu \in \mcG_0,I} \frac{|\gamma_I(G_\nu)|}{G_\nu^2} |e^{-i\bG_\nu \cdot \bR_I} - e^{-i \bG_\nu \cdot \widetilde{\bR_I}} | \le \\
    &\frac{4\eta\pi}{\Omega} \sum_{\bnu \in \mcG_0,I} \frac{|\gamma_I(G_\nu)|}{G_\nu^2} \|\bG_\nu\| \cdot  \|\bR_I - \widetilde{\bR_I}\| \le
    \frac{2\eta\pi \max \|\bm{a}_i\|}{2^{n_R}\Omega} \sum_{\bnu \in \mcG_0,I} \frac{|\gamma_I(G_\nu)|}{G_\nu}
\end{align}
The LCU of $U_{NL}$ can not be used like its local counterpart to facilitate the estimation of $\error_{R,NL}$. Instead we have to first derive an estimate for the entries of $U_{NL}$. Below, we provide two estimates, the first is the tighter one, the second is more pessimistic but easier to compute and is used for the purpose of resource estimation. Recall
\begin{align}
 f_{I}(\bp,\bq) = \Bigg\{& 4r_0^3B_0~e^{-(G_p^2+G_q^2)r_0^2/2} + \frac{16r_1^5B_1}{3}~(\bm{G}_p \cdot \bm{G}_q)~e^{-(G_p^2+G_q^2)r_1^2/2} \nonumber \\
    & + \left[ \frac{32 r_2^7B_2}{15}~(\bm{G}_p \cdot \bm{G}_q)^2 + \frac{32r_2^7B_2}{45}~(G_p G_q)^2\right]e^{-(G_p^2+G_q^2)r_2^2/2} \Bigg\}
\end{align}
Then, 
$$U_{NL} = \frac{4\pi \eta}{\Omega}\sum_{\bp,\bq \in \mcG, I} e^{-i(\bG_q-\bG_p)\cdot \bR_I} f_{I}(\bp,\bq) \ket{\bp}\bra{\bq}$$
and we get
\begin{align}
    \error_{R,NL} = \|U_{NL} - \widetilde{U_{NL}}\|\le \frac{4\eta\pi \delta_{R}}{\Omega} \sum_{\bnu \in \mcG} G_\nu \sum_I \|\sum_{\bq\in\mcG} f_{I}(\bq-\bnu,\bq) \ket{\bq-\bnu}\bra{\bq}\|
\end{align}
Notice the matrix $\sum_{\bq \in \mcG} f_{I}(\bq-\bnu,\bq) \ket{\bq-\bnu}\bra{\bq}$ is a shift of a diagonal matrix, thus its norm is the maximum entry $F_{I,\bnu} = \max_{\bq} |f_{I}(\bq-\bnu,\bq)|$. It follows:
\begin{align}\label{eq:F_NL_I_bnu}
    \error_{R,NL} \le \frac{4\eta\pi \delta_{R}}{\Omega} \sum_{\bnu \in \mcG, I} G_\nu F_{I,\bnu}
\end{align}
Substituting for $\delta_R$, the total bound is
\begin{align}
    \error_R \le \error_{R,loc}+\error_{R,NL} \le \frac{2\eta\pi \max \|\bm{a}_i\|}{2^{n_R}\Omega} \sum_I (\sum_{\bnu \in \mcG_0}\frac{|\gamma_I(G_\nu)|}{G_\nu} + \sum_{\bnu \in \mcG}G_\nu F_{I,\bnu}) \implies \\
    n_R = \lceil \log\left( \frac{2\eta\pi \max \|\bm{a}_i\|}{\error_R\Omega} \sum_I (\sum_{\bnu \in \mcG_0}\frac{|\gamma_I(G_\nu)|}{G_\nu} + \sum_{\bnu \in \mcG}G_\nu F_{I,\bnu}) \right) \rceil
\end{align}
Computing  $\sum_{\bnu \in \mcG}G_\nu F_{I,\bnu}$ may be time-consuming as the number of entries to compute scales with $N^2$. Thus we opt for an easier to compute bound, by simply adding the absolute value of all entries instead of the above grouping:
\begin{align}
    \error_{R,NL} = \|U_{NL} - \widetilde{U_{NL}}\|\le \frac{4\eta\pi \delta_{R}}{\Omega} \sum_I \sum_{\bp,\bq \in \mcG} \|\bG_q-\bG_p\|  |f_{I}(\bp,\bq)| \le \frac{4\eta\pi \delta_{R}}{\Omega} \sum_I \sum_{\bp,\bq \in \mcG} (G_p+G_q)|f_{I}(\bp,\bq)|
\end{align}
This is followed by the approximation below, where all summations are over $\mcG$:
\begin{align}
    \sum_{\bp,\bq} (G_p+G_q)|f_{I}(\bp,\bq)| \le     2\cdot  |4r_0^3B_0| \big((\sum e^{-G_p^2r_0^2/2})(\sum G_pe^{-G_p^2r_0^2/2}) - \sum G_p e^{-G_p^2r_0^2}\big) + \\
    2 \cdot |\frac{16r_1^5B_1}{3}|  \big((\sum G_p^2e^{-G_p^2r_1^2/2})(\sum G_pe^{-G_p^2r_1^2/2})  - \sum G_p^3 e^{-G_p^2r_1^2}  \big) + \\
    2 \cdot |\frac{128r_2^7B_2}{45}| \big((\sum G_p^3e^{-G_p^2r_2^2/2})(\sum G_p^2e^{-G_p^2r_2^2/2})  - \sum G_p^5 e^{-G_p^2r_2^2}    \big)
\end{align}
where we have used triangle inequality and Cauchy-Schwartz for all applicable expressions involved in $f_{I}$. We have also leveraged the projector nature of the pseudopotential matrix entries to write the estimation above in such a way that it would be easier to compute on a classical computer. Denoting the above estimation by $F_{I}$ we derive the pessimistic bound
\begin{align}\label{eq:error_R_deriv_pessimistic}
    \error_R \le \frac{2\eta\pi \max \|\bm{a}_i\|}{2^{n_R}\Omega} \sum_I ( F_{I}+\sum_{\bnu \in \mcG_0}\frac{|\gamma_I(G_\nu)|}{G_\nu}) \implies \\
    n_R = \lceil \log\left( \frac{2\eta\pi \max \|\bm{a}_i\|}{\error_R\Omega} \sum_I (F_{I} + \sum_{\bnu \in \mcG_0}\frac{|\gamma_I(G_\nu)|}{G_\nu} ) \right) \rceil
\end{align}

\subsubsection{$\error_{\Psi}$}\label{sssec:error_NL}
\begin{table*}[!ht]
\centering
\begin{tabular}{| m{11cm} | m{5cm} |}
\hline
\centering Procedure for $\PREP$ &  \hspace{16mm} Toffoli cost \\
 \hline
Preparing the superposition for register $\mcX$; see \cref{app:gate_cost_reg_mcX}.  & $2\cdot[2 (2^{2+1}-1)+ (n_{\rchi}-3)2]$ \\ 
\hline
The $c,d,e$ registers are equal superpositions over $\eta$ values of $i$ and $j$ in unary; see \cite[Eq. (62)]{su2021fault}. &
$14n_\eta+8b_r-36$ \\
\hline
The $f,g,h$ registers used for $T$; see \cite[Eq. (70)]{su2021fault} for the $g,h$ preparation cost and \cref{app:gate_cost_reg_f} for register $f$. & $2\cdot [2(2^{4+1}-1) + (n_B-3)4 + 2^4 + (n_p-2)]$ \\ 
\hline
The two QROMs used for outputting $\bR_I$ in register $\bR$; see \cref{app:gate_cost_reg_R}. & $2\cdot[2(2^{\tau+\max_t n_t +1})]$\\ 
\hline
Making the uniform superposition on the nuclei of each type $t$ in $k'_{loc},k_{NL}'$ registers; see \cref{sssec:unif_sup_k_loc_k_NL}. & $2\cdot [2(3\max_t n_t - 3v_2(\max_t n_t)+2b_r-9 + 2\cdot2^{\tau})]$\\ 
\hline
The $(k_{NL},k_{NL}',s_{NL})$ register superposition prepared using QROM. $\beta_{NL}$ defined in \cref{eq:beta_k_cost}. & $2\cdot [2( 2\lceil \frac{2^{\tau+4+1}-1}{\beta_{NL}}\rceil +3n_{NL}(\tau+4)\beta_{NL}+2(\tau+4)) + (n_{NL}-3)(\tau+4) + 2^{\tau+2}]+12$\\ 
\hline
Preparing the superposition for the register $(j_V,k_V)$ with amplitudes $1/G_\nu$ using QROM in inequality test; $\beta_V$ defined in \cref{eq:beta_V_cost}. & $(2a_V+1)\cdot [2(2\lceil \frac{2^{3n_p}}{\beta_V}\rceil+ 3\beta_Vn_{M_V})+8(n_p-1)+6n_p+2+n_{M_V}]$ \\ 
\hline
Preparing the superposition for the register $(k_{loc},k'_{loc},s_{loc})$ using QROM; $\beta_{loc}$ defined in \cref{eq:beta_loc_cost}. & $2\cdot [2(2\lceil \frac{2^{3n_p+\tau+1}}{\beta_{loc}}\rceil+ 3\beta_{loc}n_{M_{loc}}(3n_p)+2(3n_p))+(n_{M_{loc}}-3)(3n_p+\tau+1)]$ \\ 
\hline
Toffolis used to compute the registers $\ket{\cdot}_{\rchi}$ for $\rchi \in \{T,V,loc,NL\}$; counted in \cref{ssec:the_sel_subroutine_of_qubitization}. & $6+3+3+\tau+9$\\
\hline
Toffolis used to compute the registers $\ket{\cdot}_{NL,c}$; see \cref{appssec:SEL_implementation}. & $4$\\
 \hline \hline
\centering Procedure for $\SEL$ &  \hspace{16mm} Toffoli cost \\
 \hline 
Controlled swaps of the $p$ and $q$ registers into and out of ancillae (which is used for all four operators); see \cite[Eq. (72)]{su2021fault}. & $12\eta n_p+4\eta-8$ \\ 
\hline
The $\SEL$ cost for $T$; see \cite[Eq. (73)]{su2021fault}. & $5(n_p-1)+2$ \\ 
\hline
Controlled additions and subtractions of $\nu$ into the momentum registers for $U_{loc},V$; see \cite[Eq. (93)]{su2021fault}. & $48n_p$ \\
\hline 
Phasing by $-e^{-i \bG_\nu\cdot \bR_I}$ for $U_{loc}$; see \cite[Eq. (97)]{su2021fault}. & $6n_p n_R$\\ 
\hline
Phasing by $-e^{-i (\bG_q-\bG_p)\cdot \bR_I}$ for $U_{NL}$; see \cite[Eq. (97)]{su2021fault}. & $12n_p n_R$\\ 
\hline
Cost of reflection on $\ket{\Psi_{I,\sigma}}$ for $U_{NL}$ where $n_\qb = n_p+\tau+4$. $\beta_{\Psi}$ defined in \cref{eq:beta_NL}. & $ 6\cdot [2(2\lceil \frac{2^{n_\qb+1}-2^{n_\qb-n_p}}{\beta_{\Psi}}\rceil + 3\beta_{\Psi} n_{\Psi} n_p+2n_p) + (n_{\Psi}-3)n_p]+ (3n_p-1)$\\ 
\hline
Cost of reflection on $\ket{\Psi_{I,2,0}}$ where $n_\qb' = n_p+\tau+2$. $\beta_{\Psi}'$ defined in \cref{eq:beta_NL_prime_cost}. & $5\cdot2\cdot [2\cdot(2^{3+1}-1)+3(n_{\bb}-3)+ 3\cdot[2(2\lceil \frac{2^{n_\qb'+1}-2^{n_\qb'-n_p}}{\beta_{\Psi}'}\rceil + 3\beta_{\Psi}' n_{\Psi} n_p+2n_p) +(n_{\Psi}-3)n_p] + n_{AA}] $\\ 
 \hline \hline
\centering Reflection on state preparation qubits &  \hspace{16mm} Toffoli cost \\
 \hline 
 Reflection on the qubits used in state preparation; see \cref{app:gate_costings_ROT_costing}.  & $2n_\eta+9n_p+n_{M_V}+35+2(\tau+\max_t n_t)$ \\
 \hline
\end{tabular}
\caption{The costs involved in implementing the qubitization $Q = (2\ket{\bm{0}}\bra{\bm{0}} - \mathbbm{1}) \PREP_H^\dagger \cdot \SEL_H \cdot  \PREP_H$, which involves the block-encoding of the Hamiltonian ($\PREP_H$ with its uncomputation $\PREP_H^\dagger$, and $\SEL_H$), and the reflection $(2\ket{\bm{0}}\bra{\bm{0}} - \mathbbm{1})$ on state preparation qubits.}
\label{tab:gatecosts}
\end{table*}

Here, we estimate the error induced by the QROM-based preparation of the Gaussian states $\ket{\Psi_{I,\sigma}}$. First, notice that any approximation $||\ket{\psi} -\widetilde{\ket{\psi}}||\le \epsilon $ give the following estimate on the projection operator $||\ket{\psi}\bra{\psi} - \widetilde{\ket{\psi}}\widetilde{\bra{\psi}}|| \le 2\epsilon + \epsilon^2 \le 3\epsilon$. Assuming the lattice is orthogonal, we apply three QROMs, one for each coordinate, to implement $\mathbbm{1}-2\ket{\Psi_{I,\sigma}}\bra{\Psi_{I,\sigma}}$. Therefore, the error in approximating $\ket{\Psi_{I,\sigma}}$, up to first order, is $\epsilon = 3\epsilon'$ for $\epsilon' = \frac{n\pi}{2^{n_{\Psi}}}$ with $n=n_p+\tau+4$ or $n=n_p+\tau+2$ when we prepare states $\ket{\Psi_{I,\sigma}}$ of type (c) in \cref{eq:psi_I_o_decomp_20}. We will not consider the second and higher order of errors in our approximation as their impact is too small, and we have already a pessimistic estimate above by taking $\epsilon^2 \le \epsilon$. The case for partially orthogonal lattices is simpler, as there are two QROMs and therefore two associated errors, however one must select $n=2n_p+\tau+4$, or $n=2n_p+\tau+2$. Thus, for the reflections, we have:
$$|| (\mathbbm{1}-2\ket{\psi}\bra{\psi}) - (\mathbbm{1}-2\ket{\widetilde{\psi}}\bra{\widetilde{\psi}}) || \le 18\epsilon'$$
Hence the error is estimated as:
\begin{align}\label{eq:error_NL_deriv}
    &\error_{\Psi} = ||U_{NL}- \widetilde{U_{NL}}|| \le \frac{18(n_p+4+\tau)\pi\eta}{2^{n_{\Psi}}} \sum_{I,\sigma} |c_{I,\sigma}| \implies \\
    &n_{\Psi} = \lceil \log\left(\frac{18(n_p+4+\tau)\pi\eta}{\error_\Psi} \sum_{I,\sigma} |c_{I,\sigma}| \right)\rceil.
\end{align}
when the lattice is orthogonal, and we simply replace $(n_p+4+\tau)$ by $(2n_p+4+\tau)$ if the lattice is partially orthogonal.

As explained in \cref{appssec:SEL_implementation}, the implementation of the reflection onto $\ket{\Psi_{I,2,0}}$ requires an additional QROM to prepare a one-hot-encoded superposition $\sum_{i=1}^{3}\sqrt{\frac{(\bb_i)_i^4}{\sum_j (\bb_j)_j^4}}\ket{i}$. Denoting by $n_{\bb}$ the number of qubits used by the \qromalgo~rotations to prepare the superposition, the associated error $\error_{NL'}$ satisfies $\error_{NL'} \le 3\pi \cdot 2^{-n_{\bb}}$. However, to make the analysis easier for our case-studies while also retaining accuracy later on in our resource estimations, we choose $n_{\bb}$ so large that it gives the superposition with a negligible error. By choosing $n_{\bb} = 50$, the error is of order $8e-15$, which is small enough to be safely ignored in our analysis. Even for larger materials than those in our case studies with a much larger $\lambda_{NL}$, one can always increase $n_{\bb}$ by a small amount without any significant accrued gate and qubit cost.

\section{Gate costings}\label{app:gate_costings}

In this section, we derive the Toffoli cost expressions for the algorithm. A summary of the results is given in  \cref{tab:gatecosts}. We make a few general remarks on this table:

1. The cost calculated for the PREP subroutines always includes the uncomputation part by $\PREP^\dagger$. Hence, most costs have a leading factor of two.

2. The cost for the reflections on $\ket{\Psi_{I,\sigma}}$ for all $\sigma$ will need to change slightly for the materials with non-orthogonal lattices (see \cref{ssec:SEL_costings}).

3. While we list the Toffoli cost in \cref{tab:gatecosts}, we also study Toffoli depth, and calculating the latter mostly involves replacing the QROM costs expressions in \cref{tab:gatecosts} by their depth formulae in \cref{table:qroms_cost}.

4. The parameters $\beta_-$ determine the space-depth tradeoff of the QROM (\cref{table:qroms_cost}). Optimizing the expressions in \cref{table:qroms_cost} in terms of $\beta_-$ generally leads to a much higher total number of qubits compared to the AE case. Thus we determine $\beta_-$ in a way that satisfies constraints on the number $n_{\text{dirty}}$ of dirty qubits that can be used. When estimating depth, there will be an additional constraint posed by the maximum allowed number $n_{\text{tof}}$ of simultaneous Toffoli applications.

Finally, note that the gate and qubit costings in the AE case for general lattices is the same as OAE in \cite{su2021fault} with two exceptions:
\begin{itemize}
\item The costing for preparing the state of register $f$, over $\ket{\omega,\omega',\sgn(\langle \bb_\omega,\bb_{\omega'} \rangle)}$,
\item The costing for preparing the momentum state superposition, which is identical to preparing the momentum state for $V$ in the PP-based algorithm.
\end{itemize}
The gate and qubit estimates for these are derived further below.

\subsection{Toffoli cost of Prepare}
\subsubsection{Register $\mcX$}\label{app:gate_cost_reg_mcX}
The superposition $\sum_{\rchi \in \{0,1\}^2} \sqrt{\frac{\lambda_{\rchi}}{\lambda}}\ket{\rchi}_{\mcX}$ on two qubits is prepared using the \textsc{Select} QROM-based \qromalgo, reading $2$ qubits and using a register of size $n_{\rchi}$ for the precision of the rotations. Its gate cost is directly derived from \cref{eq:qrom_gate_cost_superposition_sel_variant}, substituting $n=2$ and $b=n_{\rchi}$. Notice that the inverse of the operation in $\PREP^\dagger$ is responsible for the doubling of the cost, yielding $2\cdot[2 (2^{2+1}-1)+ (n_{\rchi}-3)2]$.

\subsubsection{Register $f$}\label{app:gate_cost_reg_f}
We use a \textsc{Select} QROMs to prepare the superposition in register $f$ and the same variant to output $\sgn(\langle \bb_\omega, \bb_{\omega'}\rangle)$ in the fifth qubit. According to \cref{eq:qrom_gate_cost_superposition_sel_variant}, the former has cost $2(2^{4+1}-1)+(n_B-3)4$ as $4$ qubits are read, while the latter has cost $2^4$ (\ref{eq:qrom_gate_cost_output_sel_variant}). The inverse of these operations for $\PREP^\dagger$ has the same cost.

\subsubsection{Register $\textit{\textbf{R}}$}\label{app:gate_cost_reg_R}
We used two \textsc{Select} QROMs to output the nuclei coordinates $\bR_I$ into register $\bR$, one for each of the local and non-local term. Each reads the nuclei type and its enumeration, i.e. $\tau + \max_t n_t$ qubits. They further read the qubit of the register $loc$ and $NL$ to effectively control their output. By a direct application of \cref{eq:qrom_gate_cost_output_sel_variant}, the cost is $2\cdot [2\cdot 2^{\tau + \max_t n_t +1}]$, which is doubled due to $\PREP^\dagger$.

\subsubsection{Uniform superpositions in $k'_{loc}, k_{NL}'$}\label{sssec:unif_sup_k_loc_k_NL}
We apply the algorithm and cost estimate in \cite[App. A.2]{lee2021even} for preparing the uniform superposition over $n_t$ basis states enumerating the nuclei of type $t$ in register $k'_{loc}$ and $k'_{NL}$, giving the Toffoli cost $2\cdot[3 \max_t n_t - 3v_2(\max_t n_t)+2b_r-9]$ for each, where $v_2(x)$ is the largest power of two dividing $x \in \mbbZ$. The uncomputation has the same cost, therefore doubling the said amount.

There is one small subtlety that we did not address when implementing the PREP states of $U_{loc}$ and $U_{NL}$. Our application of \cite[App. A.2]{lee2021even} assumes that the number of qubits $n_t$ needed for creating the superposition over $n_t$ many basis states, and the number $n_t$ itself, are both stored in some registers. These two registers are computed using the \textsc{Select} variant of QROM that needs to read only the atomic type, and has cost $2\cdot 2^{\tau}$, which is further doubled due to the inverse of PREP. 

\subsubsection{QROM-based preparation of the superposition over $\ket{\sigma}_{k_{NL}}\ket{t}_{k_{NL}'}\ket{\sgn(c_{t,\sigma})}_{s_{NL}}$}\label{app:gate_cost_k_NL_s_NL}
We apply the \textsc{SelSwapDirty} QROM-based \qromalgo, reading $\tau+4$ qubits with rotation precision $n_{NL}$ (\ref{eq:error_k_deriv}). Thus, following \cref{eq:qrom_gate_cost_superposition_1d}, the cost is 
\begin{equation}
    2\left( 2\left\lceil \frac{2^{\tau+4+1}-1}{\beta_{NL}}\right\rceil +3n_{NL}(\tau+4)\beta_{NL} + 2(\tau+4)\right) + (n_{NL}-3)(\tau+4).
\end{equation}
Following \cref{eq:optimal_beta_cost}, we have
\begin{align}\label{eq:beta_k_cost}
    \beta_{NL} = \left\lfloor \min\left(\sqrt{\frac{2(2^{\tau+4+1}-1)}{3n_{NL}(\tau+4)}},\frac{n_{\text{dirty}}}{n_{NL}}\right) \right\rfloor.
\end{align}
Here we use the material and the notation in our review of circuit depth of QROM in \cref{app:qrom_parallelization} to make the estimates. Recall that the \textsc{SelSwapDirty} QROM uses $\beta_{NL}n_{NL}$ dirty ancillae and we must have $\beta_{NL}n_{NL} \le n_{\text{dirty}}$.

In addition, there is the cost of computing $\sgn(c_{t,\sigma})$ done by a \textsc{Select} QROM, which is $2^{\tau+2}$, as it only reads the atomic type along with two bits that determine to which of the three subgroups $\{(0), (1,\omega), \{(2,0),(2,(\omega,\omega'))\}\}$ does $\sigma$ belong to, which corresponds to $\sigma=0$, $1\le \sigma\le 3$, $4\le \sigma\le 10$. Those two bits are computed by inequality tests and require $4\times 3=12$ Toffolis. While they can be uncomputed without any Toffolis, the rest of the cost is doubled as we implement the inverse of PREP, yielding the total cost 
\begin{equation}
2\cdot\left[2( 2\left\lceil \frac{2^{\tau+4+1}-1}{\beta_{NL}}\right\rceil +3n_{NL}(\tau+4)\beta_{NL}+2(\tau+4)) + (n_{NL}-3)(\tau+4) + 2^{\tau+2}\right]+12.
\end{equation}

\textit{Circuit Depth. }  We take into account the maximum simultaneous Toffoli application $n_{\text{tof}}$. Let us denote by $\kappa_{NL}$ the parallelization factor we wish to use for this computation (see \cref{app:qrom_parallelization} for the exact definition), for which $\beta_{NL} \kappa_{NL} \le n_{\text{tof}}$. This is in addition to the previous dirty qubit constraint. Then the depth according to \cref{eq:qrom_gate_cost_superposition_1d_parallelized_v2} becomes:
\begin{align}
    2\cdot\left[2\left( 2\lceil \frac{2^{\tau+4+1}-1}{\beta_{NL}}\right\rceil +3\left\lceil \frac{n_{NL}}{\kappa_{NL}}\right\rceil(\tau+4)\left\lceil \log(\beta_{NL})\right\rceil +2(\tau+4)) + (n_{NL}-3)(\tau+4) + 2^{\tau+2}\right]+12
\end{align}
where fllowing \cref{eq:optimal_beta_depth}
\begin{align}\label{eq:beta_k}
    \beta_{NL} = \left\lfloor \min\left(\frac{n_{\text{dirty}}}{n_{NL}}, \frac{n_{\text{tof}}}{\kappa_{NL}}, \frac{2\cdot (2^{\tau+4+1}-1)}{3n_{NL}(\tau + 4)/\kappa_{NL}} \log_e(2)\right)\right\rfloor .
\end{align}

\subsubsection{Toffoli cost for preparing the momentum state superposition for $V$}
All the subroutines used in the inequality test procedure, such as preparing the superposition over $\ket{m,\mu}$, or  checking the signs of $\bnu$ and testing $\bnu \neq 0$, remain exactly the same as in \cite{su2021fault}. Their Toffoli cost totals $8(n_p-1)+6n_p+2+n_{M_V}$. Next, we compute the cost for the \textsc{SelSwapDirty} QROM that reads $\bnu$ (i.e. $3n_p$ qubits) and outputs $\lceil\frac{M_V(2^{\mu-2}b_{min})^2}{G_\nu^2}\rceil$ (\cref{appssec:prep_implementation}) with precision $n_{M_V}$. 
This cost is obtained as in (\cref{eq:qrom_gate_cost_output_1d}). The uncomputation of this QROM in $\PREP^\dagger$ doubles this, yielding a total of $2(2\lceil \frac{2^{3n_p}}{\beta_V}\rceil+ 3\beta_Vn_{M_V})$. By a derivation similar to \cref{eq:optimal_beta_cost}, the optimal value for $\beta_V$ is
\begin{align}\label{eq:beta_V_cost}
    \beta_V = \left\lfloor \min\left(\sqrt{\frac{2(2^{3n_p})}{3n_{M_V}}},\frac{n_{\text{dirty}}}{n_{M_V}}\right) \right\rfloor.
\end{align}
The rest of the cost is derived in \cite{su2021fault} and totals $8(n_p-1)+6n_p+2+n_{M_V}$. Finally, the $a_V$ amplitude amplifications multiplies the total by $2a_V+1$.

\textit{Circuit Depth. } Using a similar notation to the previous part, the constraints are $\beta_V \kappa_V \le n_{\text{tof}}, \beta_V n_{M_V} \le n_{\text{dirty}}$. Then, following \cref{eq:qrom_gate_cost_output_1d_parallelized_v2}, the circuit depth is:
\begin{align}
    (2a_V+1)\left[2\left(2\left\lceil \frac{2^{3n_p}}{\beta_V}\right\rceil+ 3\lceil \log(\beta_V)\rceil \left\lceil \frac{n_{M_V}}{\kappa_V}\right\rceil \right) + 8(n_p-1)+6n_p+2+n_{M_V}\right]
\end{align}
where
\begin{align}\label{eq:beta_V}
    \beta_V =  \left\lfloor \min\left(\frac{n_{\text{dirty}}}{n_{M_V}}, \frac{n_{\text{tof}}}{\kappa_V}, \frac{2\cdot 2^{3n_p}}{3n_{M_V}/\kappa_V} \log_e(2)\right)\right\rfloor
\end{align}

\subsubsection{Toffoli cost for the momentum state superposition for $U_{loc}$}
Recall that we used \textsc{SelSwapDirty} QROMs to directly prepare the superposition over $\ket{\bnu}_{k_{loc}}\ket{t}_{k'_{loc}}\ket{\sgn(\gamma_t(G_\nu))}_{s_{loc}}$. The QROMs eventually read $3n_p+\tau$ qubits, and we assume the precision of the rotations to be $n_{M_{loc}}$ bits. Notice that in the superposition preparation scheme, the last QROM oracle outputs $\sgn(\gamma_I(G_\nu))$ into the $s_{loc}$ register, thus uses a register of size $n_{M_{loc}}+1$ for its output. As a result the cost formula is slightly changed. Given $\PREP^\dagger$, the total cost is 
\begin{equation}
    2\cdot \left[2\left(2\left\lceil \frac{2^{3n_p+\tau+1}-1}{\beta_{loc}}\right\rceil+ 3\beta_{loc}n_{M_{loc}}(3n_p-1)+3\beta_{loc}(n_{M_{loc}}+1)+2(3n_p)\right)+(n_{M_{loc}}-3)(3n_p+\tau)\right],
\end{equation}
where we set 
\begin{align}\label{eq:beta_loc_cost}
    \beta_{loc} =  \left\lfloor \min\left(\sqrt{\frac{2(2^{3n_p+\tau+1}-1)}{3(n_{M_{loc}}+1)(3n_p)}},\frac{n_{\text{dirty}}}{n_{M_{loc}}}\right) \right\rfloor.
\end{align}

\textit{Circuit Depth. } Following the same strategy in the previous case, we consider the constraints  $\beta_{loc} \kappa_{loc} \le n_{\text{tof}}, \beta_{loc}(n_{M_{loc}}+1) \le n_{\text{dirty}}$. We are using $n_{M_{loc}}+1$ instead of $n_{M_{loc}}$ for the very last QROM, therefore, to get an upper bound of the resource estimate, we use $n_{M_{loc}}+1$ in the constraints. The depth is
\begin{align}
    2\cdot &\big[ 2\left(2\left\lceil \frac{2^{3n_p+\tau+1}-1}{\beta_{loc}} \right\rceil + 3\left\lceil \log(\beta_{loc})\right\rceil \left\lceil \frac{n_{M_{loc}}}{\kappa_{loc}} \right\rceil (3n_p-1)+3\left\lceil \log(\beta_{loc})\right\rceil \left\lceil\frac{n_{M_{loc}}+1}{\kappa_{loc}}  \right\rceil + 2(3n_p)\right) \nonumber \\
    &+ (n_{M_{loc}}-3)(3n_p+\tau)\big]
\end{align}
where
\begin{align}\label{eq:beta_loc}
    \beta_{loc} = \left\lfloor \min\left(\frac{n_{\text{dirty}}}{n_{M_{loc}}+1}, \frac{n_{\text{tof}}}{\kappa_{loc}}, \frac{2\cdot (2^{3n_p+\tau+1}-1)}{3(n_{M_{loc}}+1)(3n_p)/\kappa_{loc}} \log_e(2)\right)\right\rfloor
\end{align}

\subsubsection{Toffolis to compute the selection qubit registers and $\ket{\cdot}_{NL,c}$}
In \cref{appssec:SEL_implementation}, we mentioned how to compute the register $\ket{\cdot}_{\rchi}$ that flags the success of the state preparation for the operator $\rchi$. Their Toffoli costs are $6,3,3,\tau+9$, for $\rchi=T,V,U_{loc},U_{NL}$, respectively. The inverse of PREP  in this case can be done via measurements and Clifford gates. Similarly, the cost for $\ket{\cdot}_{NL,c}$, defined in the implementation of $\SEL_{NL}$, is four Toffolis, and can be uncomputed via measurements and Clifford gates.

\subsection{Toffoli cost of Select}\label{ssec:SEL_costings}
Below we briefly go over cost estimates that are very similar to the OAE setting.

\textit{CSWAPs and the SEL cost for $T$. }Recall that at the beginning and end of all $\SEL_{\rchi}$ operators, there is a shared circuit of CSWAPs. The cost estimate used in \cite[Eq. (72-73)]{su2021fault} applies without any change, to perform the CSWAPs on the plane wave vectors for all four operators and the bits $r,s$ of coordinates $\omega,\omega'$ for the operator $T$, copying them back and forth to an auxiliary register and implementing the necessary phases for $T$.

\textit{Controlled addition/subtraction of the momentum state vector. } While the same cost in the OAE setting \cite[Eq. (93)]{su2021fault} was computed as $24n_p$, here one needs to take into account two separate application of this operation for $U_{loc},V$ yielding $48n_p$.

\textit{Phasings by the nuclei coordinates. } There are two such phasings, one for the local part, which cost $6n_pn_R$ is computed exactly as in \cite[Eq, (97)]{su2021fault}, and another for the non-local, which cost is $ 12n_pn_R$, as we apply the phase once for $\bG_q$ and then for $\bG_p$ after the reflection on $\ket{\Psi_{I,\sigma}}$. 

\subsubsection{Reflection on $\ket{\Psi_{I,\sigma}}$}
There are two costs to be estimated. One is the preparation of $\ket{\Psi_{I,\sigma}}$ by the operator $U_{I,\sigma}$ (and its inverse) and the other is the reflection onto $\ket{0}_{NL}\ket{0}^{\otimes 3n_p}$. The latter's Toffoli cost is $(3n_p+1)-2$. Below we compute the costs for a material with an orthogonal lattice, and end with a remark on the changes required for the non-orthogonal case.

There are three sets of \textsc{SelSwapDirty} QROMs used for each coordinate, along with their inverse that follows the reflection on $\ket{0}_{NL}\ket{0}^{\otimes 3n_p}$. This means a factor of six. The QROMs read $n_\qb = n_p+\tau +4 $, i.e. the number of bits in one coordinate $\bp_\omega$ of the plane wave vector $\bp$, the type of the nuclei and Gaussian state $(t,\sigma)$. But since $\tau+4$ of these qubits are already determined by the PREP state, the iterative process of QROM to build the superposition happens only $n_p$ times. Thus the reflection costs 
\begin{equation}
6\left[2\left(2\left\lceil \frac{2^{n_\qb+1}-2^{n_\qb-n_p}}{\beta_{\Psi}}\right\rceil + 3\beta_{\Psi} n_{\Psi} n_p+2n_p\right) + (n_{\Psi}-3)n_p\right]+3n_p-1
\end{equation}
with
\begin{align}\label{eq:beta_NL_cost}
    \beta_{\Psi}=  \left\lfloor \min\left(\sqrt{\frac{2(2^{n_\qb+1}-2^{n_\qb-n_p})}{3n_{\Psi}n_p}},\frac{n_{\text{dirty}}}{n_{\Psi}}\right) \right\rfloor.
\end{align}
To the cost above, one needs to add the one for implementing $V_{I,\sigma}$ for the reflection onto $\ket{\Psi_{I,2,0}}$. We refer to \cref{appssec:SEL_implementation} for the relevant notations.  We can summarize this cost as $5\cdot  2(QROM_i+3\cdot QROM_{\Psi}+(n_{AA}-2)+2)$, where 
\begin{itemize}
    \item the factor of five is because of the exact amplitude amplification,
    \item $n_{AA}-2$ is due to using the trick in \cref{app:aa_fewer_steps},
    \item the additional $2$ is to compute the flag qubit out of the three hot encoded qubits $\ket{i}$, and
\end{itemize} 
\begin{align}
    QROM_i &= 2\cdot (2^{3+1}-1) + 3 (n_{\bb}-3),\\\label{eq:QROM_Psi}
    QROM_{\Psi} &= 2\left(2\left\lceil \frac{2^{n_\qb'+1}-2^{n_\qb'-n_p}}{\beta_{\Psi}'}\right\rceil + 3\beta_{\Psi}' n_{\Psi} n_p+2n_p\right)+(n_{\Psi}-3)n_p,
\end{align}
with $n_\qb' = n_p+\tau+2$ and $n_{\bb} = 50$ (\cref{sssec:error_NL}). $QROM_i$ is the cost of preparing the one-hot-encoded superposition $\sum_{i=1}^{3}\sqrt{\frac{(\bb_i)_i^4}{\sum_j (\bb_j)_j^4}}\ket{i}$, and $QROM_{\Psi}$ is the cost for preparing the superposition $\ket{\Psi_{i,I,\sigma}}$ for each given $i$. According to \cref{eq:optimal_beta_cost} the expression for $\beta_{\Psi}'$ is
\begin{align}\label{eq:beta_NL_prime_cost}
    \beta_{\Psi}' = \left\lfloor \min\left(\sqrt{\frac{2(2^{n_\qb'+1}-2^{n_\qb'-n_p})}{3n_{\Psi}n_p}},\frac{n_{\text{dirty}}}{n_{\Psi}}\right) \right\rfloor.
\end{align}
Notice that while we are using a different $\beta_{\Psi}'\neq \beta_{\Psi}$, the same $n_{\Psi}$ in \cref{eq:beta_NL_cost} is used. This enables us to also take into account the error in preparing $\ket{\Psi_{I,2,0}}$  when analyzing the error $\error_{\Psi}$ due to the choice $n_{\Psi}$ (\cref{sssec:error_NL}).

\textit{Non-orthogonal case. }The non-orthogonal lattices in our case studies allow a decomposition of the Gaussian states into a 1D and 2D factor. The cost for $U_{I,\sigma}$ will change to include that of two different QROMs reading $n_p+\tau+4$ and $2n_p+\tau+4$ qubits. Further, for the implementation of $V_{I,\sigma}$, we only need one amplitude amplification, and the hot-encoded superposition above is over two qubits. The necessary changes to the cost formulae are straightforward. For example, for $V_{I,\sigma}$, the cost changes to $3\cdot2 (QROM_{\Psi,1}+QROM_{\Psi,2}+QROM_{i}'+2)$, where $QROM_{\Psi,1}$ is the same as (\ref{eq:QROM_Psi}), and
\begin{align}
    QROM_i' &= 2\cdot (2^{2+1}-1) + 3 (n_{\bb}-3),\\ \label{eq:cost_QROM_Psi_2}
    QROM_{\Psi,2} &= 2\left(2\left\lceil \frac{2^{n_\qb'+1}-2^{n_\qb'-n_p}}{\beta_{\Psi}'}\right\rceil + 3\beta_{\Psi}' n_{\Psi} (2n_p)+4n_p\right)+(n_{\Psi}-3)2n_p,
\end{align}
where we note the substitution $n_\qb' = 2n_p+\tau+2$ and replacing $n_p$ with $2n_p$ where appropriate.

\textit{Circuit Depth. }Given the decomposition of QROM to three parallel QROMs, we can parallelize the computation more so than in the previous procedures. We apply the three sets of QROMs in parallel, in addition to reducing their depth using \cref{eq:qrom_gate_cost_superposition_1d_parallelized_v2}, yielding a circuit depth of
\begin{align}\label{eq:runtime_cost_parallel_NL}
    2\left[2\left(2\left\lceil \frac{2^{n_\qb+1}-2^{n_\qb-n_p}}{\beta_{\Psi}}\right\rceil + 3\lceil \log(\beta_{\Psi})\rceil \left\lceil \frac{n_{\Psi}}{\kappa_{\Psi}}\right\rceil n_p + 2n_p\right) + (n_{\Psi}-3)n_p\right]
\end{align}
to implement $\ket{\Psi_{I,\sigma}}$ for $\sigma \neq (2,0)$. However note that the number of dirty qubits used in this case is $3\beta_{\Psi}n_{\Psi}\le n_{\text{dirty}}$ with the factor $3$ due to simultaneously preparing the three 1D Gaussian states. Similarly we have $3\beta_{\Psi}\kappa_{\Psi} \le n_{\text{tof}}$. These constraints imply the following optimization
\begin{align}\label{eq:beta_NL}
    \beta_{\Psi} = \left\lfloor\min \left(\frac{2(2^{n_\qb+1}-2^{n_\qb- n_p})}{3n_{\Psi}n_p/\kappa_{\Psi}} \log_e(2) , \frac{n_{\text{dirty}}}{3n_{\Psi}},  \frac{n_{\text{tof}}}{3\kappa_{\Psi}}\right) \right\rfloor .
\end{align}
The same arguments applies for the circuit depth of $QROM_{\Psi}$ :
\begin{align}
    2\left(2\left\lceil \frac{2^{n_\qb'+1}-2^{n_\qb'-n_p}}{\beta_{\Psi}'}\right\rceil + 3\lceil \log(\beta_{\Psi}')\rceil \left\lceil \frac{n_{\Psi}}{\kappa_{\Psi}}\right\rceil n_p + 2n_p\right) + (n_{\Psi}-3)n_p
\end{align}
where 
\begin{align}\label{eq:beta_NL_prime}
    \beta_{\Psi}' = \left\lfloor\min \left(\frac{2(2^{n_\qb'+1}-2^{n_\qb'- n_p})}{3n_{\Psi}n_p/\kappa_{\Psi}'} \log_e(2) , \frac{n_{\text{dirty}}}{3n_{\Psi}},  \frac{n_{\text{tof}}}{3\kappa_{\Psi}'}\right) \right\rfloor .
\end{align}

In the non-orthogonal case studies, the circuit depth of the 2D Gaussian state is always the larger one, and therefore it is the only one that needs to be taken into account. For example, for $V_{I,\sigma}$, this means the depth formula is $3\cdot 2(QROM_{\Psi,2,d}+QROM_{i}'+2)$ where $QROM_{\Psi,2,d}$ is the depth of the circuit with cost $QROM_{\Psi,2}$ (\ref{eq:cost_QROM_Psi_2}). Again, the changes are straightforward to calculate. For example, to compute the part related to preparing the 2D Gaussian state, $n_\qb$ in \cref{eq:runtime_cost_parallel_NL} must be changed to $2n_p+\tau+4$, and the conditions for $n_{\text{dirty}},n_{\text{tof}}$ change to $2\beta_{\Psi}\kappa_{\Psi} \le n_{\text{tof}}, \ 2\beta_{\Psi}\kappa_{\Psi} \le n_{\text{tof}}$ (similarly for $\beta_{\Psi}'$). 

\subsection{Toffoli cost of the reflection on the preparation qubits}\label{app:gate_costings_ROT_costing}

The qubitization $Q$ applies a reflection on the qubits used in the state preparation. As argued in \cite[Eq. (98)]{su2021fault}, the number of these qubits equals the Toffoli cost of this reflection. In our case, borrowing from \cite[App. C]{su2021fault} in some cases, the qubits that need to be reflected upon are
\begin{itemize}
    \item The two qubits that are rotated to select between the operators $\rchi$.
    \item There are $n_p$ qubits for each of $r$ and $s$, for a total of $2n_p$.
    \item There are five qubits for register $f$, for a total of 5. Note the flag qubits for ineligible states are rezeroed by $\PREP^\dagger$, so no reflection needed on them.
    \item There are $2n_\eta = 2\lceil\log \eta \rceil$ qubits for registers $d,e$ with 2 qubits that are rotated, for a total of $2n_\eta+2$ qubits (the flag qubits are rezeroed, so no reflection on them).
    \item Qubits used in the momentum state preparation for $U_{loc},V$, specifically the following for $V$:
        \begin{enumerate}
            \item $3(n_p+1)$ qubits for $\ket{\nu}$.
            \item $n_p$ qubits for the unary-encoded $\ket{\mu}$.
            \item $n_{M_{V}}$ qubits for $\ket{m}$.
        \end{enumerate}
   and the following for $U_{loc}$:
    \begin{enumerate}
        \item $3n_p$ qubits for $\ket{\nu}$ .
        \item One qubit for $\ket{\sgn(\gamma_I(G_\nu))}$,
        \item the $\tau+\max_t n_t+2$ qubits used for the two uniform superpositions over type and enumeration of nuclei, along with the two rotated ancillae.
    \end{enumerate} 
    All for a total of $7n_p+n_{M_V}+6+\tau+\max_t n_t$.
    \item We have the arithmetic overflow qubits for the addition/subtraction, which were computed to be 6 in \cite{su2021fault}, and is 9 for us since the subtraction is done separately for $U_{loc}$ and $V$.
    \item The qubits used in registers $k_{NL},k_{NL}',s_{NL}$, of which there are $\tau+\max_t n_t+5$, and the rotation ancilla needed for $\ket{t_j}$ for the uniform superposition over $n_t$ nuclei, for a total of $\tau+\max_t n_t+6$.
    \item Two qubits used for the trick in \cref{app:aa_fewer_steps} for implementing the reflections onto states of type $\sigma=(2,0)$ in $\SEL_{NL}$. This trick is not used for materials with non-orthogonal lattices.
    \item Three qubits for the one-hot-encoded superposition $\sum_{i=1}^{3}\sqrt{\frac{(\bb_i)_i^4}{\sum_j (\bb_j)_j^4}} \ket{i}$. This is two for non-orthogonal lattices.
    \item All ancilla qubits used by all QROMs are either dirty and from the circuit itself, which are returned to their initial state, or are clean (such as in the \textsc{Select} variant of QROM), which are uncomputed either by the procedure itself or measurement and Clifford gates after the QROM. 
\end{itemize}
Note that all other flag or ancilla qubits not mentioned above, such as the qubits $\ket{\cdot}_{\rchi}$s, are rezeroed. Overall, the total number of qubits and Toffoli cost for the reflection is at most
\begin{align}
    2n_\eta+9n_p+n_{M_V}+35+2(\tau+\max_t n_t)
\end{align}
and three less for the non-orthogonal cases.

\section{Qubit costings}\label{app:qubit_cost}
We list the entire qubit cost below borrowing from \cite[App. C]{su2021fault} in parts where the subroutines involved stay the same.
\begin{enumerate}
    \item The system register has size $3\eta n_p$.
    \item The control register for the phase estimation needs $\left\lceil \log\left(\left\lceil \frac{\pi\lambda}{2\epsilon_{\rm pha}} \right\rceil\right)\right\rceil$ qubits.
    \item The phase gradient state that is used for the phase rotations. There are $\max(n_R+1,n_{\rchi},n_B,n_{NL},n_{\bb},n_{\Psi})$ bits used in the phasing; each of these $n_X$'s is the number of bits of a phase gradient state used within the subroutine building the superposition on the corresponding register.
    \item One qubit for the $\ket{T}$ state used catalytically for controlled Hadamards.
    \item Two qubits for register $\mcX$.
    \item Four qubits for the four registers $\ket{\cdot}_{\rchi}$.
    \item The $2n_\eta+5$ qubits from the preparation of the superpositions over $i$ and $j$; $\eta$ qubits for each of these registers, 2 qubits for the rotation preparing the superpositions, 2 qubits that flag the success of the two preparations, and 1 qubit that flags whether $i=j$.
    \item Eight qubits for register $f$. Five qubits in register $f$ along with three additional qubits needed to flag the eligible basis states. 
    \item Five qubits in total used by the two QROMs to make the superposition in register $f$. Note five are used and rezeroed immediately, before four of them are reused to make the $\sgn(\omega,\omega')$ (and rezeroed again).
    \item The states $r$ and $s$ are prepared in unary, and need $n_p$ qubits each, for a total of $2n_p$.
    \item The register $\bR$ itself uses $3n_R$ qubits.
    \item The two QROMs used for computing the register $\bR$ each use and immediately clean $\tau+\max_t n_t+1$ ancilla qubits.
    \item The register $k'_{loc}$ uses $\tau +\max_t n_t$ qubits.
    \item The registers $k_{NL},k_{NL}',s_{NL}$ use  $\tau +\max_t n_t + 4 + 1$ qubits.
    \item Four qubits for the uniform superposition over the $n_t$ nuclei of each type $t$ in $k'_{loc},k_{NL}'$. One for the rotation for each and one for the success flag.
    \item Making the superposition on $k_{NL},k_{NL}',s_{NL}$ consumes $\beta_{NL}n_{NL}$ dirty qubits and $n_{NL}+(\tau+4)$ clean qubits which are all returned to their initial state.
    \item Flagging the eligible states in $k_{NL}',k_{NL}$ uses four qubits, one for the type, one for the $\sigma$ index, and two to compute whether these two flags and the flag for $n_t$ superposition are successful.
    \item Three inequality tests using temporarily $4$ qubits to determine where $\sigma$ lies ($\sigma=0, 1\le \sigma \le 3, 4\le\sigma\le 10$).
    \item The same three inequality tests use two qubits to determine $\sigma$'s associated type for computing $s_{NL}$.
    \item Outputting the sign into $s_{NL}$ uses and rezeroes $\tau+2$ ancilla qubits.
    \item The preparation of the momentum state superposition for $V$:
        \begin{enumerate}
            \item Storing $\bnu$ requiring $3(n_p+1)$ qubits.
            \item $\mu$ needs $n_p$ qubits.
            \item $n_{M_V}$ qubits for the equal superposition state.
            \item $3n_p+2$ qubits for testing $\ket{-0}$, including the flag qubit.
            \item $2n_p+1$ qubits used in signaling whether $\bnu$ is outside $\mcB_\mu$, including the flag qubit. 
            \item The $n_{M_V}$ bits required by QROM to compute one side of the inequality test, and the $\beta_Vn_{M_V}$ dirty and $n_{M_V}+3n_p$ clean ancilla qubits it uses and immediately returns to initial state to compute that side.
            \item The qubit $j_V$ resulting from the inequality test.
            \item Two qubits, one flagging success of all three of  inequality test, no negative zero and $\bnu$ not outside $\mcB_\mu$, and the other an ancilla qubit used to produce the triply controlled Toffoli.
        \end{enumerate}
    \item The preparation of the momentum state superposition for $U_{loc}$:
        \begin{enumerate}
            \item Storing $\bnu$ requires $3n_p$ qubits.
            \item Storing $I$ requires $\tau +\max_t n_t $ qubits.
            \item One qubit for the $s_{loc}$ register.
            \item The $n_{M_{loc}}$ bits required by QROM with the sign into $s_{loc}$, and the $\beta_{loc}(n_{M_{loc}}+1)$ dirty and $(n_{M_{loc}}+1)+(3n_p+\tau)$ clean ancilla qubits it uses and immediately returns to initial state to compute that RHS.
        \end{enumerate}
    \item $3+3+3+\tau+9$ ancilla qubits used to compute $\ket{\cdot}_{\rchi}$, along with $4$ more to compute and store $\ket{\cdot}_{NL,c}$.
    \item The temporary ancillae used in the addition and subtraction of $\bnu$ for the $U_{loc}, V$.  This is identical to the OAE case and we simply recall it to be thorough. The cost here is given by items (a) and (c), giving a total of $5n_p+1$:
        \begin{enumerate}
        \item In implementing the $\SEL$ operations, we need to control a swap of a momentum register into an ancilla, which takes $3n_p$ qubits for the output. The $n_\eta-1$ temporary ancillae for the unary iteration on the $i$ or $j$ register can be ignored because they are fewer than the other temporary ancillae used later.
        \item We use $2n_p+3$ temporary qubits to implement the block encoding of $T$, where we copy components $\omega,\omega'$ of the momentum into an ancilla, copy out two bits of these components of the momentum, then perform a controlled phase with those two qubits as control as well as the qubit flagging that $T$ is to be performed.
        \item For the controlled addition or subtraction by $\bnu$ in the $\SEL$ operations for $U_{loc}$ and $V$, we use $n_p$ bits to copy a component of $\bnu$ into an ancilla, then there are another $n_p+1$ temporary qubits used in the addition, for a total of $2n_p+1$ temporary qubits in this part. Even though the momentum $\bnu$ registers are different for $U_{loc},V$, the same temporary ancillae can be used since the computations are not done in parallel for the two.
        \item There are also temporary qubits used in converting the momentum back and forth between signed and two's complement, but these are fewer than those used in the previous step.
        \end{enumerate}
    \item There are 2 overflow qubits obtained every time we add or subtract a component of $\bnu$ into a momentum.
    All these qubits must be kept, giving a total of $9$.
    \item There are also temporary qubits used in the arithmetic to implement $e^{-i \bG_\nu \cdot \bR_I},e^{-i \bG_q \cdot \bR_I},e^{i \bG_p \cdot \bR_I}$. The arithmetic requires a maximum of $2(n_R-2)$ qubits. Note that the $\bR_I$ can be output by the QROM, the phase factor applied, and the $\bR_I$ erased, after (or before) the arithmetic for addition/subtraction of $\bnu$ is performed. So we only need to take the maximum of the $5n_R-4$ qubits used in this item and item 11, and the $5n_p+1$ temporary qubits used in item 24.
    \item Two qubits used to control between adding and subtracting $\bnu$ in order to make $\SEL$ self-inverse. This is required to employ the techniques of \cite{babbush2018encoding} to avoid controlled application of the qubitization operator.
    \item The clean qubit cost of the QROM used for $\ket{\Psi_{I,\sigma}}$ is $3n_{\Psi}+3(n_p+\tau+4)$ which are rezeroed, and the dirty qubit cost of the QROM is $3\beta_{\Psi}n_{\Psi}$. This changes to $2n_{\Psi}+3n_p+2\tau+8$ and $2\beta_{\Psi}n_{\Psi}$ for non-orthogonal lattices. Notice this is assuming simultaneous application of the QROMs for the coordinates. Otherwise, the costs just listed would become $n_{\Psi}+(n_p+\tau+4),\beta_{\Psi}n_{\Psi},n_{\Psi}+2n_p+\tau+4,\beta_{\Psi}n_{\Psi}$, respectively.
    \item The reflection after $\ket{\Psi_{I,\sigma}}$ preparation uses $3n_p-1$ temporary qubits.
    \item The one-hot-encoded superposition for $\ket{\Psi_{I,2,0}}$ requires $3$ qubits, along with the QROM preparing that superposition requiring an additional $3$ clean qubits which are immediately rezeroed. Both of these requirements become two instead of three when the lattice is non-orthogonal.
    \item The clean qubit cost of the QROM used for $\ket{\Psi_{I,2,0}}$ is $3n_{\Psi}+3(n_p+\tau+2) $ which are rezeroed, and the dirty qubit cost of the QROM is $3\beta_{\Psi}'n_{\Psi}$. This changes to $2n_{\Psi}+3n_p+2\tau+4$ and $2\beta_{\Psi}'n_{\Psi}$ for non-orthogonal lattices. Similar to a previous item, these are listed assuming simultaneous application of the QROMs for the coordinate. Otherwise, the costs listed become $n_{\Psi}+(n_p+\tau+2),\beta_{\Psi}'n_{\Psi},n_{\Psi}+2n_p+\tau+2,\beta_{\Psi}'n_{\Psi}$, respectively. 
    \item The (controlled) rotation following the entire qubitization was computed previously to need $2n_\eta+10n_p+n_{M_V}+35+2(\tau+\max_t n_t)$ Toffolis which is the same as the number of temporary qubits it needs.
\end{enumerate}
\begin{rmk}
In case we use the technique in \cref{app:aa_fewer_steps}, we need to change the phase gradient qubit cost (item 3) to $\max(n_R+1,n_{\rchi},n_{AA},n_B,n_{NL},n_{M_{loc}},n_{\bb},n_{\Psi})$. Furthermore, we need to add 2 each time we use this technique, which we do once for $\SEL_{NL}$ when the lattice is orthogonal. 
\end{rmk}
To sum up the above, we need to take into account which temporary ancillae can be reused, and take the maximum of the dirty qubits and clean qubits to get the total number of qubits used in the algorithm. 

First, we list the subroutines and the number of temporary ancillae they need. All ancillae below are clean and temporary unless mentioned otherwise.
\begin{enumerate}
    \item Five ancillae for the QROM in register $f$. 
    \item Register $\bR$ with $3n_R$ qubits can be used and cleaned immediately before other procedures as described in the listing above.
    \item $2(\tau+\max_t n_t+1)$ qubits for the QROMs computing register $\bR$.
    \item $\beta_{NL}n_{NL}$ dirty and $n_{NL}+(\tau+4)$ clean qubits for the QROM on $k_{NL},k_{NL}',s_{NL}$.
    \item Four qubits used for the inequality tests $\sigma=0,1\le \sigma\le 3, 3<\sigma$.
    \item $\tau +2$ ancillae for the QROM to output $s_{NL}$.
    \item $\beta_Vn_{M_{V}}$ dirty and $n_{M_{V}}+3n_p$ clean qubits for the QROM on $k_{V}$.
    \item $\beta_{loc}(n_{M_{loc}}+1)$ dirty and $(n_{M_{loc}}+1)+(3n_p+\tau)+n_{M_{loc}}$ clean qubits for the QROM on $k_{loc},k'_{loc}$.
    \item $5n_p+1$ and the $2(n_R-2)$ ancilla mentioned in the previous listing in items 24 and 26.
    \item $3n_{\Psi}+3(n_p+\tau+4)$ clean and $3\beta_{\Psi}n_{\Psi}$ dirty qubit cost of the QROM for implementing $\SEL_{NL}$. This changes to $2n_{\Psi}+3n_p+2\tau+8$ and $2\beta_{\Psi}n_{\Psi}$ for non-orthogonal lattices. We also recall the comment on our assumption of simultaneous applications of QROMs for the coordinates.
    \item $3n_p-1$ qubits for the reflection for $\SEL_{NL}$.
    \item $3$ clean ancilla along with $3n_{\Psi}+3(n_p+\tau+2)$, and $3\beta_{\Psi}'n_{\Psi}$ dirty qubits for the QROM for implementing $\SEL_{NL}$. This changes to $2+2n_{\Psi}+3n_p+2\tau+4$ and $2\beta_{\Psi}'n_{\Psi}$ for non-orthogonal lattices. We also recall the comment on our assumption of simultaneous applications of QROMs for the coordinates.
    \item $2n_\eta+9n_p+n_{M_V}+35+2(\tau+\max_t n_t)$ temporary ancillae used for the reflection.
\end{enumerate}
Regarding the dirty qubits requirement, since none of the operations above happen in parallel, we can simply take the maximum of them all to obtain $n_{\text{dirty}}$ as the dirty qubits requirement. 

From item 1 to item 8, all calculations are for PREP. With the exception of item 2, their clean ancillae are rezeroed immediately and thus the clean qubit requirement is the maximum of all the requirements: $n_{clean,PREP} = \max(5, 2(\tau+\max_t n_t), n_{NL}+(\tau+4), 4 , \tau +2, n_{M_V}+3n_p, (n_{M_{loc}}+1)+(3n_p+\tau)+n_{M_{loc}})$.

Once PREP is done, the SEL operations take over (items 9-12), and can use the $n_{clean,PREP}$ qubits freed up. Some SEL and PREP operations happen in specific orders as described for example in item 26 in the previous listing. The temporary clean qubit requirement $n_{tmp,clean}$ is obtained by
\begin{align}\label{eq:n_tmp_clean_H}
n_{tmp,clean,H} =& \max(5n_p+1,5n_R-4) + \nonumber\\
&\max(n_{clean,PREP},3n_p-1,3n_{\Psi}+3(n_p+\tau+4), 3n_{\Psi}+3(n_p+\tau+2)+3)\\
n_{tmp,clean} =& \max(n_{tmp,clean,H}, 2n_\eta+9n_p+n_{M_V}+35+2(\tau+\max_t n_t)),
\end{align}
and the total qubit cost is $n_{total} = \max(n_{\text{dirty}}, n_{tmp,clean}+n_{clean})$, where $n_{clean}$ is all the clean qubit costs that were not in the temporary clean list above.

The formula above is for orthogonal lattices, and when the lattice is non-orthogonal, we use $2n_{\Psi}+3n_p+2\tau+8$ and $2n_{\Psi}+3n_p+2\tau+4+2$ instead of the corresponding terms in the equation for \cref{eq:n_tmp_clean_H}.

\section{Resource estimation configuration and detailed results}\label{app:res_est_det_results}

\subsection{Parameters for resource estimation}\label{sssec:depth_params}

According to \cref{eq:min_possible_depth}, the minimum possible circuit depth of \qromalgo~using \textsc{SelSwapDirty} QROMs is $O(n^2+(b-3)n)$. However this is at the expense of exponentially many dirty qubits and simultaneous Toffoli applications. To derive a more reasonable depth, we set the limits with which QROM can optimize its circuit depth. These limits are values we set for the parameters $n_{\text{dirty}}$ and $n_{\text{tof}}$. The two set the constraints \cref{eq:beta_nparallel,eq:beta_ndirty} on the space-depth trade-off parameter $\beta$ (\ref{eq:optimal_beta_depth}). Notice that the Toffoli cost has (only) the dirty qubit constraint (\ref{eq:optimal_beta_cost}). As a result it is the latter that needs to be determined first. 

To do so, we first run a simulation to compute the clean qubit cost of the all-electron algorithm. Notice that the clean qubit cost is independent of the trade-off parameter $\beta$. Therefore, for each $N$, it is well-defined to set $n_{\text{dirty}}$ as the number of clean qubits that the all-electron algorithm needs for $N$ many plane waves. This allows for a fair comparison as for a fixed number of plane waves, both algorithms have the same available number of dirty qubits to optimize the depth of their QROM computations.

Given the fact that the AE algorithm always consumes far more clean qubits, we would like our PP-based circuits with optimized depth to use as much as possible the dirty qubits available. Choosing a small value for $n_{\text{tof}}$ can prevent that and we found that setting $n_{\text{tof}} = 500$ is approximately the smallest value that satisfies this requirement for all of our case studies.

As defined in \cref{app:qrom_parallelization}, there is another parameter $\kappa$ involved in the QROM depth calculation that is subroutine-dependent (we have $\kappa_{\Psi},\kappa_{loc},$ etc.). However we set a uniform $\kappa = 1$ value for all our estimations. Changing this value has shown insignificant or worsening impact on the depth.

The success probability threshold for the amplitude amplification involved in preparing $\sum_{\bnu} \frac{1}{G_\nu}\ket{\bnu}$ (\ref{eq:prep_v_state}) has no impact on the qubit cost. However, it changes the Toffoli depth as $\lambda_V \propto p_{\text{th}}^{-1}$. We set $p_{\text{th}}=0.75$ for all pseudopotential experiments. For the all-electron setting, given that the algorithm in \cite{su2021fault} has slight variations according to the value of the initial probability of success (denoted by $p_\nu$ in \cite[Thm. 4]{su2021fault}), we select a $p_{\text{th}}$ that gives the lowest Toffoli depth. 

Lastly, we need to determine the errors listed in \cref{app:errors} while targeting the chemical accuracy $\error = 0.043\text{eV}$. Recall that we have to satisfy:
\begin{align}
    \error^2 \ge \error_{\text{QPE}}^2 + (\error_{\rchi}+\error_B+\error_{NL}+\error_R+\error_{M_V}+\error_{M_{loc}} + \error_{\Psi})^2.
\end{align}
As the cost formula $\left\lceil \frac{\pi \lambda}{\error_{\text{QPE}}}\right\rceil (2\PREP_{cost} + \SEL_{cost})$ suggests, among all errors, the inverse of $\error_{\text{QPE}}$ contributes directly to the cost, as others only do so polylogarithmically. Therefore, we allocate the vast majority (99.5\%) of the error to $\error_{\text{QPE}}$, while distributing the rest equally among all other errors: $\error_{\rchi}=\error_B=\error_{NL}=\error_R=\error_{M_V}=\error_{M_{loc}} = \error_{\Psi} = \frac{\sqrt{0.5\%} \times \error}{7}$.

\subsection{Clean and total qubit cost}

As mentioned in the main text, the highest contribution to the qubit cost comes from the encoding of the plane waves, needing $3\eta n_p$ many clean qubits. The qubit cost in the \cref{table:materials_NPP_vs_NAE_depth,table:mat_dis,table:mat_limnfo,table:mat_limnnio,table:mat_limno} is measured in two parts, clean and total. The clean cost is lower than the total cost for the pseudopotentials, but they are equal in the AE setting, as enforced by the definition of $n_{\text{dirty}}$ in \cref{sssec:depth_params}. Notice these numbers are reported for the $n_{\text{tof}}=500$ runs optimizing the depth of the circuit, and not for the optimized costs. Further, this is only relevant to the PP-based algorithm, and the clean and dirty costings when optimizing the Toffoli cost of the PP-based algorithm are even smaller.

{
	\renewcommand{\arraystretch}{1.95}
\begin{table}[!h]
\centering
	\begin{tabular}{|Sc||Sc|Sc||Sc|Sc||Sc|Sc|}
 \hline
		\multirow{2}{*}{Material}& \multicolumn{2}{Sc||}{Clean qubit}&\multicolumn{2}{Sc||}{Total qubit}&\multicolumn{2}{Sc|}{Toffoli depth}\\
        \cline{2-7}
        & PP & AE &  PP & AE & PP & AE\\
		\hline
		\limno  &  \textbf{9808} & 24974 & \textbf{15136} & 24974 & $\mathbf{1.01\times 10^{15}}$ & 2.13$\times 10^{19}$  
        \\\hline
        \limnnio &  \textbf{11130} & 29784 & \textbf{18017} & 29784
 & \textbf{9.59$\mathbf{\times 10^{14}}$} & 3.59$\times 10^{19}$   
		\\\hline
		\limnfo &  \textbf{10260} & 26629 & \textbf{16121} & 26629 & \textbf{8.55$\mathbf{\times 10^{14}}$} & 1.16$\times 10^{19}$  
		\\\hline
        \dis &  \textbf{2650} & 4859 & \textbf{2847} & 4859 & \textbf{1.93$\times \mathbf{10^{13}}$} & 1.59$\times 10^{17}$ \\
        \hline
	\end{tabular}
	\caption{Depth and clean qubit cost estimation for $N^{\text{PP}}$ and $N^{\text{AE}}$ plane waves (\cref{table:materials}), when optimizing for Toffoli depth. This is in contrast to \cref{table:materials_NPP_vs_NAE_cost} where the optimized quantity was Toffoli cost. Better numbers are indicated in bold. The total qubit cost in \cref{table:materials_NPP_vs_NAE_cost} includes dirty qubits. The clean qubit cost is lower than the total qubit cost for the pseudopotentials, but they are  equal in the AE setting, as enforced by the definition of $n_{\text{dirty}}$ in this section.}
	\label{table:materials_NPP_vs_NAE_depth}
\end{table}
}

{
	\renewcommand{\arraystretch}{1.1}
\begin{table}[!h]
\centering
	\begin{tabular}{|Sc||Sc|Sc||Sc|Sc||Sc|Sc||Sc|Sc|}
 \hline
		\multirow{2}{*}{$N$}& \multicolumn{2}{Sc||}{Toffoli depth}& \multicolumn{2}{Sc||}{Clean qubit}&\multicolumn{2}{Sc||}{Total qubit}&\multicolumn{2}{Sc|}{Toffoli cost}\\
        \cline{2-9}
        & PP & AE &  PP & AE & PP & AE & PP & AE\\\hline
			$10^3$ & 5.85$\times 10^{12}$ & \textbf{4.86$\mathbf{\times 10^{12}}$} & \textbf{2278} & 2366 & \textbf{2278} & 2366 & 1.54$\times 10^{13}$ & \textbf{6.78$\mathbf{\times 10^{12}}$}
			\\\hline
			$10^4$ & 1.93$\times 10^{13}$ & \textbf{1.46$\mathbf{\times 10^{13}}$} & \textbf{2650} & 2867 & \textbf{2847} & 2867 & 6.38$\times 10^{13}$ & \textbf{2.94$\mathbf{\times 10^{13}}$}
			\\\hline
			$10^5$ & 9.09$\times 10^{13}$ & \textbf{8.07$\mathbf{\times 10^{13}}$} &  \textbf{2938} & 3365 & \textbf{3362} & 3365  & 2.68$\times 10^{14}$ & \textbf{1.44$\mathbf{\times 10^{14}}$}\\\hline
		\end{tabular}
		\caption{Resource estimation for dilithium iron silicate (\dis). This material resource estimation was studied earlier in \cite{batterypaper}. Better numbers are indicated in bold. The algorithm is using almost the entire dirty qubit capacity ($n_{\text{dirty}}$), as the total qubit count of both algorithms are almost equal.}
		\label{table:mat_dis}
	\end{table}
}

{
	\renewcommand{\arraystretch}{1.1}
\begin{table}[!h]
\centering
	\begin{tabular}{|Sc||Sc|Sc||Sc|Sc||Sc|Sc||Sc|Sc|}
 \hline
		\multirow{2}{*}{$N$}& \multicolumn{2}{Sc||}{Toffoli depth}& \multicolumn{2}{Sc||}{Clean qubit}&\multicolumn{2}{Sc||}{Total qubit}&\multicolumn{2}{Sc|}{Toffoli cost}\\
        \cline{2-9}
        & PP & AE &  PP & AE & PP & AE & PP & AE\\\hline
			$10^3$ & \textbf{9.08$\mathbf{\times 10^{13}}$} & 3.03$\times 10^{14}$ & \textbf{7602} & 10906 & \textbf{7602} & 10906 & \textbf{2.16$\mathbf{\times 10^{14}}$} & 3.58$\times 10^{14}$
			\\\hline
			$10^4$ & \textbf{3.09$\mathbf{\times 10^{14}}$} & 8.21$\times 10^{14}$ & \textbf{8937} & 13525 & \textbf{13524} & 13525 & \textbf{9.64$\mathbf{\times 10^{14}}$} &       1.18$\times 10^{15}$
			\\\hline
			$10^5$ & \textbf{8.55$\mathbf{\times 10^{14}}$} & 2.17$\times 10^{15}$ &  \textbf{10260} & 16147 & \textbf{16121} & 16147  & \textbf{3.87$\mathbf{\times 10^{15}}$} &      4.41$\times 10^{15}$\\\hline
		\end{tabular}
		\caption{Resource estimation for lithium manganese oxyfluoride (\limnfo). Better numbers are indicated in bold. Depth circuit optimization consumes almost the entire $n_{\text{dirty}}$ available for $N=10^4,10^5$, meaning it is determined by the dirty qubit constraint, instead of the Toffoli parallelization limits. For $N=10^3$, the optimization is fully achieved, i.e. $n_{\text{dirty}},n_{\text{tof}}$ are both large enough that we obtain the minimum possible depth for the optimized QROM circuit depths.}
		\label{table:mat_limnfo}
	\end{table}
}

{
	\renewcommand{\arraystretch}{1.1}
	\begin{table}[!h]
 \centering
	\begin{tabular}{|Sc||Sc|Sc||Sc|Sc||Sc|Sc||Sc|Sc|}
 \hline
		\multirow{2}{*}{$N$}& \multicolumn{2}{Sc||}{Toffoli depth}& \multicolumn{2}{Sc||}{Clean qubit}&\multicolumn{2}{Sc||}{Total qubit}&\multicolumn{2}{Sc|}{Toffoli cost}\\
        \cline{2-9}
        & PP & AE &  PP & AE & PP & AE & PP & AE\\\hline
			$10^3$ & \textbf{1.13$\mathbf{\times 10^{14}}$} &  2.75$\times 10^{14}$ & \textbf{8244} & 12171 & \textbf{8244} & 12171 & \textbf{2.38$\mathbf{\times 10^{14}}$} &       3.45$\times 10^{14}$
			\\\hline
			$10^4$ & \textbf{3.30$\mathbf{\times 10^{14}}$} & 7.18$\times 10^{14}$ & \textbf{9699} & 15105 & \textbf{15087} & 15105 & \textbf{1.08$\mathbf{\times 10^{15}}$} &       1.36$\times 10^{15}$
			\\\hline
			$10^5$ & \textbf{9.59$\mathbf{\times 10^{14}}$} & 2.01$\times 10^{15}$  &  \textbf{11130} & 18045 & \textbf{18017} & 18045  & \textbf{4.84$\mathbf{\times 10^{15}}$} &       5.59$\times 10^{15}$\\\hline
		\end{tabular}
		\caption{Resource estimation for LLNMO (\limnnio). Better numbers are indicated in bold. The comparison between clean and total qubit count shows a situation somewhat similar to \cref{table:mat_limnfo}.}
		\label{table:mat_limnnio}
	\end{table}
}

{
	\renewcommand{\arraystretch}{1.1}
	\begin{table}[!h]
	\centering
			\begin{tabular}{|Sc||Sc|Sc||Sc|Sc||Sc|Sc||Sc|Sc|}
 \hline
		\multirow{2}{*}{$N$}& \multicolumn{2}{Sc||}{Toffoli depth}& \multicolumn{2}{Sc||}{Clean qubit}&\multicolumn{2}{Sc||}{Total qubit}&\multicolumn{2}{Sc|}{Toffoli cost}\\
        \cline{2-9}
        & PP & AE &  PP & AE & PP & AE & PP & AE\\\hline
			$10^3$ & \textbf{1.09$\mathbf{\times 10^{14}}$} & 2.10$\times 10^{14}$ & \textbf{7273} & 10248 & \textbf{7273} & 10248 & \textbf{2.37$\mathbf{\times 10^{14}}$} & 2.62$\times 10^{14}$
			\\\hline
			$10^4$ & \textbf{3.37$\mathbf{\times 10^{14}}$} & 5.80$\times 10^{14}$ & \textbf{8551} & 12702 & \textbf{12696} & 12702 & 1.12$\times 10^{15}$ & \textbf{9.82$\mathbf{\times 10^{14}}$}
			\\\hline
			$10^5$ & \textbf{1.01$\mathbf{\times 10^{15}}$} & 1.63$\times 10^{15}$  &  \textbf{9808} & 15156 & \textbf{15136} & 15156  & 5.00$\times 10^{15}$ & \textbf{4.12$\mathbf{\times 10^{15}}$}\\\hline
		\end{tabular}
		\caption{Resource estimation for lithium manganese oxide (\limno). Better numbers are indicated in bold. The situation is similar to \cref{table:mat_limnfo}.}
		\label{table:mat_limno}
	\end{table}
}

\subsection{Results}\label{ssec:res_est_results_app}
In all cases, taking into account the better accuracy, we can conclude that the PP-based is the better alternative. However, this difference is most clear in \cref{table:materials_NPP_vs_NAE_depth} where we choose the right number of plane waves $N^{\text{PP}}$ and $N^{\text{AE}}$ to hit chemical accuracy with both PP and AE calculations. There are multiple reasons why the depth and cost is competitive, even for the same number of plane waves:
\begin{itemize}
    \item The number of electrons is about half of the all-electron case.
    \item In our simulations, we have observed how $\lambda_{loc}+\lambda_{NL}$ compares to $\lambda_U$ in the AE case, and their difference is multiple times more than what could be justified by the previous item alone. Indeed, $\lambda_{loc}$ (\cref{eq:lambda_loc}) involves the exponentially decaying factor $e^{-G_\nu^2r_{loc}^2/2}$, while $\lambda_{NL}$ has similar factors (\cref{eq:lambda_NL}), aided by the fact that the number of unitaries involved in the LCU for $U_{NL}$ is smaller compared to other operators ($\sim \eta L$).
    \item Even though the total PREP and SEL cost for qubitizing the pseudopotential Hamiltonian are larger than those in the AE case (2-4 times), it was important that our algorithm manages to control the qubitization cost, despite the more complicated expressions defining the pseudopotential matrix entries. This is accomplished thanks to our LCU and subroutine choices like QROM.
\end{itemize}

\section{List of notations}\label{appsec:list_notations}
\begin{itemize}
    \item $\bm{p}$, $\bm{q}$, $\bm{\nu}$ -- Plane wave indices as integer vectors. Normal font version is used for indexing other variables.
    \item $\omega$ -- Index with three values $\omega= 1,2,3$
    \item $\bm{a}_\omega$ -- Primitive lattice vectors
    \item $\bb_\omega$ -- Reciprocal lattice vectors
    \item $b_{min}$ -- The smallest singular value ($\sigma_3(B)$) of the lattice matrix $B := (\bb_1,\bb_2,\bb_3)$
    \item $\mathcal{G}$  -- Set of reciprocal lattice vectors $\sum_\omega \bb_\omega p_\omega$,~\eqref{eq:pw_p}
    \item $\mathcal{G}_0$ -- $\mathcal{G}_0 = \mathcal{G} \setminus (0,0,0)$ 
    \item $\bm{G}_p$, $\bm{G}_q$, $\bm{G}_\nu $, etc.  -- Reciprocal lattice vectors corresponding to plane wave indices~\eqref{eq:pw_g}
    \item ${G}_p$ -- Length of $\bm{G}_p$, equal to $\|\bm{G}_p\|$
    \item $\hat{\bm{G}}_p$ -- angular component of $\bm{G}_p$.
    \item $\bm{R}$ -- Nuclear coordinates, also denoting the register $\bR$ storing those coordinates
    \item $N$ -- Number of plane waves
    \item $n_p$ -- The number of qubits used for each coordinate in the plane wave register, $n_p = \lceil \log_2(N^{1/3}+1) \rceil$
    \item $n_{M_{loc}}$ -- Defined as the number of qubits used by the QROM rotations for $\PREP_{loc}$, equal to $\lceil \log(M_{loc}) \rceil$
    \item $n_{M_V}$ -- Defined as the precision used in the inequality test for $\PREP_V$, equal to $\lceil \log(M_{V}) \rceil$
    \item $n_{NL}$ -- Number of qubits used for the QROM rotations for preparing register $NL$'s, part of $\PREP_{NL}$
    \item $n_B$ -- Number of qubits used for the QROM rotations for preparing register $f$, part of $\PREP_T$
    \item $n_{\rchi}$ -- Number of qubits used for the QROM rotations for preparing register $\rchi$
    \item $n_R$ -- Number of qubits used to represent nuclei coordinates
    \item $n_{\Psi}$ -- Number of qubits used for the QROM rotations for preparing the Gaussian states
    \item $n_t$ -- Defined as $\lceil \log(N_t) \rceil$, where $N_t$ is the number of nuclei with atomic type $t$ in the cell
    \item $n_{\text{dirty}}$ -- Number of dirty qubits available for QROM 
    \item $n_{\text{tof}}$ -- The maximum allowed number of simultaneous Toffoli applications
    \item $\tau$ -- Defined as $\lceil \log(\mcT) \rceil$, where $\mcT$ is the number of atomic species in the cell
    \item $\eta$ -- Number of electrons
    \item $v_2(\cdot)$ -- The highest power of two dividing an integer
    \item $n_\eta$ -- Defined as $\lceil \log(\eta) \rceil$
    \item $b_r$ -- The number of bits used in rotating an ancilla to prepare a uniform superposition over $n_r$ bits with success probability $\Ps(n_r,b_r)$
    \item $a_V$ -- Number of amplitude amplifications associated to $\PREP_V$
    \item  $Z_I, Z_{\text{ion}_I}$ -- Nuclear charge, ionic charge
    \item $\Omega$ -- Cell volume
    \item $\lambda$ -- The LCU induced one-norm of the Hamiltonian
    \item $\lambda_X$ -- Usually the LCU induced one-norm of the term $X$ in the Hamiltonian
    \item $\error$ -- The total error in the energy estimation, usually set to $0.043\text{eV}$ corresponding to chemical accuracy
    \item $\error_{\operatorname{QPE}}$ -- The Quantum Phase Estimation (QPE) error
    \item $\error_{\square}$ -- Error $\error$ associated to finite size register or process $\square$, defined in \cref{sec:overview_of_errors}
    \item $\varphi_p$, $\phi_{lm}$ -- Plane wave, pseudo wave function
    \item $\beta_i$ -- Projector from the non-local potential
    \item $B_{ij}$ -- Projector matrix coefficients associated to the pseudopotential
    \item $\beta_X$ -- The QROM space-time trade-off parameter associated to $X$, where $X=loc,NL,\Psi$, corresponding to $\PREP_{loc},\PREP_{NL},\SEL_{NL}$
    \item $T$, $U$, $u^{\text{loc}}$, $u^{\text{NL}}$ , $V$  -- Kinetic, external potential, local potential, non-local potential, and electron-electron interaction Hamiltonian terms
    \item $\Psi_{I,\sigma}$ -- Gaussian states defined in~\eqref{eq:Psi_I,0}, \eqref{eq:Psi_I,1}, \eqref{eq:Psi_I,2,0}, \eqref{eq:Psi_I,2,w}
    \item $\gamma_I$ -- Function in the local term, defined in \eqref{eq:dfn_gamma_I} 
    \item $I$ -- Nuclei index
    \item $L$ -- The number of nuclei
    \item $l,m$ -- Angular momentum indices
    \item  $r_{loc}$, $r_i$, $\alpha$, $C_i$, $A_i$ -- HGH pseudopotential parameters that depend on the atom
    \item $p_{\text{th}}$ -- Chosen threshold for the amplitude amplification for success probability $P_{\nu,V}$ amplified to $P_{\nu,V}^{amp}$, corresponding to $\PREP_V$
\end{itemize}
\end{document}